\documentclass[a4paper, UKenglish, cleveref, thm-restate, numberwithinsect]{lipics-arxiv}

\hideLIPIcs

\bibliographystyle{plainurl}

\usepackage{hyperref}
\usepackage{mathtools}
\usepackage[dvipsnames]{xcolor}
\usepackage{xspace}
\usepackage{thm-restate}
\usepackage{tikz}
\usetikzlibrary{calc, automata, positioning, arrows, decorations.pathreplacing, petri, topaths}

\newcommand{\ie}{i.e.\@\xspace}

\newcommand{\rawn}[1]{acyclic workflow net#1 with resets}

\newcommand{\RAWN}[1]{Acyclic Workflow Net#1 with Resets}
\newcommand{\rapn}[1]{acyclic Petri net#1 with resets}

\newcommand{\RAPN}[1]{Acyclic Petri Net#1 with Resets}
\newcommand{\zapn}[1]{acyclic Petri net#1 with zero tests}

\newcommand{\class}[1]{\textup{\textsf{#1}}}
\newcommand{\set}[1]{\{#1\}}

\newcommand{\size}[1]{|#1|}

\renewcommand{\vec}[1]{{\bf #1}}
\newcommand{\run}[3]{#1\overset{#2}{\rightarrow}#3}
\newcommand{\pre}[1]{{^\bullet #1}}
\newcommand{\post}[1]{{#1^{\bullet}}}
\newcommand{\reset}[1]{{{#1}^{\circ}}}

\newcommand{\I}{\mathsf{i}}
\newcommand{\Fin}{\mathsf{f}}
\newcommand{\N}{\mathbb{N}}
\newcommand{\Z}{\mathbb{Z}}
\newcommand{\Nw}{\mathbb{N}_{\omega}}
\newcommand{\Zw}{\mathbb{Z}_{\omega}}

\newcommand{\Bb}{\mathcal{B}}
 
\newcommand{\Nn}{\mathcal{N}}
\newcommand{\Pp}{\mathcal{P}}
\newcommand{\Rr}{\mathcal{R}}
\newcommand{\Uu}{\mathcal{U}}
\newcommand{\Ww}{\mathcal{W}}
\newcommand{\Zz}{\mathcal{Z}}
\newcommand{\m}{\vec{m}}

\renewcommand{\phi}{\varphi}

\renewcommand{\sim}{t_\textup{sim}}
\newcommand{\con}{t_\textup{con}}
\newcommand{\pro}{t_\textup{pro}}
\newcommand{\problemx}[3]{
\par\noindent\underline{#1}\par\nobreak\vskip.2\baselineskip
\begingroup\clubpenalty10000\widowpenalty10000
\setbox0\hbox{\bf INPUT:\ }\setbox1\hbox{\bf QUESTION:\ }
\dimen0=\wd0\ifnum\wd1>\dimen0\dimen0=\wd1\fi
\vskip-\parskip\noindent
\hbox to\dimen0{\box0\hfil}\hangindent\dimen0\hangafter1\ignorespaces#2\par
\vskip-\parskip\noindent
\hbox to\dimen0{\box1\hfil}\hangindent\dimen0\hangafter1\ignorespaces#3\par
\endgroup}
\DeclarePairedDelimiter\abs{\lvert}{\rvert}%
\DeclarePairedDelimiter\norm{\lVert}{\rVert}%
\makeatletter
\let\oldabs\abs
\def\abs{\@ifstar{\oldabs}{\oldabs*}}
\let\oldnorm\norm
\def\norm{\@ifstar{\oldnorm}{\oldnorm*}}
\makeatother

\title{Acyclic Petri and Workflow Nets with Resets}

\author{Dmitry Chistikov}{Centre for Discrete Mathematics and its Applications (DIMAP) \&\\ Department of Computer Science, University of Warwick, Coventry, UK}{d.chistikov@warwick.ac.uk}{0000-0001-9055-918X}{Supported in part by the Engineering and Physical Sciences Research Council [EP/X03027X/1].}

\author{Wojciech Czerwi\'{n}ski}{University of Warsaw, Poland}{wczerwin@mimuw.edu.pl}{0000-0002-6169-868X}{Supported by the ERC grant INFSYS, agreement no. 950398.}

\author{Piotr Hofman}{University of Warsaw, Poland}{piotr.hofman@uw.edu.pl}{0000-0001-9866-3723}{Supported by the ERC grant INFSYS, agreement no. 950398.}

\author{Filip Mazowiecki}{University of Warsaw, Poland}{f.mazowiecki@mimuw.edu.pl}{}{Supported by the ERC grant INFSYS, agreement no. 950398.}

\author{Henry Sinclair-Banks}{Centre for Discrete Mathematics and its Applications (DIMAP) \&\\ Department of Computer Science, University of Warwick, Coventry, UK \and \url{http://henry.sinclair-banks.com}}{h.sinclair-banks@warwick.ac.uk}{https://orcid.org/0000-0003-1653-4069}{Supported by EPSRC Standard Research Studentship (DTP), grant number EP/T5179X/1.}

\authorrunning{D. Chistikov, W. Czerwi\'{n}ski, P. Hofman, F. Mazowiecki, and H. Sinclair-Banks}


\begin{CCSXML}
    <ccs2012>
    <concept>
    <concept_id>10003752.10003753</concept_id>
    <concept_desc>Theory of computation~Models of computation</concept_desc>
    <concept_significance>500</concept_significance>
    </concept>
    </ccs2012>
\end{CCSXML}

\ccsdesc[500]{Theory of computation~Models of computation}

\keywords{
    Petri nets, 
    Workflow Nets, 
    Resets, 
    Acyclic, 
    Reachability, 
    Coverability
}

\nolinenumbers


\acknowledgements{
    We would like to thank Alain Finkel for pointing out an error in the introduction.
    We would also like to thank our anonymous reviewers for their detailed comments.
}

\EventEditors{John Q. Open and Joan R. Access}
\EventNoEds{2}
\EventLongTitle{42nd Conference on Very Important Topics (CVIT 2016)}
\EventShortTitle{CVIT 2016}
\EventAcronym{CVIT}
\EventYear{2016}
\EventDate{December 24--27, 2016}
\EventLocation{Little Whinging, United Kingdom}
\EventLogo{}
\SeriesVolume{42}
\ArticleNo{23}

\begin{document}

\maketitle

\begin{abstract}
    In this paper we propose two new subclasses of Petri nets with resets,
for which the reachability and coverability problems become tractable.
Namely, we add an acyclicity condition that only applies to the consumptions and productions, not the resets.
The first class is acyclic Petri nets with resets, and we show that coverability is \class{PSPACE}-complete for them.
This contrasts the known \class{Ackermann}-hardness for coverability in (not necessarily acyclic) Petri nets with resets.
We prove that the reachability problem remains undecidable for acyclic Petri nets with resets.
The second class concerns workflow nets, a practically motivated and natural subclass of Petri nets.
Here, we show that both coverability and reachability in acyclic workflow nets with resets are \class{PSPACE}-complete.
Without the acyclicity condition, reachability and coverability in workflow nets with resets are known to be equally hard as for Petri nets with resets, that being \class{Ackermann}-hard and undecidable, respectively.
\end{abstract}

\section{Introduction}
\label{sec:introducion}

Petri nets~\cite{Petri75} are among the most fundamental formalisms for modelling processes.
They are defined by a finite set of places and a finite set of transitions. 
A configuration of a Petri net, known as a marking, is a vector of dimension equal to the number of places, with entries equal to the number of tokens in particular places.
Transitions change markings by consuming and producing tokens in places. 
For an example, see~\cref{fig:petri-net}.

\begin{figure}
	\centering
	\begin{tikzpicture}[node distance=1.3cm, >=stealth', bend angle=45, auto]
	    \node [place,tokens=2] (I) [label=below:$\I$] {};
	    \node [place] (p1) [above right = 1cm and 3cm of I,label=below:$p_1$] {};
	    \node [place] (p2) [below right = 1cm and 3cm of I,label=below:$p_2$] {};
	    \node [place] (F) [below right = 1cm and 3cm of p1,label=below:$\Fin$] {};
	    
	    \node [transition, label=below:$t_1$] (t1) at ($(I) + (2,0)$) {};
	    \node [transition, label=below:$t_2$] (t2) at ($(F) + (-2,0)$) {} ;
		\node [transition, label=below:$t_3$,blue] (t3) at ($(p1 -| t1)$) {} ;
	    \node [transition, label=below:$t_4$,blue] (t4) at ($(p1 -| t2)$) {} ;
	    
	    \path[->]
	    (I) edge[] (t1)
	    (t1) edge[] (p1)
	    (t1) edge[] (p2)
	    (t2) edge[] (F)
	    (p1) edge[] node[below] {$2$} (t2)
	    (p2) edge[] node[above] {$2$} (t2)
	    (t3) edge[blue] (p1)
	    (p1) edge[blue] (t4)
	    ;
\end{tikzpicture}
	\caption{An example Petri net with four places $\I$, $p_1$, $p_2$, $\Fin$ and four transitions $t_1$, $t_2$, $t_3$, $t_4$. 
	Arcs pointing to transitions consume tokens from the respective places, and arcs pointing away from transitions produce tokens in the respective places.
	Arcs without labels denote single token consumption or production.
	Other labels, such as `$2$' in this example, are explicit. 
	Initially, the marking can be represented by the vector $(2,0,0,0)$ where there are two tokens in $\I$, and no tokens in the other three places. 
	At this marking, the transition $t_2$ cannot be fired as it needs to consume $1$~token from each of $p_1$ and $p_2$. 
	One can see that by firing a sequence of transitions $(t_1, t_1, t_2)$ we reach the marking $(0,0,0,1)$. 
	The transitions $t_3$ and $t_4$ are highlighted (in blue) because $t_3$ does not consume any tokens and $t_4$ does not produce any tokens.}
	\label{fig:petri-net}
\end{figure}

The central decision problem for Petri nets is the \emph{reachability problem}. 
Given a Petri net, an initial marking, and a target marking, reachability asks whether there is a run between the two markings.
Reachability in Petri nets is a decision problem with non-primitive recursive complexity~\cite{Leroux21,LerouxS19}, recently shown to be \class{Ackermann}-complete~\cite{CzerwinskiLLLM21,CzerwinskiO21}. 
The \emph{coverability problem}, a relaxation of the reachability problem, asks whether there is a run that reaches a marking at least as great as the 
target marking.
Provably simpler than reachability; coverability is known to be \class{EXPSPACE}-complete~\cite{Lipton76,DBLP:journals/tcs/Rackoff78,KunnemannMSSW23}. 
In the example (\cref{fig:petri-net}), from the initial marking $(2,0,0,0)$, one can reach $(0,0,0,1)$, but one cannot reach $(0,0,1,0)$. 
However, $(0,0,1,0)$ can be covered since $(1,1,1,0)$ can be reached.

In this paper, we consider Petri nets that are equipped with \emph{resets} but are restricted to be \emph{acyclic}.
Resets are an extra feature of transitions that allow transitions to empty a subset of places.
In modelling processes, resets offer the ability to express cancellation, which is important in many applications~\cite[Table~1]{AalstHHSVVW09}. Unfortunately, in general without any acyclicity restriction, for Petri nets with resets, reachability is undecidable~\cite{ArakiK76, DufourdFS98} and coverability is \class{Ackermann}-complete~\cite{Schnoebelen10,FigueiraFSS11}.
Therefore, in order to observe the decidability of the reachability problem one needs to focus on a subclass of Petri nets with resets.
A~natural restriction is \emph{acyclicity} that applies to the graph representation of the Petri net. 
For example, observe that the Petri net in \cref{fig:petri-net} is acyclic since the arcs do not induce any cycles between the places and transitions.
Both reachability and coverability in acyclic Petri nets are \class{NP}-complete~\cite{LiSGGZ11}.
The \class{NP} upper bound is straightforward: it suffices to guess how many times each transition is fired in the run.
It is always possible to transform this guess into an actual run by sorting the transitions in a topological order induced by the acyclic structure. 
As far as we know, acyclic Petri nets with resets have not been studied previously; they are a natural candidate for an expressive yet tractable class of Petri nets. 
We remark that the \class{NP} upper bound argument for reachability does not translate to the model with resets: changing the order of the resets does not preserve the reached marking.

We study Petri nets and their popular subclass \emph{workflow nets}~\cite{Aalst98}. 
Workflow nets are Petri nets that have two special places, an input place $\I$ and an output place $\Fin$. 
The places and transitions are also restricted so that no tokens can be produced in $\I$, no tokens can be consumed from $\Fin$, and all places and transitions lie on paths from $\I$ to $\Fin$. 
The Petri net in \cref{fig:petri-net} without the (blue) highlighted transitions, $t_3$ and $t_4$, is a workflow net.
Many practical instances of Petri nets are workflow nets~\cite{FahlandFJKLVW09}, which forbid unnatural behaviour.
Workflow nets are well studied~\cite{AalstHHSVVW11}, also with resets~\cite{AalstHHSVVW09}.
The complexities of the reachability and coverability problems for workflow nets are the same as for Petri nets. 
Indeed, the special places $\I$ and $\Fin$ produce and consume the initial and target markings, respectively. 
By introducing additional `artificial' places, it is not challenging to ensure that all places and transitions are on some path from $\I$ to $\Fin$.
However, the last construction does not preserve acyclicity.
It turns out that acyclic Petri nets are more involved than acyclic workflow nets. 
For example, while the set of markings reachable from the initial marking is always finite for acyclic workflow nets~\cite{TipleaBC15}, this is not true for acyclic Petri nets. 
In \cref{fig:petri-net}, place $p_1$ can contain arbitrarily many tokens by firing $t_3$. 
In contrast, the workflow net obtained by removing transitions $t_3$ and $t_4$ will never contain more than $2$ tokens in any place.

\subparagraph{Our results.}
We determine the complexity of reachability and coverability in both \rapn{s} and \rawn{s}.
We prove that coverability in \rapn{s} is \class{PSPACE}-complete.
Further, we show that both reachability and coverability in \rawn{s} are also \class{PSPACE}-complete.
On the other hand, we prove that, rather surprisingly, reachability in \rapn{s} is undecidable.
A summary of our results is in~\cref{fig:complexity-table}.

\begin{figure}
	\centering
	\begin{tabular}{c|c|c}
		& Coverability & Reachability \\ \hline
		\begin{tabular}{c}
			Acyclic workflow \\ nets with resets
		\end{tabular}
		&
		\begin{tabular}{c}
			\class{PSPACE}-complete \\ (\cref{sec:workflow-coverability})
		\end{tabular}
		&
		\begin{tabular}{c}
			\class{PSPACE}-complete \\ (\cref{sec:workflow-reachability})
		\end{tabular}
		\\ \hline
		\begin{tabular}{c}
			Acyclic Petri nets \\ with resets
		\end{tabular}
		&
		\begin{tabular}{c}
			\class{PSPACE}-complete \\ (\cref{sec:petri-coverability})
		\end{tabular}
		&
		\begin{tabular}{c}
		 	Undecidable \\ (\cref{sec:petri-reachability})
		 \end{tabular}
	\end{tabular}
	\caption{A summary of our results.
	\cref{sec:workflow-coverability} contains the \class{PSPACE} lower bound and~\cref{sec:workflow-reachability} and~\cref{sec:petri-coverability} contain the \class{PSPACE} upper bounds.}
	\label{fig:complexity-table}
\end{figure}

For reachability in \rawn{s}, we argue that a place cannot contain more than an exponential number of tokens with respect to the size of the input to the problem. 
The proof is comparable to the proof of the \class{NP} upper bound for acyclic Petri nets: one can reorder the firing sequence of transitions according to a topological order.

\begin{restatable*}{theorem}{workflowreach}\label{thm:workflow-reachability}
Reachability in \rawn{s} is in \class{PSPACE}.
\end{restatable*}

For coverability in \rapn{s}, we show that there are two cases for the number of tokens that a place may contain.
A place may either take at most an exponential number $M$ of tokens, or it can take an arbitrarily large number of tokens, represented by $\omega$.
By abstracting the space of markings to a subset of $\set{ 0, 1, \ldots, M-1, M, \omega}^n$, we can search for a coverability run in polynomial space.

\begin{restatable*}{theorem}{petricover}\label{thm:petri-coverability}
	Coverability in \rapn{s} is in \class{PSPACE}.
\end{restatable*}

We complement these upper bounds with matching lower bounds.
We show that coverability in \rawn{s} requires polynomial space via a polynomial time reduction from QSAT.
In the reduction, we construct an \rawn{} that simulates assignments to the quantified variables (using places whose non-emptiness corresponds to the satisfaction of a literal) and checks that the formula evaluates to true for each assignment (using places whose non-emptiness corresponds to the satisfaction of a clause).

\begin{restatable*}{theorem}{workflowcover}\label{thm:workflow-coverability}
	Coverability in \rawn{s} is \class{PSPACE}-hard.
\end{restatable*}

These three results allow us to conclude that coverability in \rapn{s} and both coverability and reachability in \rawn{s} are \class{PSPACE}-complete problems.
We contrast this with the undecidability of reachability in \rapn{}.
Our proof is a reduction from reachability in general Petri nets with resets, which is known to be undecidable~\cite{ArakiK76}.
The core of our proof is simulating transitions whose arcs are not acyclic.

\begin{restatable*}{theorem}{petrireach}\label{thm:petri-reachability}
	Reachability in \rapn{s} is undecidable.
\end{restatable*}

\subparagraph*{Related work.}
For workflow nets, a central decision problem is the \emph{soundness} problem. 
An instance of soundness usually fixes the initial and target markings to only have one token in $\I$ and $\Fin$, respectively. 
The soundness problem asks whether every marking reachable from the initial marking can then go on to reach the target marking. 
For workflow nets, it is known that soundness reduces to reachability~\cite{Aalst98}, and an optimal algorithm for soundness (which 
does not rely on reachability) was only recently presented~\cite{BlondinMO22LICS}. 
Variants of reachability and coverability have also been used as relaxations to implement soundness~\cite{TipleaM05,BlondinMO22CAV}. 
Thus, we expect this work to provide an initial background to study soundness in \rawn{s} in the future.

In order to obtain decidability for the reachability problem on Petri nets with resets we both restrict the class to workflow nets and enforce acyclicity. 
However, instead of relaxing the class of Petri nets, one could allow the places to contain a negative number of tokens. 
Reachability in this relaxed model is called \emph{integer reachability} and is known to be in \class{NP} for Petri nets\footnote{Integer reachability is \class{NP}-complete for vector addition systems with states~\cite{HaaseKOW09}.}, even with resets~\cite{ChistikovHH18}.

There are many extensions of Petri nets other than adding resets.
We would like to highlight one extension in particular: Petri nets with transfers.
Similar to resets, transfers move all the tokens from one place to another (instead of just removing them)~\cite{ArakiK76}. 
Transfers allow the modelling of some properties of C programs~\cite{0001KW14}.
For Petri nets with transfers, reachability in undecidable~\cite{ArakiK76}, but coverability is decidable~\cite{DufourdFS98}.
More generally, Petri nets with transfers and Petri nets with resets are examples of affine Petri nets~\cite{Valk78,FinkelMP04}. 
The previously mentioned integer reachability problem has been studied for this broad class of Petri nets~\cite{BlondinHMR21,BlondinR21}. 
Consequently, integer reachability in Petri nets with transfers is in \class{PSPACE}~\cite{BlondinHMR21}.
As far as we know, reachability and coverability have not been considered for acyclic Petri nets with transfers or acyclic affine Petri nets,
which we leave as possible future work.

\section{Preliminaries}
\label{sec:preliminaries}

Let $\Z$ be the set of \emph{integers} and $\N$ the set of \emph{natural numbers} (nonnegative integers).
Let $\omega$ stand for the first infinite cardinal, \ie $\omega = \size{\N}$.
Symbols $\Zw$ and $\Nw$ denote the set of natural numbers and the set of integer numbers, each extended with $\omega$, respectively.
As usual, $\size{S}$ denotes the number of elements of a set $S$.
We denote intervals by $[x, y] = \set{z \in \Z \mid x\le z \le y}$.

We use boldface to denote vectors, and we specify a vector by listing its coordinates, which are indexed using square brackets, in a tuple, so $\vec{v} = (\vec v[1], \ldots, \vec v[k])$.
For two vectors $\vec{v}$ and $\vec{w}$ of equal dimension, we write $\vec{v} \geq \vec{w}$ if for every coordinate $s$ we have $\vec{v}[s] \geq \vec{w}[s]$.
If $\vec{v} \geq \vec{w}$ and $\vec{v} \neq \vec{w}$, then $\vec{v} > \vec{w}$; this partial order is called the pointwise order of vectors.
A~vector $\vec{v}$ is non-negative if $\vec{v} \geq (0, 0, \ldots, 0)$.
The norm $\norm{\cdot}$ of a $k$-dimensional vector $\vec{v}$ is the sum of absolute values of its coordinates that are not equal to $\omega$: $\norm{\vec{v}} = \sum_{\vec{v}[i] \in \N} \abs{\vec{v}[i]}$.
We overload notation by saying that the norm $\norm{\cdot}$ of a collection of vectors $V$ is the sum of the norms of vectors in $V$, $\norm{V} = \sum_{\vec{v} \in V} \norm{\vec{v}}$.

\subparagraph*{Petri nets.}
A \emph{Petri net} is a tuple $(P, T, F)$ consisting of a finite set of \emph{places} $P$, a finite set of \emph{transitions} $T$ (disjoint from $P$), and a function defining the \emph{arcs} $F \colon (P \times T) \cup (T \times P) \rightarrow \N$. 
There is an arc from $x$ to $y$ for $(x, y) \in (P \times T) \cup (T \times P)$ if and only if $F(x, y) > 0$. 
In diagrams, this arc is \emph{labelled} with the value $F(x, y)$. 
One can view Petri nets as labelled graphs where $P \cup T$ is the set of nodes, and arcs are edges, labelled according to $F$.
For example, in~\cref{fig:petri-net} for transition $t_1$ we have $F(\I,t_1) = F(t_1,p_1)  = F(t_1,p_2) = 1$ and all other values involving $t_1$ are $0$.
We can define a \emph{path} in a Petri net as a sequence of places and transitions connected by arcs.
A Petri net is \emph{acyclic} if the graph of places and transitions with arcs is acyclic.
The \emph{norm} of a Petri net $\Nn = (P, T, F)$ is $\norm{\Nn} = \abs{P} \cdot \abs{T} + \sum_{p \in P, t \in T} (F(t, p) + F(p, t))$.

\begin{definition}
\label{def:pnr}
    A \emph{Petri net with resets} is a tuple $(P, T, F, R)$, where $(P, T, F)$ is a Petri net and
    $R \colon T \rightarrow 2^P$ is a function defining \emph{reset edges}.
    There is reset edge between a transition $t \in T$ and a place $p \in P$ if and only if $p \in R(t)$.
    A Petri net with resets $(P, T, F, R)$ is an \emph{\rapn{}} if $(P, T, F)$ is acyclic according to the definition above.
\end{definition} 

Importantly, reset edges are \emph{not} subject to the acyclicity restriction. We discuss this in more detail below, after the formal definition of the semantics.

The norm of a Petri net with resets is $\norm{(P, T, F, R)} = \norm{(P, T, F)} + \sum_{t \in T} \abs{R(t)}$.

For a Petri net with resets $(P, T, F, R)$, the \emph{pre-vector} of a transition $t$ is $\pre{t} \colon P \rightarrow \N$, where $\pre{t}[p] = F(p, t)$, and its \emph{post-vector} is $\post{t} \colon P \rightarrow \N$, where $\post{t}[q] = F(t, q)$.
We use similar notation for the \emph{reset-operator} $\reset{t} \subseteq P$, namely $\reset{t} = R(t)$. 

Let us define the semantics of Petri nets (with resets).
The collection of \emph{markings} of a Petri net with resets $(P, T, F, R)$ is the set of all vectors in $\N^P$. 
Places are said to contain \emph{tokens}, a finite resource that can be \emph{consumed}, \emph{produced}, and \emph{reset} by transitions.
For a given marking $\vec{m}$, a place $p$ contains tokens if $\vec{m}[p] > 0$, otherwise it is \emph{empty}.
A transition $t$ can be \emph{fired} at a marking $\vec{m}$ if and only if $\vec{m} \geq \pre{t}$. 
The firing proceeds through the following phases (see \cref{fig:semantics} for an example):
\begin{itemize}
\item first, tokens are \emph{consumed}, which results in $\vec{m}' = \vec{m} - \pre{t}$; 
\item then, places are \emph{reset}, which results in $\vec{m}''$ where $\vec{m}''[p] = 0$ for all $p \in \reset{t}$ and $\vec{m}''[p] = \vec{m}'[p]$ for all $p \not\in \reset{t}$; 
\item finally, tokens are \emph{produced}, which results in the new marking $\vec{n} = \vec{m}'' + \post{t}$.
\end{itemize}
We write $\run{\vec{m}}{t}{\vec{n}}$.

\begin{figure}[t]
    \centering
    \begin{tikzpicture}
	\node[place, tokens = 6, label=left:{$a$}] (p1) at (0, 1) {};
	\node[place, tokens = 2, label=left:{$b$}] (q1) at (0, -1) {};
	\node[transition, inner sep = 2pt] (t1) at (1, 0) {$t$};
	\node[place, tokens = 1, label=right:{$c$}] (r1) at (2.5, 0) {};

	\path[->] (p1) edge[] node[above] {\footnotesize 3} (t1);
	\path[->] (q1) edge[out = 60, in = 210] node[above] {\footnotesize 2} (t1);
	\path[-] (t1) edge[out = 240, in = 30, red] node[below] {\footnotesize \textcolor{red}{\emph{r}}} (q1);
	\path[->] (t1) edge[out = 20, in = 165] node[above] {\footnotesize 4} (r1);
	\path[-] (t1) edge[out = 340, in = 195, red] node[below] {\footnotesize \textcolor{red}{\emph{r}}} (r1);

	\node[place, tokens = 3, label=left:{$a$}] (p2) at (6, 1) {};
	\node[place, label=left:{$b$}] (q2) at (6, -1) {};
	\node[transition, inner sep = 2pt] (t2) at (7, 0) {$t$};
	\node[place, tokens = 4, label=right:{$c$}] (r2) at (8.5, 0) {};

	\path[->] (p2) edge[] node[above] {\footnotesize 3} (t2);
	\path[->] (q2) edge[out = 60, in = 210] node[above] {\footnotesize 2} (t2);
	\path[-] (t2) edge[out = 240, in = 30, red] node[below] {\footnotesize \textcolor{red}{\emph{r}}} (q2);
	\path[->] (t2) edge[out = 20, in = 165] node[above] {\footnotesize 4} (r2);
	\path[-] (t2) edge[out = 340, in = 195, red] node[below] {\footnotesize \textcolor{red}{\emph{r}}} (r2);
\end{tikzpicture}
    \caption{
        Two markings of an \rapn{} with three places $a$, $b$, and $c$.
        Left: upon firing~$t$ from marking $(6,2,1)$ shown, 3 tokens are consumed from $a$ and 2 tokens are consumed from $b$.
        Then, $b$ and $c$ are reset to 0 tokens, from 0 tokens and 1 token, respectively.
        Finally, 4 tokens are produced in $c$; this is the only number of tokens $c$ can contain after $t$ is fired.
        Right: marking $(3, 0, 4)$ is reached as the result of firing~$t$.
    }
    \label{fig:semantics}
\end{figure}

\begin{note*}
    In the semantics of Petri nets with resets, whether or not a place under reset contains tokens does not affect whether the transition can be fired.
    This makes the effect of resets distinct from the usual consumption of tokens by a transition.
    Resets do not produce any tokens either.
    Thus, resets are considered `undirected', and we refer to reset \emph{edges} (rather than arcs).
    For the sake of clarity, in all drawings of Petri nets with resets, the reset edges are undirected and will be coloured red to distinguish them further.
\end{note*}

A \emph{firing sequence} $\sigma = (t_1, t_2, \ldots, t_n)$ is a sequence of transitions. 
It forms a \emph{run} from a marking $\vec m_0 $ to a marking $\vec m_n $ if
$\vec{m}_0 \overset{t_1}{\rightarrow} \vec{m}_1 \overset{t_2}{\rightarrow} \vec{m}_2 \overset{t_3}{\rightarrow} \cdots \overset{t_n}{\rightarrow} \vec{m}_n$ for some intermediate markings $\vec{m}_1, \ldots, \vec{m}_{n-1}$. 
The run is denoted $\run{\vec{m}_0}{\sigma}{\vec{m}_n}$.
We also write $\run{\vec{m}}{*}{\vec{n}}$ if there exists a run from $\vec{m}$ to $\vec{n}$; in this case we say that $\vec{n}$ is \emph{reachable} from $\vec{m}$. 
Further, we say that a run $\run{\vec{m}}{*}{\vec{n}'}$ \emph{covers} $\vec{n}$ if $\vec{n}' \geq \vec{n}$. 
If such a $\sigma$ exists, we say that $\vec{n}$ can be \emph{covered} from~$\vec{m}$.

\begin{note*}
    Every Petri net can be seen as a Petri net with resets whose reset function is null, $R(t) = \emptyset$ for all $t$.
    So all definitions for Petri nets with resets naturally extend to Petri nets.
\end{note*}

\subparagraph*{Workflow nets.}
A \emph{workflow net} is a triple $(\Pp, \I, \Fin)$ where $\Pp$ is a Petri net $(P, T, F)$, $\I \in P$ is the \emph{initial place}, $\Fin \in P$ is the \emph{final place}, and all places and transitions lie on paths from $\I$ to~$\Fin$.
A~\emph{workflow net with resets} is a triple $(\Rr, \I, \Fin)$ where $\Rr = (P, T, F, R)$ is a Petri net with resets, and $((P, T, F), \I, \Fin)$ is a workflow net. 
We say that a workflow net (with resets) $(\Nn, \I, \Fin)$ is \emph{acyclic} if the Petri net (with resets) $\Nn$ is acyclic. In~\cref{fig:petri-net} the Petri net without transitions $t_3$ and $t_4$ is also a workflow net.

\subparagraph*{Decision problems.}
The following problems can be posed with any combination of added resets, acyclicity, and the workflow restriction.

\vspace{0.2cm}
\problemx{Reachability in Petri nets}
{A Petri net $\Nn$, an initial marking $\vec{m}$, and a target marking $\vec{n}$.}
{Does there exist a firing sequence $\sigma$ such that $\run{\vec{m}}{\sigma}{\vec{n}}$?}

\vspace{0.2cm}
\problemx{Coverability in Petri nets}
{A Petri net $\Nn$, an initial marking $\vec{m}$, and a target marking $\vec{n}$.}
{Does there exist a firing sequence $\sigma$ such that $\run{\vec{m}}{\sigma}{\vec{n'}}$, where $\vec{n'} \geq \vec{n}$?}

\vspace{0.2cm}
To give \emph{instances} of these problems, we use tuples $(\Nn, \vec{m}, \vec{n})$. 
The \emph{norm} of an instance is $\norm{(\Nn, \vec{m}, \vec{n})} = \norm{\Nn} + \norm{\vec{m}} + \norm{\vec{n}}$.
Depending on whether the arc weights are written in unary or binary, the \emph{bit size} of the input is polynomial in the norm or logarithmic in the norm, respectively.
Unary encoding suffices for our \class{PSPACE} lower bound (\cref{thm:workflow-coverability}); see~\cref{lem:binary-transitions}. 
Both of our \class{PSPACE} upper bounds (\cref{thm:workflow-reachability} and \cref{thm:petri-coverability}) hold even when the arc weights are binary-encoded.
The undecidability result (\cref{thm:petri-reachability}) is independent of the encoding.

\section{Upper Bounds}
\label{sec:upper-bounds}

\subsection{Reachability in \RAWN{s}}
\label{sec:workflow-reachability}

\workflowreach

\begin{proof}
	We rely on the simple property that reachable markings in \rawn{s} are exponentially bounded.
	Let $\Rr = (P, T, F, R)$ be a given \rawn{} and fix an initial marking $\m$. 
	Consider the workflow net $\Ww = (P, T, F)$ that is just $\Rr$ with the resets removed. 
	Suppose from a marking $\vec{p}$ in $\Rr$, firing a transition $t$ leads to marking $\vec{q}$.
	Clearly with the resets removed, firing $t$ from $\vec{p}$ in $\Ww$ leads to a marking $\vec{q}'$ and $\vec{q}' \geq \vec{q}$.
	Notice also that the removal of resets does not alter whether or not a transition can be fired, if a transition can be fired from $\vec{p}$ then it can be fired from any $\vec{p}' \geq \vec{p}$.
	It follows that if $\run{\vec{m}}{\pi}{\vec{n}}$ in $\Rr$, then $\run{\vec{m}}{\pi}{\vec{n}'}$ in $\Ww$ for some $\vec{n}' \geq \vec{n}$
	With this in mind, it suffices to argue that any reachable marking in $\Ww$ can be stored in polynomial space, relative to the norms of $\vec{m}$ and $\Ww$.

	Let $m = \norm{\Rr} + \norm{\vec{m}}$.
	We prove that if $\run{\vec{m}}{\pi}{\vec{n}'}$ in $\Ww$, then $\norm{\vec{n}'} \leq m^{n+1}$, where $n$ is the number of distinct transitions occurring in the firing sequence $\pi$. 
	Since $\Ww$ is acyclic, there is a topological order on (the sources of) the transitions, and $\pi$ can be permuted to respect this order (cf.~\cite{HiraishiI88}).
	Every transition in a workflow net must consume at least one token, so it follows that the $i$-th distinct transition can be fired at most $m^i$ many times, resulting in a marking of norm at most $m^{i+1}$. 
	Therefore, the norm of the largest possible marking is $m^{n+1}$ and since $n \leq \abs{T}$, all markings observed in the (permuted) run can be written down using polynomially many bits.
	Hence, reachability in \rawn{s} can be decided using polynomial space.
\end{proof}

\subsection{Coverability in \RAPN{s}}
\label{sec:petri-coverability}

\petricover

We fix our attention on an instance $(\Pp, \vec{m}, \vec{n})$ of coverability in \rapn{s}.
Our approach can be summarised in two parts.
First, we construct another infinite-state system $\Nn$ by modifying $\Pp$, that is much like a Petri net. 
The difference is that the places of $\Nn$ may contain an `infinite' number of tokens, denoted $\omega$.
Importantly, we will argue that $\vec{n}$ is coverable from $\vec{m}$ in $\Pp$ if and only if $\vec{n}$ is coverable from $\vec{m}$ in $\Nn$. 
Second, we show that the set of markings reachable from $\vec{m}$ in $\Nn$ has cardinality exponential in $\norm{\Nn}$ and $\norm{\vec{m}}$.
Together, this allows us to decide, in polynomial space, this instance of coverability in \rapn{s}.

We say that a transition $t$ is \emph{generating from a marking $\vec{r}$} if it only consumes tokens from places which contain $\omega$ tokens, more precisely, for each place $p$ such that $\pre{t}[p] > 0$, we have $\vec{r}[p] = \omega$.
In other words, a generating transition can only decrease the number of tokens in some place by resetting it; notice that consuming a finite number of tokens from a place that contains $\omega$ tokens leaves $\omega$ tokens is that place.
Suppose that $\run{\vec{p}}{t}{ \run{\vec{q}}{t}{\vec{r}} }$, where $t$ is a generating transition in $\vec{p}$, then $\vec{r} \geq \vec{q}$.
Indeed, if some place is reset by $t$ then by immediately firing $t$ again, the number of tokens in such a place does not decrease below zero.
By definition, the number of tokens in places that $t$ only consumes from is $\omega$, both before and after firing a generating transition $t$.
Finally, the number of tokens in all other places can only increase after firing $t$ again.
By firing $t$ an arbitrary number of times, the places where $t$ only produces tokens to will then contain $\omega$ many tokens.

Formally, $\Nn$ is the same object as $\Pp$: it consists of the same sets of places, transitions, and resets, but its semantics differs.
A marking $\vec{m}$ of $\Nn$ is allowed to have $\omega$ tokens in its places, so $\vec{m} \in \Nw^n$, where $n$ is the number of places.
Recall that $\omega$ denotes the first infinite cardinal, so $\omega + z = \omega$ for all $z \in \Z$.
To define the semantics of $\Nn$, we need to specify the behaviour of its transitions.
Fix a marking $\vec{m}$.
As is the case in $\Pp$, a transition $t$ can be fired in $\Nn$ if, for every place $p$, $\vec{m}[p] \geq \pre{t}[p]$.
The marking reached depends on whether $t$ is generating from $\vec{m}$.
If $t$ is not generating from $\vec{m}$, then its behaviour is defined as it was in $\Pp$; first subtract $\pre{t}$, then perform the resets, and lastly add $\post{t}$.
Otherwise, if $t$ is generating from $\vec{m}$, then $\run{\vec{m}}{t}{\vec{n}}$ is defined so that
\begin{equation*}
	\vec{n}[p] =
	\begin{cases}
	\omega & \text{if } p \notin \reset{t}, \text{ and either } \post{t}[p] \geq 1 \text{ or } \vec{m}[p] = \omega; \\
	\post{t}[p] & \text{if } p \in \reset{t}; \\
	\vec{m}[p] & \text{otherwise.}
	\end{cases}
\end{equation*}
Intuitively, the transition is applied arbitrarily many times producing $\omega$ tokens to some places, whenever it is possible.
\cref{clm:coverability-holds}, which is proved in~\cref{app:petri-coverability}, allows us to instead decide the coverability instance in $\Nn$ with the abstracted space of configurations.

\begin{restatable}{claim}{coverabilityholds}\label{clm:coverability-holds}
	Let $\vec{m}, \vec{n} \in \N^n$. Then $\vec{n}$ is coverable from $\vec{m}$ in $\Pp$ if and only if $\vec{n}$ is coverable from $\vec{m}$ in $\Nn$.
\end{restatable}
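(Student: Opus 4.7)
For the implication from $\Pp$ to $\Nn$, I would replay a covering $\Pp$-run $\run{\vec{m}}{\sigma}{\vec{q}}$ with $\vec{q} \geq \vec{n}$ step by step in $\Nn$ and argue by induction on the length of $\sigma$ that the $\Nn$-marking after each step pointwise dominates the $\Pp$-marking, using the convention $\omega \geq z$ for every $z \in \Z$. The inductive step case-splits on whether the next transition $t$ is generating in $\Nn$. If it is not, the $\Nn$-update rule coincides with the $\Pp$-rule, so monotonicity of the composition of consumption, reset, and production preserves the domination. If it is, reset places land on the common value $\post{t}[p]$, while non-reset places either become $\omega$ in $\Nn$ (dominating any value) or remain unchanged because $\pre{t}[p] = \post{t}[p] = 0$. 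The final $\Nn$-marking therefore dominates $\vec{q}$, which covers $\vec{n}$.

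\textbf{Backward direction.} For $(\Leftarrow)$ I would prove the following strengthening: for every finite $\vec{m} \in \N^n$, every $\Nn$-run $\run{\vec{m}}{\sigma}{\vec{m}_l}$, and every threshold $K \in \N$, there is a $\Pp$-run $\run{\vec{m}}{\tau}{\vec{q}}$ such that $\vec{q}[p] = \vec{m}_l[p]$ whenever $\vec{m}_l[p] \in \N$ and $\vec{q}[p] \geq K$ whenever $\vec{m}_l[p] = \omega$. Instantiating $K \geq \max_p \vec{n}[p]$ and using $\vec{m}_l \geq \vec{n}$ in the $\Nw$-order immediately yields $\vec{q} \geq \vec{n}$, as required. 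The non-generating inductive step is routine: invoke the IH with boosted threshold $K' \defeq K + \max_p \pre{t}[p]$ and fire $t$ once; reset places then match $\vec{m}_l$ exactly, finite places evolve identically to the $\Nn$-computation, and each $\omega$-place still carries at least $K' - \max_p \pre{t}[p] = K$ tokens.

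\textbf{Main obstacle.} The hard case is when the last transition $t$ is generating. Then $\Nn$ marks every post-place $p \notin \reset{t}$ with $\post{t}[p] \geq 1$ as $\omega$, even if it was finite just before $t$, whereas one $\Pp$-firing of $t$ contributes only $\post{t}[p]$ tokens there. Iterating $t$ in $\Pp$ suffices when every reset pre-place $q \in \reset{t}$ with $\pre{t}[q] > 0$ satisfies $\post{t}[q] \geq \pre{t}[q]$, because then $t$ can be fired arbitrarily many times in a row from a sufficiently large IH-supplied stockpile. The obstruction is when $\post{t}[q] < \pre{t}[q]$ for some reset pre-place $q$: a single firing of $t$ drains $q$ below $\pre{t}[q]$ and blocks further firings. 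I would overcome this by interspersing firings of $t$ with refilling subsequences drawn from the prefix of $\sigma$: since $q$ is $\omega$ in the preceding $\Nn$-marking, some earlier generating transition $t'$ of $\sigma$ produces $q$ without resetting it, so firing $t'$ in $\Pp$ replenishes $q$. Iterating an amortised block \emph{``IH-supplied refilling followed by one firing of $t$''} for $\Theta(K)$ rounds accumulates at least $K$ tokens in every newly-$\omega$ place. The bookkeeping for cascading threshold parameters, and the fact that the recursive refilling terminates because $\sigma$ contains only finitely many generating transitions to recurse through, form the technical heart of the argument.
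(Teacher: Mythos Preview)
Your forward direction and the inductive scaffolding for the backward direction are the same as the paper's: both strengthen to ``every finite instantiation of the $\omega$-coordinates is coverable in $\Pp$'', induct on the length of the $\Nn$-run, and case-split on whether the last transition $t$ is generating.

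The divergence is in the generating case, and here you are \emph{more} careful than the paper. The paper simply boosts each $\omega$-coordinate of $\vec{x}$ to $m+mk$, invokes the IH to cover this $\vec{x}^*$, and then fires $t$ exactly $m$ times in a row. It never checks that $t^m$ is fireable. Your ``main obstacle'' is real: if some pre-place $q$ of $t$ lies in $\reset t$, then by acyclicity $\post{t}[q]=0<\pre{t}[q]$, so one firing of $t$ leaves $q$ empty and blocks the remaining $m-1$ firings. You have spotted a gap that the paper glosses over rather than invented a non-issue.

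That said, your repair is not yet a proof. ``IH-supplied refilling followed by one firing of $t$'' is ill-typed as written: the induction hypothesis manufactures runs from the fixed initial marking $\vec m$, not from the marking you sit at after a firing of $t$, so it cannot be invoked between rounds. The right way to cash out your idea is to strengthen the IH to also deliver, for every $\omega$-place $p$ of $\vec{m}_{l-1}$, a \emph{pump} sequence $\pi_p$ that is fireable from the IH-target marking (and from any marking matching it on finite places and dominating it on $\omega$-places), strictly increases $p$, and preserves the finite places. The recursion you describe then builds $\pi_q$ out of one firing of the earlier generating transition $t'$ preceded by pumps for the $\omega$-pre-places of $t'$; termination holds because each recursive call concerns a strictly earlier generating step of $\sigma$, and the first generating step has empty pre-set since $\vec m\in\N^n$. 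With these pumps in hand, $K$ rounds of ``pump every reset pre-place of $t$; fire $t$'' achieve the required bound while keeping finite places exact.
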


Recall the norm of a marking $\norm{\vec{v}} = \sum_{\vec{v}[p] \in \N} \abs{\vec{v}[p]}$.
\cref{clm:exponential-markings}, which is proved in~\cref{app:petri-coverability}, shows that because $\Nn$ is acyclic, only markings with an exponential norm can be reached.
Note, critically, that places containing $\omega$ tokens do not contribute to the norm.

\begin{restatable}{claim}{exponentialmarkings}\label{clm:exponential-markings}
	Let $k$ be the greatest number of tokens produced by a transition in~$\Pp$ and let $C = \norm{\vec{m}}$.
	If $\run{\vec{m}}{*}{\vec{v}}$ in $\Nn$, then $\norm{\vec{v}} \leq C \cdot k^n$.
\end{restatable}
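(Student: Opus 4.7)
The plan is to induct on the topological position of places in the acyclic arc structure of $\Pp$. Since only the arcs (not the resets) are acyclic, fix a total order on places $p_1, \ldots, p_n$ such that whenever $\pre{t}[p_j] > 0$ and $\post{t}[p_i] > 0$ for some transition $t$, we have $j < i$. I would prove, for every reachable marking $\vec v$ from $\vec m$ in $\Nn$, a per-place bound: if $\vec v[p_i]$ is finite, then $\vec v[p_i]$ is at most an exponential function of $i$, $k$, and $C$. Summing over $i$ then yields $\norm{\vec v} \leq C \cdot k^n$.

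First, I would establish the base case: $p_1$ is a source with respect to arcs, so no transition produces tokens into it, and its value can only be decreased by consumption or resets. Hence $\vec v[p_1] \leq \vec m[p_1] \leq C$. For the inductive step, the key structural observation is that if a firing of some transition $t$ produces to $p_i$ while leaving $\vec v[p_i]$ finite, then that firing must be non-generating: if $t$ were generating and $p_i \notin \reset{t}$, then by the $\Nn$ semantics $p_i$ would become $\omega$ and remain so in the absence of further resets. To cope with intermediate resets of $p_i$, I would localise the argument to the suffix of the run following the last reset of $p_i$ (or the whole run if $p_i$ is never reset). In this suffix no transition $t$ with $p_i \in \reset{t}$ fires, the value of $p_i$ starts at most $\max(\vec m[p_i], k)$, and every firing producing to $p_i$ is non-generating.

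Next I would count non-generating firings. Each such firing consumes at least one finite token from some place $p_j$ with $\pre{t}[p_j] > 0$, and the topological order forces $j < i$. Letting $Q_i$ denote an upper bound on the total finite tokens produced to $p_i$ throughout the run, the total finite tokens ever available in $p_j$ is at most $\vec m[p_j] + Q_j$, and each non-generating firing producing to $p_i$ consumes at least one such token. This gives a recurrence of the form $Q_i \leq k \cdot \sum_{j < i} (\vec m[p_j] + Q_j)$, which solves to an exponential in $i$ whose base is on the order of $k$. Combined with the per-firing increment of at most $k$ tokens into $p_i$, this yields a per-place bound of the form $C \cdot k^i$ (up to constant factors in the base, which are immaterial for the \class{PSPACE} application), and summing over $i$ gives the claimed bound.

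The main obstacle is the asymmetry between the acyclic arcs and the unrestricted resets: a transition may reset places that are topologically unrelated to its consumed or produced places, so a naive reordering of the firing sequence along the topological order is not available in $\Nn$. The last-reset localisation sidesteps this, but care is needed to ensure that the generating versus non-generating status of a firing is interpreted with respect to the marking at the time of firing, not with respect to the final marking $\vec v$.
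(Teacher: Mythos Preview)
Your route differs genuinely from the paper's. Instead of a per-place induction with a recurrence on total finite productions $Q_i$, the paper introduces a single potential $\mathrm{weight}(\vec v)=\sum_{i:\,\vec v[p_i]\in\N} k^{\,n-i+1}\cdot \vec v[p_i]$ on the same topological order you fix, and argues that every firing in $\Nn$ leaves this potential non-increasing: a non-generating firing loses at least $k^{n-i+1}$ by consuming from some finite $p_i$ and gains at most $k\cdot k^{n-i}$ by producing into places of strictly larger index; a generating firing (in the paper's account) only replaces finite entries by $\omega$, which cannot raise the weight. Since $\norm{\vec v}\le\mathrm{weight}(\vec v)\le\mathrm{weight}(\vec m)\le C\cdot k^n$, the bound drops out in one line. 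This sidesteps your last-reset localisation, the suffix versus whole-run distinction, and the recurrence, and it delivers the stated constant rather than a $(k{+}1)^n$-type bound.

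There is also a real gap in your recurrence, and it is one the paper's proof happens to share. A generating firing of some $t$ with $p\in\reset{t}$ and $\post t[p]>0$ sets $p$ to the finite value $\post t[p]$ while consuming no finite token anywhere; such firings are not accounted for on the right-hand side of $Q_i\le k\sum_{j<i}(\vec m[p_j]+Q_j)$, and they equally invalidate the paper's assertion that a generating transition ``may only introduce $\omega$'s''. Indeed, with such a transition one can alternate it with a downstream non-generating consumer and pump a later finite place unboundedly in $\Nn$, so the claim as literally stated fails in this corner case. Under the mild normalisation that no transition both resets and produces to the same place, both arguments go through; your recurrence then closes and yields an exponential bound adequate for the \class{PSPACE} application, while the paper's potential function is simply the cleaner packaging of the same token-flow accounting you carry out inductively.
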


\begin{proof}[Proof of~\cref{thm:petri-coverability}]
	By~\cref{clm:coverability-holds}, it suffices to show that coverability in the modified \rapn{} $\Nn$ can be decided in polynomial space.
	We can do this by non-deterministically exploring the markings that are reachable from $\vec{m}$ in $\Nn$.
	Given~\cref{clm:exponential-markings}, if $\vec{v}$ is reachable from $\vec{m}$ in $\Nn$, then $\norm{\vec{v}} \leq C \cdot k^n$.
	These reachable markings can be written down using polynomially many bits since $n$ is the number of places, $k$ is the greatest number of tokens produced by a place, and $C$ is the norm of $\vec{m}$.
	Thus, coverability in \rapn{s} is in \class{PSPACE}.
\end{proof}

\section{Lower Bounds}
\label{sec:lower-bounds}

\subsection{Coverability in \RAWN{s}}
\label{sec:workflow-coverability}

With the coverability objective, binary-encoded transitions can be weakly simulated by unary-encoded transitions.
We do this for convenience, since the later reductions can be more succinctly presented with binary-encoded transitions.

\begin{restatable}{lemma}{binarytransitions}\label{lem:binary-transitions}
Given an instance of coverability in \rawn{s} with binary-encoded transitions $I = (\Bb, \vec{m}, \vec{n})$, one can construct, in polynomial time, an instance of coverability in \rawn{s} $I' = (\Uu, \vec{x}, \vec{y})$ with unary-encoded transitions such that $I$ is positive if and only if $I'$ is positive.
\end{restatable}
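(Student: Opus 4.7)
The plan is to replace each binary-encoded transition $t$ of $\Bb$ by a gadget of polynomially many unary-encoded transitions and auxiliary places in $\Uu$.  The central device is a cascade of length $O(k)$ for handling an arc weight $w$ of $k$ bits.  For an input arc $F(p, t) = w = \sum_i c_i 2^i$, I introduce, for each $i$ with $c_i = 1$, a chain of $i$ auxiliary places linked by unary-encoded \emph{halving} transitions: each halving consumes $2$ tokens from one place (starting from $p$) and produces $1$ in the next.  Collecting one token at the final place of such a chain therefore forces $2^i$ tokens to have been consumed from $p$, so collecting the full set of completion tokens across all $c_i = 1$ chains for $t$ consumes exactly $w = \sum_i c_i 2^i$ from $p$.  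Dually, for each output arc $F(t, q) = w'$ I introduce \emph{doubling} chains in which a single seed token cascades into $2^i$ tokens at $q$.  The transition $t$ itself is replaced by a core transition $\tilde t$ that, once all required input-completion tokens are present, consumes them, seeds the output cascades, and performs the resets prescribed by $R(t)$.

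The initial and target markings, whose entries may be exponentially large in binary, are handled by analogous initialisation and verification gadgets.  For each original place $p$, the $\vec{m}[p]$ tokens are produced by a polynomially small set of seed tokens placed in $\vec{x}$ that doubling-cascade into $p$; the threshold $\vec{n}[p]$ is enforced by a halving cascade that consumes $\vec{n}[p]$ tokens from $p$ and deposits a single witness token, which $\vec{y}$ then requires.  The construction preserves acyclicity because every cascade flows one-directionally and $\tilde t$ inherits the acyclic order of $t$; the workflow property is maintained by routing all new auxiliary places on paths from $\I$ to $\Fin$ via these initialisation and verification gadgets.  The resulting net $\Uu$, together with $\vec{x}$ and $\vec{y}$, has polynomial unary bit-size in the binary bit-size of $I$, so the reduction runs in polynomial time.

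For correctness, the forward direction simulates each firing of $t$ in a $\Bb$-witness by fully running its input cascades (exactly $w$ tokens leave $p$), firing $\tilde t$, and then fully running its output cascades (exactly $w'$ tokens appear in $q$); the resulting firing sequence in $\Uu$ is exponentially long in the bit-size of the weights, but coverability places no bound on witness length.  For the backward direction, every firing of $\tilde t$ requires one completion token from each of its input chains, forcing at least $2^i$ tokens to have been consumed from $p$ per chain; hence $K_t$ firings of $\tilde t$ consume at least $K_t w$ from $p$ and seed doubling cascades that produce at most $K_t w'$ at $q$.  Extracting from any $\Uu$-witness the numbers $K_t$ and firing each $t$ exactly $K_t$ times in topological order yields a valid $\Bb$-firing sequence whose token counts on original places dominate those in $\Uu$ throughout, so if $\vec{y}$ is covered in $\Uu$ then $\vec{n}$ is covered in $\Bb$.

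The hard part will be the interaction between resets and cascades.  When $\tilde t$ resets some $p \in R(t)$, any partial progress on other cascades involving $p$ — either halvings that have already removed tokens from $p$ for a different transition, or doublings that are still depositing tokens into $p$ — would otherwise survive the reset and allow behaviour in $\Uu$ that is impossible in $\Bb$, breaking the backward direction.  The fix is to extend the reset function of $\tilde t$ to zero every auxiliary place holding in-flight tokens for cascades attached to $p$, thereby mirroring the instantaneous effect of a reset in $\Bb$; verifying that this extended reset preserves acyclicity and correctness under arbitrary interleavings of gadget firings is the most delicate part of the argument.
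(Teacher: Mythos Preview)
Your construction---halving and doubling cascades around a core transition $\tilde t$, with resets extended to the auxiliary places---is essentially the gadget the paper uses (the paper first splits transitions so every arc weight is a single power of two, which is a minor simplification of your per-bit chains). The forward direction and the reset-extension idea are sound. The backward direction, however, has a genuine gap: you propose to extract only the \emph{counts} $K_t$ of core-transition firings from a $\Uu$-witness and replay them in $\Bb$ in topological order. With resets this fails, because reset edges are not subject to the acyclicity constraint, so two topologically incomparable transitions can interact via a reset and the outcome depends on firing order, not just on the multiset. Concretely, take $t_0\colon \I \to a, b$; \ $t_1\colon a \to c$ with a reset on $b$; \ $t_2\colon b \to d$; \ $t_3\colon c, d \to \Fin$. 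From one token in $\I$, the only way to cover $\Fin \ge 1$ is $t_0, t_2, t_1, t_3$; but $t_1$ and $t_2$ are topologically incomparable, and if your fixed topological order places $t_1$ before $t_2$ then $b$ is reset before $t_2$ can consume from it, so the replayed sequence is not even fireable.

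The fix, which is exactly what the paper does, is to preserve the \emph{order} of core-transition firings: take the subsequence of $\tilde t$-firings in the $\Uu$-witness in the order they occur, and fire the corresponding $t$'s in $\Bb$ in that same order. The domination invariant you state (token counts on original places in $\Bb$ are at least those in $\Uu$) is then provable step by step, because any partial cascade progress or reset of auxiliary places in $\Uu$ can only lose tokens relative to the atomic firing in $\Bb$; but the invariant holds only under order-preserving replay, not under topological reordering.
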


\workflowcover

All lemmata and claims in this sections are proved in~\cref{app:workflow-coverability}.

\paragraph*{Proof Approach}

We will reduce from the quantified satisfiability (QSAT) problem.
\problemx{QSAT}
{A quantified Boolean formula $\phi$ in conjunctive normal form over $y_1, x_1, \ldots, y_k, x_k$.}
{Does $\forall y_1 \, \exists x_1 \, \ldots \, \forall y_k \, \exists x_k: \phi(y_1, x_1, \ldots, y_k, x_k)$ evaluate to true?}
\vspace{0.1in}

Given a Quantified Boolean Formula (QBF) $\phi$, we will construct an \rawn{} $\Ww$ that mimics the exhaustive approach to verifying $\phi$.
There will be a collection of places that represent an assignment to the variables $y_1, x_1, \ldots, y_k, x_k$.
There will be transitions that consume tokens from these places and produce tokens into a component of $\Ww$ that is used to test whether the current assignment is satisfying.
If the current assignment is satisfying, then one token can be produced to some final place which counts the number of satisfying assigments observed.
The places representing an assignment are controlled by a series of gadgets that we call universal gadgets and existential gadgets. 
In combination, the universal gadgets iterate through each possible assignment to the universal variables and the existential gadgets assign a (nondeterministically chosen) value the existential variables.
A marking in which the final place contains $2^k$ tokens can only be reached if and only if every considered assignment has been checked to be satisfying.
A detailed description of the coverability instance follows.
The proof of correctness consists of two parts.

First, we will verify that the QBF evaluates to true given that coverability holds.
We achieve this via an inductive argument that tracks the simulated assignments to variables over parts of the run witnessing coverability.

In the second part, we would like to recover a firing sequence for coverability if the QBF evaluates to true.
We achieve this by using (partial) assignments to variables in the QBF to inform which transitions need be fired to make progress towards the final marking.

\paragraph*{Construction of the \RAWN{}}

For this section, we focus our attention on a QBF
\begin{equation*}
\forall y_1 \, \exists x_1 \, \forall y_2 \, \exists x_2 \, \ldots \, \forall y_k \, \exists x_k: \phi(y_1, x_1, y_2, x_2, \ldots, y_k, x_k). 
\end{equation*}
We remark that we can add `dummy' clauses $(\overline y_i \lor y_i)$ and $(\overline x_i \lor x_i)$ for each $i \in [1, k]$ to $\phi$ without changing any valuation.

For the proof of~\cref{thm:workflow-coverability}, we construct an \rawn{} $\Ww = (P, T, F, R)$ from the QBF; we first list the places and transitions including resets of $\Ww$.
See~\cref{fig:qbf-net} for an example.

\begin{figure}[ht!]
	\centering
	\begin{tikzpicture}
	\draw[purple, line width = 0.3mm, dotted] (-0.25, -1.8) rectangle (13, -4.7);
	\node at (1.2, -5) {\bf\textcolor{purple}{Clause places $C$}};

	\draw[blue, line width = 0.3mm, dotted] (-0.5, 5) rectangle (1.5, -3);
	\node at (0.5, 5.8) {\bf\textcolor{blue}{Universal}};
	\node at (0.5, 5.3) {\bf\textcolor{blue}{gadget $U_1$}};
	\draw[ForestGreen, line width = 0.3mm, dotted] (1.75, 5) rectangle (3.75, -3);
	\node at (2.75, 5.8) {\bf\textcolor{ForestGreen}{Existential}};
	\node at (2.75, 5.3) {\bf\textcolor{ForestGreen}{gadget $E_1$}};

	\draw[blue, line width = 0.3mm, dotted] (4.25, 5) rectangle (6.25, -3);
	\node at (5.25, 5.8) {\bf\textcolor{blue}{Universal}};
	\node at (5.25, 5.3) {\bf\textcolor{blue}{gadget $U_2$}};
	\draw[ForestGreen, line width = 0.3mm, dotted] (6.5, 5) rectangle (8.5, -3);
	\node at (7.5, 5.8) {\bf\textcolor{ForestGreen}{Existential}};
	\node at (7.5, 5.3) {\bf\textcolor{ForestGreen}{gadget $E_2$}};

	\draw[blue, line width = 0.3mm, dotted] (9, 5) rectangle (11, -3);
	\node at (10, 5.8) {\bf\textcolor{blue}{Universal}};
	\node at (10, 5.3) {\bf\textcolor{blue}{gadget $U_3$}};
	\draw[ForestGreen, line width = 0.3mm, dotted] (11.25, 5) rectangle (13.25, -3);
	\node at (12.25, 5.8) {\bf\textcolor{ForestGreen}{Existential}};
	\node at (12.25, 5.3) {\bf\textcolor{ForestGreen}{gadget $E_3$}};

	\node[place] (h1) at (0, 4.5) {$h_1$};
	\node[place] (h2) at (4.75, 4.5) {$h_2$};
	\node[place] (h3) at (9.5, 4.5) {$h_3$};

	\node[place] (w1) at (1, 2.5) {$w_1$};
	\node[place] (w2) at (5.75, 2.5) {$w_2$};
	\node[place] (w3) at (10.5, 2.5) {$w_3$};

	\node[place] (v1) at (2.75, 3.5) {$v_1$};
	\node[place] (v2) at (7.5, 3.5) {$v_2$};
	\node[place] (v3) at (12.25, 3.5) {$v_3$};

	\node[transition, inner sep = 2pt] (u1b) at (0, 3.5) {\small $u_1^\bot$};
	\node[transition, inner sep = 2pt] (u2b) at (4.75, 3.5) {\small $u_2^\bot$};
	\node[transition, inner sep = 2pt] (u3b) at (9.5, 3.5) {\small $u_3^\bot$};

	\node[transition, inner sep = 2pt] (u1t) at (1, 1.5) {\small $u_1^\top$};
	\node[transition, inner sep = 2pt] (u2t) at (5.75, 1.5) {\small $u_2^\top$};
	\node[transition, inner sep = 2pt] (u3t) at (10.5, 1.5) {\small $u_3^\top$};

	\node[transition, inner sep = 2pt] (e1b) at (2.25, 1.5) {\small $e_1^\bot$};
	\node[transition, inner sep = 2pt] (e1t) at (3.25, 1.5) {\small $e_1^\top$};
	\node[transition, inner sep = 2pt] (e2b) at (7, 1.5) {\small $e_2^\bot$};
	\node[transition, inner sep = 2pt] (e2t) at (8, 1.5) {\small $e_2^\top$};
	\node[transition, inner sep = 2pt] (e3b) at (11.75, 1.5) {\small $e_3^\bot$};
	\node[transition, inner sep = 2pt] (e3t) at (12.75, 1.5) {\small $e_3^\top$};

	\node[place] (b1n) at (0, 0.25) {$\overline b_1$};
	\node[place] (b1p) at (1, 0.25) {$b_1$};
	\node[place] (a1n) at (2.25, 0.25) {$\overline a_1$};
	\node[place] (a1p) at (3.25, 0.25) {$a_1$};

	\node[place] (b2n) at (4.75, 0.25) {$\overline b_2$};
	\node[place] (b2p) at (5.75, 0.25) {$b_2$};
	\node[place] (a2n) at (7, 0.25) {$\overline a_2$};
	\node[place] (a2p) at (8, 0.25) {$a_2$};

	\node[place] (b3n) at (9.5, 0.25) {$\overline b_3$};
	\node[place] (b3p) at (10.5, 0.25) {$b_3$};
	\node[place] (a3n) at (11.75, 0.25) {$\overline a_3$};
	\node[place] (a3p) at (12.75, 0.25) {$a_3$};

	\path[->] (h1) edge[] (u1b);
	\path[->] (h2) edge[] (u2b);
	\path[->] (h3) edge[] (u3b);

	\path[->] (w1) edge[] (u1t);
	\path[->] (w2) edge[] (u2t);
	\path[->] (w3) edge[] (u3t);

	\path[->] (u1b) edge[] (w1);
	\path[->] (u2b) edge[] (w2);
	\path[->] (u3b) edge[] (w3);

	\path[->] (u1b) edge[] (v1);
	\path[->] (u2b) edge[] (v2);
	\path[->] (u3b) edge[] (v3);
	\path[->] (u1t) edge[] (v1);
	\path[->] (u2t) edge[] (v2);
	\path[->] (u3t) edge[] (v3);

	\path[->] (u1b) edge[] node[below right] {\small$2^2$} (b1n);
	\path[->] (u2b) edge[] node[below right] {\small$2^1$} (b2n);
	\path[->] (u3b) edge[] node[below right] {\small$2^0$} (b3n);

	\path[->] (u1t) edge[] node[right] {\small$2^2$} (b1p);
	\path[->] (u2t) edge[] node[right] {\small$2^1$} (b2p);
	\path[->] (u3t) edge[] node[right] {\small$2^0$} (b3p);

	\path[->] (v1) edge[] (e1b);
	\path[->] (v1) edge[] (e1t);
	\path[->] (v2) edge[] (e2b);
	\path[->] (v2) edge[] (e2t);
	\path[->] (v3) edge[] (e3b);
	\path[->] (v3) edge[] (e3t);

	\path[->] (e1b) edge[] node[right] {\small$2^2$} (a1n);
	\path[->] (e1t) edge[] node[right] {\small$2^2$} (a1p);
	\path[->] (e2b) edge[] node[right] {\small$2^1$} (a2n);
	\path[->] (e2t) edge[] node[right] {\small$2^1$} (a2p);
	\path[->] (e3b) edge[] node[right] {\small$2^0$} (a3n);
	\path[->] (e3t) edge[] node[right] {\small$2^0$} (a3p);

	\path[->] (e1b) edge[] (h2);
	\path[->] (e1t) edge[] (h2);
	\path[->] (e2b) edge[] (h3);
	\path[->] (e2t) edge[] (h3);

	\node[transition, inner sep = 2pt] (l1n) at (0, -1) {$\ell_{\overline y_1}$};
	\node[transition, inner sep = 2pt] (l1p) at (1, -1) {$\ell_{y_1}$};
	\node[transition, inner sep = 2pt] (m1n) at (2.25, -1) {$\ell_{\overline x_1}$};
	\node[transition, inner sep = 2pt] (m1p) at (3.25, -1) {$\ell_{x_1}$};

	\node[transition, inner sep = 2pt] (l2n) at (4.75, -1) {$\ell_{\overline y_2}$};
	\node[transition, inner sep = 2pt] (l2p) at (5.75, -1) {$\ell_{y_2}$};
	\node[transition, inner sep = 2pt] (m2n) at (7, -1) {$\ell_{\overline x_2}$};
	\node[transition, inner sep = 2pt] (m2p) at (8, -1) {$\ell_{x_2}$};

	\node[transition, inner sep = 2pt] (l3n) at (9.5, -1) {$\ell_{\overline y_3}$};
	\node[transition, inner sep = 2pt] (l3p) at (10.5, -1) {$\ell_{y_3}$};
	\node[transition, inner sep = 2pt] (m3n) at (11.75, -1) {$\ell_{\overline x_3}$};
	\node[transition, inner sep = 2pt] (m3p) at (12.75, -1) {$\ell_{x_3}$};

	\path[->] (b1n) edge[] (l1n);
	\path[->] (b1p) edge[] (l1p);
	\path[->] (a1n) edge[] (m1n);
	\path[->] (a1p) edge[] (m1p);
	\path[->] (b2n) edge[] (l2n);
	\path[->] (b2p) edge[] (l2p);
	\path[->] (a2n) edge[] (m2n);
	\path[->] (a2p) edge[] (m2p);
	\path[->] (b3n) edge[] (l3n);
	\path[->] (b3p) edge[] (l3p);
	\path[->] (a3n) edge[] (m3n);
	\path[->] (a3p) edge[] (m3p);

	\node[place] (dy1) at (0.5, -2.5) {$d_{y_1}$};
	\node[place] (dx1) at (2.75, -2.5) {$d_{x_1}$};
	\node[place] (dy2) at (5.25, -2.5) {$d_{y_2}$};
	\node[place] (dx2) at (7.5, -2.5) {$d_{x_2}$};
	\node[place] (dy3) at (10, -2.5) {$d_{y_3}$};
	\node[place] (dx3) at (12.25, -2.5) {$d_{x_3}$};

	\path[->] (l1n) edge[] (dy1);
	\path[->] (l1p) edge[] (dy1);
	\path[->] (m1n) edge[] (dx1);
	\path[->] (m1p) edge[] (dx1);
	\path[->] (l2n) edge[] (dy2);
	\path[->] (l2p) edge[] (dy2);
	\path[->] (m2n) edge[] (dx2);
	\path[->] (m2p) edge[] (dx2);
	\path[->] (l3n) edge[] (dy3);
	\path[->] (l3p) edge[] (dy3);
	\path[->] (m3n) edge[] (dx3);
	\path[->] (m3p) edge[] (dx3);

	\node[place] (c1) at (2.55, -4) {$c_1$};
	\node[place] (c2) at (5.1, -4) {$c_2$};
	\node[place] (c3) at (7.65, -4) {$c_3$};
	\node[place] (c4) at (10.2, -4) {$c_4$};

	\path[->] (l1p) edge[bend right = 9] (c1);
	\path[->] (m1n) edge[bend right = 12] (c1);
	\path[->] (l2n) edge[bend left = 9] (c1);

	\path[->] (m1n) edge[bend left = 7] (c2);
	\path[->] (l2n) edge[bend right = 11] (c2);
	\path[->] (m2p) edge[bend right = 10] (c2);	
	
	\path[->] (l2p) edge[bend right = 8] (c3);
	\path[->] (m2p) edge[bend left = 10] (c3);
	\path[->] (l3n) edge[bend left = 7] (c3);	

	\path[->] (m2p) edge[bend right = 8] (c4);
	\path[->] (l3p) edge[bend left = 10] (c4);
	\path[->] (m3p) edge[bend right = 5] (c4);

	\node[transition, inner sep = 2pt, minimum width = 8mm, minimum height = 5mm] (s) at (6.375, -5.5) {\large $s$};

	\path[->] (c1) edge[bend right = 6] (s);
	\path[->] (c2) edge[bend left  = 5] (s);
	\path[->] (c3) edge[bend right = 5] (s);
	\path[->] (c4) edge[bend left = 6] (s);	
	\path[->] (dy1) edge[bend right = 30] (s);
	\path[->] (dx1) edge[bend right = 8] (s);
	\path[->] (dy2) edge[bend left  = 8] (s);
	\path[->] (dx2) edge[bend right  = 8] (s);
	\path[->] (dy3) edge[bend left = 8] (s);
	\path[->] (dx3) edge[bend left = 30] (s);

	\node[place] (f) at (6.375, -6.5) {$f$};

	\path[->] (s) edge[] (f);
\end{tikzpicture}
	\caption{
	The \rawn{} $\Ww$, drawn without resets for sake of clarity, for the QBF $\forall y_1 \exists x_1 \forall y_2 \exists x_2 \forall y_3 \exists x_3: \phi(y_1, x_1, y_2, x_2, y_3, x_3)$ where $\phi(y_1, x_1, y_2, x_2, y_3, x_3) = (y_1 \vee \overline x_1 \vee \overline y_2) \wedge (\overline x_1 \vee \overline y_2 \vee x_2) \wedge(y_2 \vee x_2 \vee \overline y_3 ) \wedge (x_2 \vee y_3 \vee x_3) \wedge (y_1 \vee \overline y_1) \wedge (x_1 \vee \overline x_1) \wedge (y_2 \vee \overline y_2) \wedge (x_2 \vee \overline x_2) \wedge (y_3 \vee \overline y_3) \wedge (x_3 \vee \overline x_3)$.
	All universal and existential control transitions reset all later occurring places in the universal and existential gadgets and in all clause places.
	The loading transitions reset all later occurring dummy clause places.
	The satisfaction transition resets all clause places.
	} 
	\label{fig:qbf-net}
\end{figure}

\subparagraph*{The places.}
There is a place for each literal: for every $i \in [1, k]$, there is $b_i$ for $y_i$, $\overline b_i$ for $\overline y_i$, $a_i$ for $x_i$, and $\overline a_i$ for $\overline x_i$.
Let $L$ denote the set of the literal places.
The non-emptiness of the place $\overline b_i$, for example, will represent assigning false to the variable $y_i$.

There is a place for each clause: for every $j \in [1, m]$, there is $c_j$ for the $j$-th clause.
Furthermore, for every $i \in [1, k]$, there is $d_{y_i}$ for the dummy clause $(y_i \vee \overline y_i)$ and there is $d_{x_i}$ for the dummy clause $(x_i \vee \overline x_i)$.
All \emph{clause places} $c_1, \ldots, c_m, d_{x_1}, d_{y_1}, \ldots, d_{x_k}, d_{y_k}$ are distinct; the set comprising them is denoted $C$.
The non-emptiness of a clause place $c \in C$ will represent whether the corresponding clause has been satisfied.

For each $i \in [1, k]$, there is a \emph{holding place} $h_i$ and a \emph{waiting place} $w_i$ for each universally quantified variable, as well as a \emph{decision place} $v_i$ for each existentially quantified variable.
If the holding place $h_i$ contains a token, one should think that the universally quantified variable $y_i$ and all subsequent variables $x_i, y_{i+1}, x_{i+1}, \ldots, y_k, x_k$ have not yet been assigned.
The waiting place $w_i$ will contain a token if the universally quantified variable $y_i$ is currently assigned false.
The decision place $v_i$ contains a token after the truth assignment of the prior universally quantified variable $y_i$ has completed, but the existentially quantified variable $x_i$ has not yet received a value.
The literal places, holding places, waiting places, decision places, and dummy clause places are grouped into \emph{gadgets}. 
There are $k$ \emph{universal gadgets} $U_i = \set{ h_i, w_i, b_i, \overline b_i, d_{y_i} }$ and $k$ \emph{existential gadgets} $E_i = \set{ v_i, a_i, \overline a_i, d_{x_i} }$.

Finally, there is a place $f$ which counts the number of assignments that have been verified to satisfy the QBF.

The initial place $\I$ of the workflow is $h_1$ and the final place $\Fin$ of the workflow is $f$.

\subparagraph*{The transitions.}
Here, binary-encoded transitions are used, see~\cref{lem:binary-transitions}.
The resets will be specified later.

Inside the universal gadget $U_i$, there are two \emph{universal control transitions} $u_i^\bot$ and $u_i^\top$. 
Firing $u_i^\bot$ corresponds to setting $y_i$ to false, and firing $u_i^\top$ corresponds to setting $y_i$ to true.
The transition $u_i^\bot$ consumes one token from $h_i$, produces one token to $w_i$, produces one token to $v_i$, and produces $2^{k-i}$ tokens to $\overline b_i$; the transition $u_i^\top$ consumes one token from $w_i$, produces one token to $v_i$, and produces $2^{k-i}$ tokens to $b_i$.

Inside the existential gadget $E_i$, there are two \emph{existential control transitions} $e_i^\bot$ and $e_i^\top$. 
Firing $e_i^\bot$ corresponds to setting $x_i$ to false, and firing $e_i^\top$ corresponds to setting $x_i$ to true.
The transition $e_i^\bot$ consumes one token from $v_i$, produces $2^{k-i}$ tokens to $\overline a_i$, and produces one token to $h_{i+1}$; similarly, the transition $e_i^\top$ consumes one token from $v_i$, produces $2^{k-i}$ tokens to $a_i$, and produces one token to $h_{i+1}$. 

Informally, the $i$-th universal or existential controlling transitions produce $2^{k-i}$ tokens to places $\overline b_i$, $b_i$, $\overline a_i$, and $a_i$ so that their values are `remembered' whilst the inner quantified variables have their assignments exhausted.

Connecting the universal and existential gadgets to the clause places are a series of \emph{loading transitions}.
There is a loading transition for each literal; for each $i \in [1, k]$, there are transitions $\ell_{\overline y_i}$, $\ell_{y_i}$, $\ell_{\overline x_i}$, and $\ell_{x_i}$.
The loading transition $\ell_{y_i}$, for example, consumes a token from the place $b_i$ and produces a token to each clause place corresponding to a clause containing the literal $y_i$, including the dummy clause place $d_{y_i}$.

There is a \emph{satisfaction transition} $s$ that consumes a token from each of the clause places and produces a token into a final place $f$.
Intuitively, $s$ can only be fired when all of the clauses have been satisfied (and $f$ is used to count the number of satisfying assignments).

\subparagraph*{Ordering places and transitions.}
The following linear ordering \emph{earlier than} (denoted $\prec$) on $P \cup T$ shows that $\Ww$ is acyclic:
\begin{multline*}
	h_1, u_1^\bot, w_1, u_1^\top, \overline b_1, b_1, v_1, e_1^\bot, e_1^\top, \overline a_1, a_1, \ldots, h_k, u_k^\bot, w_k, u_k^\top, \overline b_k, b_k, v_k, e_k^\bot, e_k^\top, \overline a_k, a_k, \\ 
	\ell_{\overline y_1}, \ell_{y_1}, \ell_{\overline x_1}, \ell_{x_1}, \ldots, \ell_{\overline y_k}, \ell_{y_k}, \ell_{\overline x_k}, \ell_{x_k}, d_{y_1}, d_{x_1}, \ldots, d_{y_k}, d_{x_k}, c_1, \ldots, c_m, s, f.
\end{multline*}

\subparagraph*{The resets.}
The universal and existential control transitions reset all \emph{later} occurring places in the universal gadgets and existential gadgets and \emph{all} dummy clause places.
This also includes the places corresponding to the literals; for example, $u_i^\bot$ and $u_i^\top$ reset both $\overline b_i$ and $b_i$, so it is always true that either $\overline b_i$ or $b_i$ is empty (or possibly both).

This effectively forces the universal and existential control transitions to be fired in sequence: $u_1^\bot$ or $u_1^\top$, then $e_1^\bot$ or $e_1^\top$, then $u_2^\bot$ or $u_2^\top$, etc., until $e_k^\bot$ or $e_k^\top$ is fired.

The loading transitions reset all later occurring dummy clause places.
For example, $\ell_{y_i}$ resets $d_{x_i}, d_{y_{i+1}}, d_{x_{i+1}}, \ldots, d_{y_k}, d_{x_k}$.
Similarly, this forces the loading transitions to also be fired in sequence: $\ell_{\overline y_1}$ or $\ell_{y_1}$, then $\ell_{\overline x_1}$ or $\ell_{x_1}$, then $\ell_{\overline y_2}$ or $\ell_{y_2}$, until $\ell_{\overline x_k}$ or $\ell_{x_k}$ is fired.
This is due to the fact that all dummy places must be non-empty to fire the satisfaction transition.

Finally, the satisfaction transition resets all clause places. 
It could be the case that a clause contains two true literals under an assignment, so the clause place contains two tokens.
It is necessary to clear such a place.
Note that the final place $f$ cannot be reset.

\subparagraph*{Coverability instance $(\Ww, \vec m, \vec n)$.}
We have just defined the \rawn{} $\Ww$.
The initial marking $\vec{m}$ only has one token in the initial place; $\vec{m}[h_1] = 1$ and, for all $p \in P \setminus \set{h_1}, \vec{m}[p] = 0$.
The target marking $\vec{n}$ only has $2^k$ tokens in the final place; $\vec{n}[f] = 2^k$ and, for all $p \in P \setminus \set{f}, \vec{n}[p] = 0$.

\paragraph*{Part One: Coverability implies QBF is true}

We would like to prove an inductive statement of the following, informally described, kind.
Consider any run from the initial marking that covers the target marking.
Let $\sigma$ be an infix of this run from $\vec{p}$ to $\vec{q}$, and let $i$ be a number in $[0, k]$ such that $2^i$ divides $\vec{p}[f]$ and that $\vec{q}[f] = \vec{p}[f] + 2^i$.
This means that $\sigma$ fires the satisfaction transition, $s$, $2^i$ many times.
Then the following (partial) QBF is true:
\begin{equation*}
	\forall y_{k-i+1} \; \exists x_{k-i+1} \ldots
 	\forall y_{k}     \; \exists x_{k} :
 	\phi(\beta_1, \alpha_1, \ldots, \beta_{k-i}, \alpha_{k-i}, y_{k-i+1}, x_{k-i+1}, \ldots, y_k, x_k).
\end{equation*}
Here $(\beta_1, \alpha_1, \ldots, \beta_{k-i}, \alpha_{k-i}) \in \set{0, 1}^{2(k-i)}$ is determined by $\vec{p}$.

To realise this plan, we need several ingredients.
For the base case of the induction, $i = 0$: $\sigma$ only fires $s$ once.
We will determine $(\beta_1, \alpha_1, \beta_2, \alpha_2, \ldots, \beta_k, \alpha_k) \in \set{0, 1}^{2k}$ based on $\vec{p}$, in particular on which of the places $\overline b_i$ and $b_i$, as well as $\overline a_i$ and $a_i$, are non-empty in $\vec{p}$.
Note that it might not be sufficient to consider only the marking $\vec{p}$ since this could be, for instance, the initial marking $\vec{m}$, which has all places empty, bar $h_1$.
So the ``existential decisions'' that determine $\alpha_1, \alpha_2, \ldots, \alpha_k$ need to be found from a prefix of $\sigma$.

For the inductive step, $i > 0$: the infix $\sigma$ fires the satisfaction transition $2^i$ times.
We will split $\sigma$ in two: $\sigma_0$ and $\sigma_1$.
We will use the inductive hypothesis on both subruns.
For this to work, we will show that the partial assignments
\begin{align*}
	(\beta_1, \alpha_1, \ldots, \beta_{k-i+1}, \alpha_{k-i+1}) &\in \{0,1\}^{2 (k-i+1)}
	\qquad\text{and}\\
	(\beta'_1, \alpha'_1, \ldots, \beta'_{k-i+1}, \alpha'_{k-i+1}) &\in \{0,1\}^{2 (k-i+1)},
\end{align*}
which are determined based on each half of the run, satisfy the constraints
\begin{quote}
$\beta_1 = \beta'_1$, $\alpha_1 = \alpha'_1$, \ldots, $\beta_{k-i} = \beta'_{k-i}$, $\alpha_{k-i} = \alpha'_{k-i}$,\quad
$\beta_{k-i+1} = 0$,\quad and\quad $\beta'_{k-i+1} = 1$.
\end{quote}
Informally speaking, these partial assignments are complementary with respect to the $i$-th innermost universally quantified variable.
Note that the index variable $i$ is reused in a variety of contexts throughout the following claims. 

\subparagraph*{Properties of markings}
\begin{restatable}{claim}{truexorfalse}\label{clm:true-xor-false}
	If $\vec{v}$ is reachable from $\vec{m}$, then for every $i \in [1, k]$, $\vec{v}[\overline b_i]=0$ or $\vec{v}[b_i]=0$, and $\vec{v}[\overline a_i]=0$ or $\vec{v}[a_i]=0$.
\end{restatable}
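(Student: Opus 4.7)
The plan is to prove the claim by induction on the length of the firing sequence from $\vec{m}$ to $\vec{v}$. The base case is immediate since the initial marking $\vec{m}$ has all literal places empty. For the inductive step, suppose the property holds for $\vec{v}$ and that $\run{\vec{v}}{t}{\vec{v}'}$; I would fix an arbitrary $i \in [1, k]$ and treat the two pairs $(\overline b_i, b_i)$ and $(\overline a_i, a_i)$ symmetrically via a case analysis on $t$.

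For the pair $(\overline b_i, b_i)$, the key structural observation is that, by construction, the only transitions that produce tokens into these two places are the universal control transitions $u_i^\bot$ and $u_i^\top$. Thus if $t$ is any transition other than these two, then $\vec{v}'[\overline b_i] \le \vec{v}[\overline b_i]$ and $\vec{v}'[b_i] \le \vec{v}[b_i]$ (with possible equality or a decrease due to consumption or resets), so the disjunction $\vec{v}'[\overline b_i]=0 \lor \vec{v}'[b_i]=0$ follows directly from the inductive hypothesis. If $t = u_i^\bot$, then because both $\overline b_i$ and $b_i$ lie later in the ordering $\prec$ than $u_i^\bot$ within the universal gadget $U_i$, the reset phase of $t$ empties both places; the production phase then places $2^{k-i}$ tokens into $\overline b_i$ only, so $\vec{v}'[b_i] = 0$. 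The case $t = u_i^\top$ is analogous, yielding $\vec{v}'[\overline b_i] = 0$.

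The identical argument applies to the pair $(\overline a_i, a_i)$: tokens are produced into $\overline a_i$ (resp.\ $a_i$) only by $e_i^\bot$ (resp.\ $e_i^\top$), and both existential control transitions reset both literal places of their gadget $E_i$ before their production phase.

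There is no real obstacle here: the claim is an invariant that rides on the carefully chosen resets in the construction. The only care needed is to make explicit, when citing the resets of $u_i^\bot$, $u_i^\top$, $e_i^\bot$, $e_i^\top$, that both literal places $\overline b_i, b_i$ (resp.\ $\overline a_i, a_i$) appear strictly later than the respective control transition in the ordering $\prec$ while still lying in the same gadget, so that they are indeed in the reset set. Once this is noted, the induction closes cleanly.
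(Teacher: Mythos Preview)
Your proposal is correct and rests on the same structural facts as the paper's proof: only $u_i^\bot$ (resp.\ $u_i^\top$) produces into $\overline b_i$ (resp.\ $b_i$), and each of these transitions resets both literal places before its production phase. The paper phrases the argument slightly differently---it looks at whichever of $u_i^\bot$, $u_i^\top$ was fired most recently and observes that the other literal place must have been emptied at that point and never refilled---while you unroll this into an explicit induction on run length; the content is the same.
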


\begin{restatable}{claim}{collectingtropies}\label{clm:collecting-trophies}
	If $\run{\vec{p}}{t}{\vec{q}}$, then $\vec{q}[f] - \vec{p}[f] \in \set{0,1}$.
\end{restatable}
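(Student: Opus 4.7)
The plan is to carry out a case analysis on the transition $t$, using the fact that $f$ appears essentially in only one place in the construction of $\Ww$, namely as the unique output of the satisfaction transition~$s$.

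First, I would inspect which transitions touch $f$ at all. From the construction, $f$ is the final place of the workflow and the only arc incident to $f$ is $s \to f$, with weight one. Moreover, the description of the resets explicitly states that ``the final place $f$ cannot be reset,'' so no transition has $f$ in its reset set. Consequently, for every transition $t \in T$ we have $\pre{t}[f] = 0$, $f \notin \reset{t}$, and $\post{t}[f] = 0$ unless $t = s$, in which case $\post{s}[f] = 1$.

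From the semantics of firing, $\vec{q}[f]$ is computed as $\vec{q}[f] = \vec{m}''[f] + \post{t}[f]$ where $\vec{m}''[f]$ equals $0$ if $f \in \reset{t}$ and $\vec{p}[f] - \pre{t}[f]$ otherwise. Applying the observations above, if $t \neq s$ then $\vec{q}[f] = \vec{p}[f] + 0 - 0 = \vec{p}[f]$, giving difference $0$; if $t = s$ then $\vec{q}[f] = \vec{p}[f] - 0 + 1 = \vec{p}[f] + 1$, giving difference $1$. In either case $\vec{q}[f] - \vec{p}[f] \in \set{0,1}$, as required.

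There is essentially no obstacle here; the claim is a direct consequence of the way $f$ was designed to participate in the net (as a write-only counter incremented only by~$s$). The role of this claim in the larger argument is to ensure that the value of the counter $f$ can only increase by one per transition, so that the infixes of runs in the planned induction can be meaningfully split at positions where $\vec{p}[f]$ takes prescribed values divisible by powers of two.
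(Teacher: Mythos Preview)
Your proof is correct and takes essentially the same approach as the paper: the paper's proof is the one-line observation that only the satisfaction transition~$s$ produces tokens in~$f$, and it produces exactly one. You have simply spelled out the semantics in full detail, which is fine but not required.
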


Let us define, for each $i \in [1, k]$, two functions $g_i, g'_i:\N^P \rightarrow \N$ that map a marking to a natural number.
We will use these functions to define a collection of \emph{good} markings.
\begin{align*}
	g_i(\vec{v}) \coloneqq {}&
	\vec{v}[f] + \vec{v}[\overline b_i] + \vec{v}[b_i] + \vec{v}[d_{y_i}] + \sum_{j = 1}^{i}2^{k-j} \cdot (2\vec{v}[h_{j}] + \vec{v}[w_j] + \vec{v}[v_j]) - 2^{k-i}\cdot\vec{v}[v_i] \\
	g_i'(\vec{v}) \coloneqq {}&
	\vec{v}[f] + \vec{v}[\overline a_i] + \vec{v}[a_i] + \vec{v}[d_{x_i}] + \sum_{j = 1}^{i}2^{k-j} \cdot (2\vec{v}[h_{j}] + \vec{v}[w_j] + \vec{v}[v_j]) 
\end{align*}
\begin{definition}[Good marking]
        \label{def:good-marking}
	A marking $\vec{v}$ is \emph{good} if for each $i \in [1, k]$, $g_i(\vec{v}) = 2^k$ and $g'_i(\vec{v}) = 2^k$.
	A marking is \emph{bad} if it is not good.
\end{definition}

Roughly speaking, a marking is good if no tokens in the universal gadgets $U_i$ and no tokens in the existential gadgets $E_i$ have been lost due to a reset.
We discuss good markings in more detail in \cref{app:workflow-coverability}.

\begin{restatable}{claim}{invariantcantincrease}\label{clm:invariant-cant-increase}
	Suppose $\run{\vec{p}}{t}{\vec{q}}$, then $g_i(\vec{p}) \geq g_i(\vec{q})$ and $g'_i(\vec{p}) \geq g'_i(\vec{q})$ for each $i \in [1, k]$.
\end{restatable}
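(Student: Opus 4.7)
The plan is to prove \cref{clm:invariant-cant-increase} by a transition-by-transition case analysis, verifying in each case that $g_i(\vec{q}) - g_i(\vec{p}) \leq 0$ and $g_i'(\vec{q}) - g_i'(\vec{p}) \leq 0$. A useful first simplification is that the term $-2^{k-i}\vec{v}[v_i]$ in $g_i$ exactly cancels the $2^{k-i}\vec{v}[v_i]$ contribution coming from the $j=i$ summand, so $v_i$ has zero coefficient in $g_i$; after this cancellation, every coefficient in both $g_i$ and $g_i'$ is a non-negative integer. It follows that any reset contributes only non-positive change, so for each transition $t$ it suffices to upper-bound the change coming from consumption and production and to argue that the resets do not accidentally spoil the bound.

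For the universal control transitions $u_j^\bot, u_j^\top$, the $-2\cdot 2^{k-j}$ contributed by consuming $h_j$ (respectively $-2^{k-j}$ by consuming $w_j$) is matched by the productions into $w_j$ and $v_j$ (respectively just $v_j$) together with $\overline b_j$ or $b_j$ when $j=i$; the coefficients have been chosen so that the consume--produce contribution is exactly zero in all three subcases $j<i$, $j=i$, $j>i$, and resets of later places then only subtract. For the existential control transitions $e_j^\bot, e_j^\top$, the critical balance is that consuming one token from $v_j$ loses weight $2^{k-j}$, which is exactly cancelled by producing a token into $h_{j+1}$, which gains weight $2 \cdot 2^{k-j-1} = 2^{k-j}$; this is the precise role of the factor $2$ in the term $2\vec{v}[h_j]$. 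For the loading transitions, for instance $\ell_{y_j}$, the $-1$ from consuming $b_j$ is balanced by $+1$ produced into $d_{y_j}$ when $j=i$, and for $j\neq i$ the only change to $g_i$ comes from resets of later dummy clause places, which is again non-positive.

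The most subtle case, and the main obstacle, is the satisfaction transition $s$: it produces $+1$ to $f$ and consumes one token from each clause place before resetting all of them, so the net change is $1 - \vec{p}[d_{y_i}]$ for $g_i$ and $1 - \vec{p}[d_{x_i}]$ for $g_i'$, which is \emph{a priori} not unconditionally non-positive. The resolution is that $s$ is enabled only when every clause place, and in particular $d_{y_i}$ and $d_{x_i}$, contains at least one token, so the firability condition itself forces $\vec{p}[d_{y_i}] \geq 1$ and $\vec{p}[d_{x_i}] \geq 1$, yielding the desired inequality. Combining the four transition types completes the proof.
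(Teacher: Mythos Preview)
Your proposal is correct and follows essentially the same approach as the paper's proof: a case analysis over transition types, showing that the consume--produce contribution to each $g_i$ and $g_i'$ is exactly zero and that resets can only decrease these functions. Your explicit observation that all coefficients are non-negative (after noting the $v_i$ cancellation in $g_i$) is a clean way to justify the reset step; note, however, that this already dispatches the $s$ case without any subtlety---the consume--produce part is $+1-1=0$ and the reset of $d_{y_i}$ then subtracts the non-negative remainder $\vec{p}[d_{y_i}]-1$, so there is no need to invoke firability separately as an additional ingredient.
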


\begin{restatable}{claim}{goodinvariant}\label{clm:good-invariant}
	Suppose $\run{\vec{p}}{t}{\vec{q}}$, where $\vec{p}$ is reachable from $\vec{m}$.
	If $\vec{q}$ is good, then $\vec{p}$ is good.
\end{restatable}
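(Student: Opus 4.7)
The plan is to combine Claim~\ref{clm:invariant-cant-increase} (each $g_i$ and $g'_i$ is non-increasing along any single transition) with a direct check that the initial marking $\vec{m}$ attains the value $2^k$ on every $g_i$ and $g'_i$, so that all reachable markings lie in the range $(-\infty, 2^k]$ and being good amounts to hitting the upper bound exactly.

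First I would verify that $\vec{m}$ is good. Since $\vec{m}[h_1] = 1$ and every other place is empty, for each $i \in [1,k]$ only the $j=1$ summand of the sum is non-zero, giving
\begin{equation*}
    g_i(\vec{m}) \;=\; 2^{k-1} \cdot 2\vec{m}[h_1] \;=\; 2^k,
    \qquad
    g'_i(\vec{m}) \;=\; 2^{k-1} \cdot 2\vec{m}[h_1] \;=\; 2^k.
\end{equation*}
Hence $\vec{m}$ is good according to~\cref{def:good-marking}.

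Next, by iterating Claim~\ref{clm:invariant-cant-increase} along any run $\vec{m} \to^{*} \vec{p}$, one obtains that $g_i(\vec{p}) \leq g_i(\vec{m}) = 2^k$ and $g'_i(\vec{p}) \leq g'_i(\vec{m}) = 2^k$ for every $i \in [1, k]$. In other words, the values of $g_i$ and $g'_i$ on any marking reachable from $\vec{m}$ are bounded above by $2^k$.

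Finally, suppose $\run{\vec{p}}{t}{\vec{q}}$ with $\vec{p}$ reachable from $\vec{m}$ and $\vec{q}$ good. Fix any $i \in [1, k]$. Claim~\ref{clm:invariant-cant-increase} applied to this single step gives $g_i(\vec{p}) \geq g_i(\vec{q}) = 2^k$, and the previous paragraph gives $g_i(\vec{p}) \leq 2^k$; therefore $g_i(\vec{p}) = 2^k$. The same sandwich argument yields $g'_i(\vec{p}) = 2^k$. Since $i$ was arbitrary, $\vec{p}$ is good. The only real obstacle is the base-case computation at $\vec{m}$; the rest is an immediate two-sided squeeze using the monotonicity already established.
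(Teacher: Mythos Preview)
Your proof is correct and follows essentially the same approach as the paper's: both combine the goodness of $\vec{m}$ with the monotonicity from \cref{clm:invariant-cant-increase} to sandwich $g_i(\vec{p})$ and $g'_i(\vec{p})$ at $2^k$. The only cosmetic differences are that the paper phrases the argument by contraposition and asserts $\vec{m}$ is good without the explicit computation you provide.
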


Given~\cref{clm:good-invariant} and since the target marking $\vec{n}$ is good, only good markings can be observed on a covering run from the initial marking $\vec{m}$.
From this, we know that if a bad marking is ever reached, the target marking cannot be covered.

\begin{restatable}{claim}{reachtarget}\label{clm:reach-target}
	If $\run{\vec{m}}{\pi}{\vec{n}'}$ where $\vec{n}' \geq \vec{n}$, then $\vec{n}' = \vec{n}$.
\end{restatable}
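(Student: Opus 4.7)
The plan is to combine \cref{clm:invariant-cant-increase} (monotonicity of the potentials $g_i$ and $g_i'$) with a case analysis of the last transition fired in the covering run. The monotonicity invariants will pin down every coordinate of $\vec{n}'$ except the non-dummy clause places $c_1, \ldots, c_m$, and the last-transition analysis will clean those up.

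A short direct computation using $\vec{m}[h_1] = 1$ and all other coordinates of $\vec{m}$ being zero gives $g_i(\vec{m}) = g_i'(\vec{m}) = 2^k$ for every $i \in [1,k]$. I would then observe that inside $g_i(\vec{v})$ the subtracted term $-2^{k-i}\vec{v}[v_i]$ cancels with the $j = i$ contribution of the sum, yielding
\[
  g_i(\vec{v}) = \vec{v}[f] + \vec{v}[\overline b_i] + \vec{v}[b_i] + \vec{v}[d_{y_i}]
  + \sum_{j=1}^{i-1} 2^{k-j}\bigl(2\vec{v}[h_j] + \vec{v}[w_j] + \vec{v}[v_j]\bigr)
  + 2^{k-i}\bigl(2\vec{v}[h_i] + \vec{v}[w_i]\bigr),
\]
a sum of non-negative terms; $g_i'$ is visibly such a sum already. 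Combining $g_i(\vec{n}') \leq g_i(\vec{m}) = 2^k$ from \cref{clm:invariant-cant-increase} with the covering hypothesis $\vec{n}'[f] \geq \vec{n}[f] = 2^k$, each summand of $g_i(\vec{n}')$ and $g_i'(\vec{n}')$ must vanish. Ranging $i$ over $[1,k]$ and using both potentials, this establishes $\vec{n}'[f] = 2^k$ together with $\vec{n}'[\overline b_i] = \vec{n}'[b_i] = \vec{n}'[\overline a_i] = \vec{n}'[a_i] = \vec{n}'[d_{y_i}] = \vec{n}'[d_{x_i}] = \vec{n}'[h_i] = \vec{n}'[w_i] = \vec{n}'[v_i] = 0$ for every $i \in [1,k]$.

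The remaining coordinates are the clause places $c_1, \ldots, c_m$, which appear in none of the potentials; this is the main obstacle. Here I would analyse the last transition $t$ of the covering run, which exists because $\vec{n}'[f] = 2^k > 0 = \vec{m}[f]$. Any $t \neq s$ produces, after its own reset phase, a positive number of tokens to some place that the previous paragraph forces to be zero in $\vec{n}'$: $u_i^\bot$ produces one token to $w_i$; $u_i^\top$ produces one token to $v_i$; $e_i^\bot$ produces $2^{k-i}$ tokens to $\overline a_i$; $e_i^\top$ produces $2^{k-i}$ tokens to $a_i$; and a loading transition $\ell_\lambda$ for $\lambda \in \{\overline y_i, y_i, \overline x_i, x_i\}$ produces one token to its own dummy $d_{y_i}$ or $d_{x_i}$ (which that loading transition does not itself reset, since loading transitions reset only dummies strictly later than their own). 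Each case contradicts the coordinate constraints above, so $t = s$; then $s$ resets every clause place, forcing $\vec{n}'[c_j] = 0$ for every $j$. Combined with the invariant analysis, this yields $\vec{n}' = \vec{n}$.
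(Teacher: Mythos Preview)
Your proof is correct and follows essentially the same approach as the paper: use the non-increasing potentials $g_i, g_i'$ together with $\vec{n}'[f] \geq 2^k$ to force all gadget places to zero, then argue that the last transition must be $s$ (since every other transition leaves a positive token in some gadget place), which resets the remaining clause places. Your rewriting of $g_i$ to exhibit it as a sum of non-negative terms, and your explicit per-transition case analysis at the end, are just a slightly more detailed rendering of the paper's terser ``all other transitions produce a token to a place in $U_1 \cup E_1 \cup \cdots \cup U_k \cup E_k$'' step.
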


The following claim shows that resetting any non-empty place in any of the universal or existential gadgets results in a bad marking.
Recall $\prec$, the previously defined \emph{earlier than} ordering of places and transitions.
\begin{restatable}{claim}{badtransitions}\label{clm:bad-transitions}
	Suppose $\run{\vec{p}}{t}{\vec{q}}$ where $\vec{p}$ is reachable and $t \in \set{ u_i^\bot, u_i^\top, e_i^\bot, e_i^\top : i \in [1, k]}$.
	If there exists $p \in U_1 \cup E_1 \cup \cdots \cup U_k \cup E_k$ such that $t \prec p$ and $\vec{p}[p] \geq 1$, then $\vec{q}$ is bad.
\end{restatable}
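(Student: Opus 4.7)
My plan is a case-split on whether $\vec p$ is itself good. If $\vec p$ is bad, the contrapositive of \cref{clm:good-invariant}, applied to the one-step run $\run{\vec p}{t}{\vec q}$, immediately yields that $\vec q$ is bad. So from now on I assume $\vec p$ is good; equivalently, $g_l(\vec p) = g_l'(\vec p) = 2^k$ for every $l \in [1,k]$, and it suffices to pick some $l$ and show $g_l(\vec q) < 2^k$ or $g_l'(\vec q) < 2^k$.

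Next I would exploit the structure of the resets. Since $t \in \set{u_j^\bot, u_j^\top, e_j^\bot, e_j^\top}$ is a control transition, the construction forces $\reset{t}$ to contain every place of $U_i \cup E_i$ strictly later than $t$ in $\prec$ (plus every dummy clause place), so in particular $p \in \reset{t}$. Additionally, a control transition consumes only from $h_j$, $w_j$, or $v_j$, all of which sit before $t$ in $\prec$; combined with $t \prec p$, this gives $\pre{t}[p] = 0$. So firing $t$ destroys all $\vec p[p] \geq 1$ tokens at $p$ without any compensating consumption, leaving exactly $\post{t}[p]$ tokens at $p$.

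I would then select the invariant carefully, based on where $p$ lives. Reading the definitions of $g_l, g_l'$, every place of $U_i = \set{h_i, w_i, b_i, \overline b_i, d_{y_i}}$ has strictly positive coefficient in $g_i$, and every place of $E_i = \set{v_i, a_i, \overline a_i, d_{x_i}}$ has strictly positive coefficient in $g_i'$. The genuinely subtle case is $p = v_i$: in $g_i$ the term $2^{k-i}\vec v[v_i]$ from the summation cancels with the explicit $-2^{k-i}\vec v[v_i]$, making $v_i$ have coefficient zero in $g_i$. This is exactly why the invariant must be $g_i'$ rather than $g_i$ when $p = v_i$, since $v_i$ has coefficient $2^{k-i}$ in $g_i'$. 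In every other case, the natural choice $g_i$ or $g_i'$ (matching the type of $p$'s gadget) works.

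Finally I would quantify the decrease. Reusing the accounting behind \cref{clm:invariant-cant-increase}, a per-transition check shows that each of $u_j^\bot, u_j^\top, e_j^\bot, e_j^\top$ leaves $g_l$ (and $g_l'$) exactly invariant in the absence of resets, so the entire change comes from the reset phase:
\begin{equation*}
	g_l(\vec q) - g_l(\vec p) = -\sum_{q \in \reset{t}} c_q \cdot (\vec p[q] - \pre{t}[q]),
\end{equation*}
where $c_q \geq 0$ denotes the coefficient of $q$ in $g_l$ and every summand is non-positive because $\vec p[q] \geq \pre{t}[q]$. Extracting the summand at $q = p$, which equals $-c_p \cdot \vec p[p] \leq -c_p < 0$, forces $g_l(\vec q) < 2^k$ and hence $\vec q$ is bad. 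The main obstacle is the $v_i$-cancellation inside $g_i$: it is the only configuration where the obvious invariant fails, and the switch to $g_i'$ must be made explicit (together with a uniform verification of the no-reset identity for all four control transition types).
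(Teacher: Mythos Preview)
Your proposal is correct and follows essentially the same approach as the paper: reduce to the case where $\vec p$ is good via \cref{clm:good-invariant}, choose $g_l$ if $p\in U_l$ and $g_l'$ if $p\in E_l$, observe that the control transition's consumption/production effect on that invariant is exactly zero, and conclude that the reset of the non-empty place $p$ forces a strict drop. Your explicit formula for $g_l(\vec q)-g_l(\vec p)$ and your remark about the $v_i$ cancellation in $g_i$ (forcing the switch to $g_i'$) make precise what the paper phrases as ``zero effect on $g_j$ can only be achieved when the negative contributions \ldots\ sum to $2^{k-i+1}$, [which] is already achieved by the consumption''; the two arguments are the same in substance.
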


\paragraph*{Extracting Assignments from Markings}
We will now explain the relationship between markings and partial assignments.
For a good marking $\vec{v}$, let $\mathrm{val}(\vec{v})$ be the vector $(\beta_1, \alpha_1, \beta_2, \alpha_2, \ldots, \beta_k, \alpha_k) \in \set{0,1,?}^{2k}$ such that
\begin{equation*}
	\beta_i \coloneqq
	\begin{cases} 
		0 & \vec{v}[\overline b_i] \geq 1 \\ 
		1 & \vec{v}[b_i] \geq 1 \\
		? & \text{otherwise,}
	\end{cases}
	\quad\text{and}\quad
	\alpha_i \coloneqq
	\begin{cases}
		0 & \vec{v}[\overline a_i] \geq 1 \\ 
		1 & \vec{v}[a_i] \geq 1 \\
		? & \text{otherwise.}
	\end{cases}
\end{equation*}
The intention is that, for every $i \in [1, k]$, $\beta_i$ and $\alpha_i$ correspond to the values of the Boolean variables $y_i$ and $x_i$, respectively.
Note that~\cref{clm:true-xor-false} ensures that $\beta_i$ and $\alpha_i$ are well-defined, since, for example, $\overline b_i$ and $b_i$ cannot both be non-empty in a reachable marking.
Notice that not all good markings correspond to fully defined variable assignments, but only those in which all $h_i$ and $v_i$ are empty.
We will see that $\vec p[h_i] = \vec p[v_i] = 0$ implies that either $\overline b_i$ or $b_i$ and either $\overline a_i$ or $a_i$ are non-empty, except for right at the end, for example when the target marking $\vec{n}$ is reached. 
Conversely, if $h_i$ contains a token, then neither $\overline b_i$ nor $b_i$ will contain a token (one can think that the Boolean variable $y_i$ has not yet been assigned its value).
Only after firing $u_i^\bot$ does it first get assigned false (before later being assigned true when $u_i^\top$ is eventually fired).

Recall that $C \subseteq P$ is the collection of clause places.
We say that a marking $\vec{v}$ is \emph{clause-free} if $\vec{v}[c] = 0$ for all $c \in C$.

\begin{restatable}{lemma}{qbfinduction}\label{lem:qbf-induction}
	Fix $i \in [0, k]$ and suppose $\run{\vec{p}}{\sigma}{\vec{q}}$ and the following properties hold:
	\begin{enumerate}[(1)]
		\item $\vec{p}$ is a clause-free marking that is reachable from $\vec{m}$;
		\item $\vec{n}$ is coverable from $\vec{q}$;
		\item\label{div-and-diff} $2^i$ divides $\vec{p}[f]$ and $\vec{q}[f] = \vec{p}[f] + 2^i$;
		\item\label{last-fire} the last transition of $\sigma$ is $s$;
		\item\label{enough-fuel} for all $j \in [1, k-i]$, $\vec{p}[\overline b_i] + \vec{p}[b_i] \geq 2^i$, and $\vec{p}[\overline a_i] + \vec{p}[a_i] \geq 2^i$;
		\item if $i > 0$, then $\vec{p}[h_{k-i+1}] = 1$; and
		\item if $i > 0$, then, for all $p \in U_{k-i+1} \cup E_{k-i+1} \cup \cdots \cup U_k \cup E_k$ except $h_{k-i+1}$, $\vec{p}[p] = 0$.
	\end{enumerate}
	Let $\mathrm{val}(\vec{p}) = (\beta_1, \alpha_1, \ldots, \beta_k, \alpha_k)$.
        Then the following QBF evaluates to true:
	\begin{equation*}
		\forall y_{k-i+1} \, \exists x_{k-i+1} \, \ldots \, \forall y_k \, \exists x_k : \phi(\beta_1, \alpha_1, \ldots, \beta_{k-i}, \alpha_{k-i}, y_{k-i+1}, x_{k-i+1}, \ldots, y_k, x_k).
	\end{equation*}
	Moreover, $\sigma$ does not fire transitions $u_1^\bot, u_1^\top, e_1^\bot, e_1^\top, \ldots, u_{k-i}^\bot, u_{k-i}^\top, e_{k-i}^\bot, e_{k-i}^\top$.
\end{restatable}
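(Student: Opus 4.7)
I proceed by induction on $i \in [0, k]$. The key invariant used throughout is that every marking along a covering run must be \emph{good}: since $\vec{n}$ is coverable from $\vec{q}$ by hypothesis~(2), \cref{clm:reach-target} and \cref{clm:good-invariant} force every marking visited during and after $\sigma$ to be good, and \cref{clm:bad-transitions} then forbids firing any control transition $u_j^?$ or $e_j^?$ at a marking where some later-occurring gadget place still holds a token.

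\textbf{Base case ($i = 0$).} Here $\sigma$ fires $s$ exactly once. Applying the invariant together with hypothesis~(5) (which, combined with \cref{clm:true-xor-false}, forces exactly one of $\overline b_j, b_j$ and exactly one of $\overline a_j, a_j$ to be non-empty at $\vec{p}$ for each $j$), a smallest-counterexample argument on the first control firing in $\sigma$ shows that no control transition can fire in $\sigma$. This establishes the ``moreover'' clause. Consequently $\mathrm{val}(\vec{p})$ is unchanged throughout $\sigma$, and every loading transition $\ell_\lambda$ firing in $\sigma$ must have $\lambda$ true under $\mathrm{val}(\vec{p})$. Since $s$ requires every clause place to be non-empty, every clause has been populated by such a loading, so $\phi(\mathrm{val}(\vec{p}))$ evaluates to true.

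\textbf{Inductive step ($i > 0$).} The same reset argument rules out firings of $u_j^?, e_j^?$ for all $j \le k - i$, giving the ``moreover'' clause. Next, since $h_{k-i+1}$ starts with a single token and is never replenished (no $e_{k-i}^?$ fires), and $w_{k-i+1}$ is produced only by $u_{k-i+1}^\bot$, both $u_{k-i+1}^\bot$ and $u_{k-i+1}^\top$ can fire at most once. But each of the $2^i$ satisfactions requires a fresh token on $d_{y_{k-i+1}}$, populated only by $\ell_{\overline y_{k-i+1}}$ or $\ell_{y_{k-i+1}}$, whose input tokens come exclusively from $u_{k-i+1}^\bot$ ($2^{i-1}$ tokens to $\overline b_{k-i+1}$) and $u_{k-i+1}^\top$ ($2^{i-1}$ tokens to $b_{k-i+1}$); hence both controls fire exactly once. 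I then decompose $\sigma = \tau_\bot \cdot \sigma_0 \cdot \tau_\top \cdot \sigma_1$ by locating these two unique firings together with their accompanying existential transitions from $E_{k-i+1}$. Verifying the seven hypotheses of the inductive hypothesis with parameter $i - 1$ at the starting markings of $\sigma_0$ and $\sigma_1$ is the technical heart of the argument: (6) and (7) follow from $h_{k-i+2}$ having been freshly populated while later gadget places are empty, and (5) follows from top-level (5) together with the $2^{i-1}$ newly produced tokens in $\overline b_{k-i+1}$ and in $\overline a_{k-i+1}$ or $a_{k-i+1}$. Applying the inductive hypothesis to $\sigma_0$ and $\sigma_1$ yields the truth of the inner QBF under $y_{k-i+1} = 0$ and under $y_{k-i+1} = 1$ respectively, each accompanied by a witnessing choice of $x_{k-i+1}$; together these yield $\forall y_{k-i+1} \exists x_{k-i+1} \cdots \phi$.

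\textbf{Main obstacle.} The principal difficulty is verifying hypothesis~(5) at the two split points: I must control how many ``early'' loading transitions for indices $j \le k - i$ fire inside $\tau_\bot$ and inside $\tau_\top$ without wasting literal tokens beyond what the tight $2^i$-satisfaction budget allows. This will rely on the observation that any loading whose produced dummy-clause token is subsequently reset by an intervening control transition wastes a literal token, so the budget implied by hypothesis~(5) combined with the exact counting of one loading per literal index per satisfaction leaves no room for this waste.
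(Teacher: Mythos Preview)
Your overall architecture matches the paper's: induction on~$i$, use the goodness invariant together with \cref{clm:bad-transitions} to constrain the shape of~$\sigma$, then split~$\sigma$ in two and apply the inductive hypothesis to each half. The base case is essentially right.

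There is, however, a genuine gap in your inductive step, and it lies precisely where you flag the ``main obstacle''. Your budget argument for excluding stray loading transitions in $\tau_\bot$ and $\tau_\top$ assumes that hypothesis~(5) is \emph{tight}, i.e.\ that $\vec p[\overline b_j]+\vec p[b_j] = 2^i$. But~(5) only gives $\ge 2^i$, so in principle there could be surplus literal tokens available to waste, and your counting argument does not close. The paper avoids this entirely: rather than counting, it argues transition-by-transition that at~$\vec p$ the \emph{only} fireable transition is $u_{k-i+1}^\bot$, then only $e_{k-i+1}^\bot$ or $e_{k-i+1}^\top$, and so on. The key step you are missing is that if a loading transition fires before the next control transition, it deposits a token in some dummy place $d_{y_j}$ or $d_{x_j}$; this place lies in $U_j \cup E_j$ and occurs later than every control transition in the $\prec$ order, so when the next control transition fires it resets a non-empty gadget place, and \cref{clm:bad-transitions} gives a bad marking directly --- no budget needed. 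You already invoke \cref{clm:bad-transitions} to block early control firings; you should invoke it again here, instead of the counting argument.

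This gap cascades: because you do not establish that $\tau_\bot$ and $\tau_\top$ contain \emph{only} the two control transitions, you do not verify condition~(1) (clause-freeness) at the start of $\sigma_0$ and $\sigma_1$ --- a stray loading in $\tau_\bot$ would populate some non-dummy clause place $c_l$, which control transitions do not reset. You also do not verify condition~(4), that $\sigma_0$ ends with~$s$. The paper handles this by defining the split point as the marking reached immediately after the $2^{i-1}$-th firing of~$s$, so~(4) holds by construction, and then proves that from this marking the only fireable transition is $u_{k-i+1}^\top$; your split at the control firing requires the converse argument, which again follows from \cref{clm:bad-transitions} (any non-$s$ transition immediately before $u_{k-i+1}^\top$ would leave a later gadget place non-empty), but you do not supply it.
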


\paragraph*{Part Two: QBF is true implies Coverability}
Here we would like to recover a firing sequence for coverability if the QBF evaluates to true.
Depending on the current assignment of the universally quantified variables, $y_1, \ldots y_i$, and the already selected assignments of the existentially quantified variables $x_1, \ldots, x_{i-1}$, one can use the truth of the QBF to determine whether $x_i$ is assigned true or false. 
This choice informs which of the next existentially quantified transitions to fire.

\begin{restatable}{lemma}{coverabilityrun}\label{lem:coverability-run}
	Fix $i \in [0,k]$ and suppose that for some $\beta_1, \alpha_1, \ldots, \beta_{k-i}, \alpha_{k-i} \in \set{0, 1}$, the following QBF evaluates to true:
	\begin{equation*}
		\forall y_{k-i+1} \, \exists x_{k-i+1} \, \ldots \, \forall y_k \, \exists x_k : \phi(\beta_1, \alpha_1, \ldots, \beta_{k-i}, \alpha_{k-i}, y_{k-i+1}, x_{k-i+1}, \ldots, y_k, x_k).
	\end{equation*}
	Let $\vec{p}$ be a marking such that, if $i > 0$, then $\vec{p}[h_{k-i+1}] = 1$, and, for every $j \in [1, k-i]$:
	\begin{enumerate}[(1)]
		\item if $\beta_j = 0$, then $\vec{p}[\overline b_j] \geq 2^i$, otherwise if $\beta_j = 1$, then $\vec{p}[b_j] \geq 2^i$; and
		\item if $\alpha_j = 0$, then $\vec{p}[\overline a_j] \geq 2^i$, otherwise if $\alpha_j = 1$, then $\vec{p}[a_j] \geq 2^i$.
	\end{enumerate}
	Then there exists a firing sequence $\sigma$ such that $\run{\vec{p}}{\sigma}{\vec{q}}$ where $\vec{q}$ is a marking such that $\vec{q}[f] = \vec{p}[f] + 2^i$, and for every $j \in [1, k-i]$:
	\begin{enumerate}[(a)]
		\item $\vec{q}[\overline b_j] + \vec{q}[b_j] = \vec{q}[\overline b_j] + \vec{q}[b_j] - 2^i$; 
		\item $\vec{q}[\overline a_j] + \vec{q}[a_j] = \vec{q}[\overline a_j] + \vec{q}[a_j] - 2^i$; and
		\item $\vec{q}[h_j] = \vec{p}[h_j]$, $\vec{q}[w_j] = \vec{p}[w_j]$, and $\vec{q}[v_j] = \vec{p}[v_j]$.
	\end{enumerate}
\end{restatable}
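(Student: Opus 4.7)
My plan is to proceed by induction on $i$, constructing the required firing sequence top-down by consulting the existential witnesses guaranteed by the QBF at each level of universal quantification. In the base case $i=0$, the hypothesis reduces to $\phi(\beta_1, \alpha_1, \ldots, \beta_k, \alpha_k)$ being satisfied and $\vec{p}$ having at least one token in each literal place matching $\beta_j, \alpha_j$. I would fire the $2k$ loading transitions $\ell_{y_j}$ or $\ell_{\overline y_j}$ and $\ell_{x_j}$ or $\ell_{\overline x_j}$ chosen according to the assignment, in the order $\prec$. Each fire consumes a token from the corresponding literal place and populates every clause place for a clause in which that literal appears, including the associated dummy clause place. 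Since every clause (real by hypothesis, dummy by construction) is satisfied, each clause place ends up non-empty, so $s$ can then be fired once, consuming a token from each clause place, resetting them, and incrementing $f$ by one. Respecting the order $\prec$ is essential: each loading transition resets the \emph{later} dummy clause places, so firing out of order would destroy tokens already produced.

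For the inductive step $i > 0$, the QBF has shape $\forall y_{k-i+1}\,\exists x_{k-i+1}\,\ldots$, so there exist witnesses $\gamma_0, \gamma_1 \in \set{0,1}$ making the remainder of the QBF true under $y_{k-i+1} = 0, x_{k-i+1} = \gamma_0$ and $y_{k-i+1} = 1, x_{k-i+1} = \gamma_1$, respectively. The firing sequence I build has two symmetric phases. In the first phase, I fire $u_{k-i+1}^\bot$ (emptying $h_{k-i+1}$, placing one token into each of $w_{k-i+1}$ and $v_{k-i+1}$, and producing $2^{i-1}$ tokens into $\overline b_{k-i+1}$), then fire the existential control transition corresponding to $\gamma_0$ (emptying $v_{k-i+1}$, producing $2^{i-1}$ tokens into the appropriate literal place for $x_{k-i+1}$, and producing one token into $h_{k-i+2}$ when $i > 1$). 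The preconditions of the inductive hypothesis at level $i-1$ with extended assignment $(\beta_1, \alpha_1, \ldots, \beta_{k-i}, \alpha_{k-i}, 0, \gamma_0)$ now hold, so I invoke it, obtaining a subrun that adds $2^{i-1}$ to $f$, subtracts $2^{i-1}$ from each literal pair for $j \le k-i+1$, and leaves $h_j, w_j, v_j$ unchanged for $j \le k-i+1$ (in particular $w_{k-i+1}$ retains the token deposited earlier). The second phase mirrors the first: fire $u_{k-i+1}^\top$, then the existential control transition for $\gamma_1$, and apply the inductive hypothesis a second time with assignment $(\beta_1, \alpha_1, \ldots, \beta_{k-i}, \alpha_{k-i}, 1, \gamma_1)$. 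Summing the contributions, $f$ gains $2^i$ and each literal pair for $j \le k-i$ loses exactly $2^i$, matching conditions~(a)--(c).

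The main obstacle I anticipate is justifying that the resets attached to the universal and existential control transitions fired in the second phase---most notably $u_{k-i+1}^\top$, which resets every later place in the universal and existential gadgets---do not annihilate tokens we still need for the second recursive call or for conditions~(a)--(c). This amounts to showing that, after the subrun from the first recursive call has finished, every place in the deeper gadgets $U_{k-i+2}, E_{k-i+2}, \ldots, U_k, E_k$ is empty. The firing pattern dictated by the recursion naturally yields this invariant: each inner control transition consumes exactly the control token it was triggered with, and every literal token produced inside the subrun is consumed by a matching loading transition before the subrun terminates. I would either strengthen the statement of the lemma to include an explicit emptiness invariant on the deeper gadgets, or prove it as a companion claim by the same induction. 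Once this is in place, the remaining verification of conditions~(a)--(c) is a routine accounting of token changes against the witnesses supplied by the QBF.
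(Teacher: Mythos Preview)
Your approach is essentially the same as the paper's: induction on $i$, with the base case firing the loading transitions in order followed by $s$, and the inductive step building $\sigma = (u_{k-i+1}^\bot,\, e,\, \sigma_0,\, u_{k-i+1}^\top,\, e',\, \sigma_1)$ from two applications of the hypothesis at level $i-1$, with $e,e'$ chosen by the existential witnesses.

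The obstacle you anticipate, however, is not a real one, and the paper's proof does not introduce any emptiness invariant or companion claim. The point is that in this direction you do \emph{not} need the deeper gadgets $U_{k-i+2},E_{k-i+2},\ldots,U_k,E_k$ to be empty after $\sigma_0$: the reset performed by $u_{k-i+1}^\top$ simply wipes whatever is there, and this is harmless. The only tokens required for the second recursive call are (i) those in $\overline b_j,b_j,\overline a_j,a_j$ for $j\le k-i$, which are \emph{earlier} than $u_{k-i+1}^\top$ in the order $\prec$ and hence untouched by its reset; and (ii) the fresh tokens in $b_{k-i+1}$, in $\overline a_{k-i+1}$ or $a_{k-i+1}$, and in $h_{k-i+2}$, which are produced \emph{after} the reset by $u_{k-i+1}^\top$ and $e'$ themselves. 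Likewise, conditions (a)--(c) concern only indices $j\le k-i$, whose places are all earlier than any transition fired outside the recursive calls. So the resets work in your favour here (clearing junk), not against you; no strengthening is needed.
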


\paragraph*{Completing the proof}
\begin{proof}[Proof of~\cref{thm:workflow-coverability}]
    The reduction from QSAT is already outlined above.
	Given an instance of QSAT that consists of a QBF $\phi$ over $y_1, x_1, \ldots, y_k, x_k$, there exists an instance of coverability in \rawn{s} $(\Ww, \vec{m}, \vec{n})$ such that $\forall y_1 \, \exists x_1 \, \ldots \, \forall y_k \, \exists x_k: \phi(y_1, x_1, \ldots, y_k, x_k)$ evaluates to true if and only if $\run{\vec{m}}{*}{\vec{n}'}$ in $\Ww$ where $\vec{n}' \geq \vec{n}$.
	The backwards implication is given by~\cref{lem:qbf-induction} with $i = k$, $\vec{p} = \vec{m}$, and $\vec{q} = \vec{n}'$.
	The forwards implication is given by~\cref{lem:coverability-run} with $i = k$.
\end{proof}

\begin{corollary}
	Reachability in \rawn{s} and coverability in both \rapn{s} and \rawn{} are all \class{PSPACE}-complete.
	\label{cor:pspace-complete}
\end{corollary}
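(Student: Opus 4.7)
The plan is to assemble the three PSPACE-completeness results by combining the four theorems already established. For the upper bounds, I would invoke \cref{thm:workflow-reachability} for reachability in \rawn{s} and \cref{thm:petri-coverability} for coverability in \rapn{s}. The remaining upper bound, coverability in \rawn{s}, follows immediately because every \rawn{} is in particular an \rapn{} once the distinguished roles of $\I$ and $\Fin$ are forgotten; thus \cref{thm:petri-coverability} applies to coverability in \rawn{s} as well.

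For the lower bounds, \cref{thm:workflow-coverability} directly supplies PSPACE-hardness for coverability in \rawn{s}. Since every \rawn{} is also an \rapn{}, the identical instance produced by that reduction witnesses PSPACE-hardness for coverability in \rapn{s}. The only remaining case is reachability in \rawn{s}. Here the key observation is that the QBF reduction of \cref{thm:workflow-coverability} already produces an instance for which coverability and reachability of the target marking coincide: by \cref{clm:reach-target}, if $\run{\vec m}{\pi}{\vec{n}'}$ with $\vec{n}' \geq \vec{n}$, then $\vec{n}' = \vec{n}$. Consequently the same reduction simultaneously establishes PSPACE-hardness for reachability in \rawn{s}.

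Because the heavy lifting has already been performed in the earlier theorems and their supporting claims, there is essentially no obstacle. The only subtle point is that one might expect the reachability lower bound in \rawn{s} to require a separate tailored reduction (for instance, one introducing auxiliary places to absorb extra tokens while preserving acyclicity); the structural rigidity of the QBF construction, as captured by \cref{clm:reach-target}, makes any such extra work unnecessary.
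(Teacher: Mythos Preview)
Your proposal is correct and matches precisely the way the paper intends the corollary to follow from the surrounding results; the paper states \cref{cor:pspace-complete} without a separate proof, leaving the combination of \cref{thm:workflow-reachability}, \cref{thm:petri-coverability}, and \cref{thm:workflow-coverability} implicit. Your use of \cref{clm:reach-target} to transfer the coverability lower bound to reachability in \rawn{s} is exactly the right observation and is the reason that claim is established in the construction.
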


\subsection{Reachability in \RAPN{s}}
\label{sec:petri-reachability}

In this section, we will prove that reachability in \rapn{s} is undecidable.
We reduce from reachability in Petri nets with zero tests, a problem that is well-known to be undecidable, following from the undecidability of reachability in counter machines~\cite{Minsky67}.
A \emph{Petri net with zero tests} is a tuple $(P, T, F, Z)$, where $(P, T, F)$ is a Petri net and $Z: T \rightarrow 2^P$ is a function defining the zero-test edges.
A transition $t \in T$ zero-tests a place $p \in P$ if $p \in Z(t)$. Then $t$ can be fired only if $p$ is empty.
As is the case for resets, an \emph{\zapn{}} does not subject zero-test edges to the acyclicity restriction.

\petrireach

The reduction is split into two parts.
\cref{lem:petri-simulate-transitions} shows how \zapn{s} can simulate (not necessarily acyclic) Petri nets with zero tests.
This requires using zero tests, transitions that do not consume tokens, and transitions that do not produce tokens.
Then, in~\cref{lem:petri-simulate-zero-tests}, we show how \rapn{s} can simulate \zapn{s}.
This requires some additional places and relies on the reachability objective to ensure that zero tests are simulated faithfully.
The proof is very similar to the proof that reachability in Petri nets with resets is undecidable.
We follow through with the construction to make it clear that acyclicity is preserved.

\begin{lemma}
	The reachability problem in Petri nets with zero tests is reducible in logarithmic space to the reachability problem in \zapn{s}.
	\label{lem:petri-simulate-transitions}
\end{lemma}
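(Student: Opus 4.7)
The plan is to take a Petri net with zero tests $\Nn=(P,T,F,Z)$ with initial marking $\vec{m}$ and target $\vec{n}$ and produce an \zapn{} $\Nn'$ that preserves reachability. The feature of \zapn{s} I would exploit is that only the arc function $F$ is required to be acyclic, while the zero-test function $Z$ may produce cycles in the underlying graph freely. This leaves room to encode the cyclic dynamics of $\Nn$ through zero-test edges alone.

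I would split every original transition $t \in T$ into a \emph{consume} transition $t^c$ that only consumes tokens and a \emph{produce} transition $t^p$ that only produces tokens. For every place $p \in P$ I would keep two copies $p^-$ and $p^+$ in $\Nn'$, and replace every original consumption $F(p,t)>0$ by an arc $p^- \to t^c$ of the same multiplicity, every original production $F(t,q)>0$ by an arc $t^p \to q^+$, and add a fresh pending place $d_t$ with arcs $t^c \to d_t \to t^p$ linking the two halves. Since the sets $\{p^-\}$, $\{t^c\}$, $\{d_t\}$, $\{t^p\}$, $\{p^+\}$ occupy strict topological layers in that order, the graph of $F'$ is a DAG. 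A zero test $p \in Z(t)$ translates into simultaneous zero tests of both $p^-$ and $p^+$ by $t^c$, which correctly checks that the value representing $p$ is zero. To force atomic simulation of each firing of $t$ I would further add, by $t^c$, zero tests on every $d_{t'}$ for $t' \ne t$, which forbids interleaving half-fired transitions; the many cycles these tests create are harmless because they sit in $Z'$, not in $F'$. The initial marking $\vec{m}'$ puts $\vec{m}[p]$ tokens into each $p^-$, and the target $\vec{n}'$ is obtained via a final acyclic collector gadget that pools each pair $p^-,p^+$ into a single sink $p^\star$ with required count $\vec{n}[p]$, while also requiring all $d_t$ to be zero.

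The hard part will be the converse correctness argument, because once a token has been deposited into $p^+$ it can never flow back to $p^-$ in an acyclic arc graph, so the two-copy construction above only directly simulates runs of $\Nn$ in which a place's tokens are never ``recycled'' (consumed after having been produced there). I would address this either by enriching the gadget with zero-test-controlled rebalancing transitions that move tokens from $p^+$ to $p^-$ through fresh intermediate places, arranged so that all arcs go in a single direction and the needed control loops live entirely in $Z'$; or by layering the place copies as $p^{-,\ell}, p^{+,\ell}$ for $\ell = 1, \ldots, L$ and establishing a normal form for runs that bounds the required number of layers by a polynomial. Correctness then consists of (i) turning any run $\vec{m} \to^* \vec{n}$ in $\Nn$ into a run of $\Nn'$ by interleaving each firing of $t$ with the atomic pair $t^c t^p$ and invoking the collector, and (ii) using the zero-test serialisation to project any accepting run of $\Nn'$ back to a valid firing sequence of $\Nn$. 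The whole construction uses $\mathcal{O}(|P|+|T|)$ additional places and transitions and inherits all arc weights from $\Nn$, hence is computable in logarithmic space.
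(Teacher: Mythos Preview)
Your construction has a genuine gap that neither of your proposed fixes can close.

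You correctly identify the problem created by splitting each place into $(p^-, p^+)$: tokens deposited in $p^+$ can never be consumed by a later $t^c$, so any run of $\Nn$ that produces into $p$ and subsequently consumes from $p$ is not simulated. Both repairs fail:
\begin{itemize}
\item \emph{Rebalancing.} Moving a token from $p^+$ to $p^-$ requires an $F'$-path from $p^+$ to $p^-$; zero tests do not move tokens, so the movement itself must live in $F'$, not in $Z'$. But your base construction already contains, for every transition $t$ that consumes from $p$ and produces to $q$, the $F'$-path $p^- \to t^c \to d_t \to t^p \to q^+$. Hence any cycle $p \to t_1 \to q \to t_2 \to \cdots \to p$ in the original net yields, once you insert rebalancing arcs $q^+ \to \cdots \to q^-$, $\ldots$, $p^+ \to \cdots \to p^-$, a directed cycle in $F'$. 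Since the whole point is to handle cyclic $\Nn$, this is fatal. No arrangement of fresh intermediate places avoids it: an $F'$-path from $p^+$ to $p^-$ together with the existing $F'$-path back is a cycle, full stop.
\item \emph{Layering.} Reachability in Petri nets with zero tests is undecidable (it encodes Minsky machines), so there is no computable bound, let alone a polynomial one, on the length of a witnessing run or on the number of times a place's contents are consumed and later replenished. A fixed polynomial number $L$ of layers cannot suffice; any such normal form would contradict undecidability.
\end{itemize}

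The paper's construction sidesteps the issue by \emph{not splitting places at all}. The original places $P$ are kept as a single layer, and each transition $t$ is replaced by three transitions: a source transition $\sim$ that zero-tests a global set $G = \{c_{t'}, p_{t'} : t' \in T\}$ of control places and deposits one token in each of $c_t, p_t$; a pure consumer $\con$ that takes $c_t$ and performs $t$'s consumptions and zero tests on $P$; and a transition $\pro$ that takes $p_t$, zero-tests $c_t$ (forcing $\con$ to have fired first), and performs $t$'s productions into $P$. The topological order on $F'$ is
\[
\text{all }\sim \;\prec\; G \;\prec\; \text{all }\pro \;\prec\; P \;\prec\; \text{all }\con,
\]
which is acyclic because $\con$ has no outgoing $F'$-arcs and $\pro$ has no incoming $F'$-arcs from $P$. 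The original places are thus reused arbitrarily often without any cycle in $F'$; all sequencing (mutual exclusion among pending transitions, and the $\con$-before-$\pro$ order) is enforced purely by zero tests on $G$. This is the key idea your proposal is missing.
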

\begin{proof}
	Let $\Pp = (P, T, F, Z)$ be a Petri net with zero tests.
	We will construct an \zapn{} $\Zz = (P', T', F', Z')$.
	For every transition $t \in T$, we will add two additional places $c_t$ and $p_t$ to the set of places. 
	Formally, we define $G = \set{c_t, p_t \mid t \in T}$ and $P' = P \cup G$.
	For every transition $t \in T$, we create three transitions $\sim$, $\con$, and $\pro$, so $T' = \set{\sim, \con, \pro \mid t \in T}$. 
	The intention is that firing $t \in T$ will be simulated by firing $\sim$, $\con$, and $\pro$ successively.
	\cref{fig:transition-controller} illustrates the construction.
	To define the transitions in detail, fix $t \in T$.

	\begin{itemize}
		\item The transition $\sim$ simulates choosing $t$ to be the next transition. 
		Formally, $\post{\sim}[c_t] = \post{\sim}[p_t] = 1$, and $Z'(\sim) = G$. 
		Note that to fire $\sim$ all places in $G$ must be empty, and upon firing $\sim$, a token is placed in $c_t$ and $p_t$. 
		Thus no other transition $s_\textup{sim}$, $\sim$, or $u_\textup{sim}$ can be fired until the tokens in $c_t$ and $p_t$ are consumed.
		\item The transition $\con$ performs the token consumption and zero tests of $t$.
		Formally, $\pre{\con}[c_t] = 1$, $Z'(\con) = Z(t)$, and $\pre{\con}[p] = \pre{t}[p]$ for every $p \in P$.
		The consumption of the token from $c_t$ indicates that the consumptions and zero tests of $t$ have been actioned.
		\item The transition $\pro$ performs the token productions of $t$. 
		Formally, $\pre{\pro}[p_t] = 1$, $Z'(\con)~=~\set{ c_t }$, and $\post{\pro}[p] = \post{t}[p]$, for each $p \in P$.
		The consumption of the token from $p_t$ indicates that the productions of $t$ have been actioned.
		The zero test on $c_t$ forces a firing order that mimics the semantics of firing $t$.
	\end{itemize}

	Indeed, after firing $\sim$ the only transition that can be fired is $\con$ since all other transitions require $c_t$ to be empty or require a place in $G\setminus\set{c_t}$ to be non-empty.
	Then, after firing $\con$ the only transition that can be fired is $\pro$ since all other transitions either require $p_t$ to be empty, or require a place in $G\setminus\set{p_t}$ to be non-empty.

	\begin{figure}[t]
		\centering
		\begin{tikzpicture}
	\draw[teal, line width = 0.3mm, dotted] (-1, 1.4) rectangle (7, -0.5);
	\node at (-0.75, 1.1) {\bf\textcolor{teal}{$G$}};

	\node[place] (a) at (0,0) {$c_s$};
	\node[place] (b) at (1,0) {$p_s$};
	\node[transition, inner sep = 2pt] (c) at (0.5, 1) {$s_\textup{sim}$};
	\path[->] (c) edge[] (a);
	\path[->] (c) edge[] (b);
	\path[-] (c) edge[out = 0, in = 270, red] node[right, red] {\small$\overset{{\tiny?}}{=}\!\!\:0$} (1.25, 1.35);

	\node[place] (con) at (2.5,0) {$c_t$};
	\node[place] (pro) at (3.5,0) {$p_t$};
	\node[transition, inner sep = 2pt] (t0) at (3, 1) {$\sim$};
	\path[->] (t0) edge[] (con);
	\path[->] (t0) edge[] (pro);
	\path[-] (t0) edge[out = 0, in = 270, red] node[right, red] {\small$\overset{{\tiny?}}{=}\!\!\:0$} (3.75, 1.35);

	\node[place] (x) at (5,0) {$c_u$};
	\node[place] (y) at (6,0) {$p_u$};
	\node[transition, inner sep = 2pt] (z) at (5.5, 1) {$u_\textup{sim}$};
	\path[->] (z) edge[] (x);
	\path[->] (z) edge[] (y);
	\path[-] (z) edge[out = 0, in = 270, red] node[right, red] {\small$\overset{{\tiny?}}{=}\!\!\:0$} (6.25, 1.35);

	\node[transition, inner sep = 2pt] (t1) at (1.5, -1.5) {$\con$};
	\node[transition, inner sep = 2pt] (t2) at (4.5, -1.5) {$\pro$};
	\path[->] (con) edge[out = 210, in = 90] (t1);
	\path[->] (pro) edge[out = 330, in = 90] (t2);

	\node[place] (p) at (3, -1) {$a$};
	\node[place] (q) at (3, -2) {$b$};
	\path[]
	(p) edge[->] (t1)
	(t1) edge[-, red] node[below, red] {\small$\overset{{\tiny?}}{=}\!\!\:0$} (q)
	(t2) edge[->] (p)
	(t2) edge[->] node[below] {$2$} (q)
	(t2) edge[-, red] node[above, pos = 0.25, rotate = -45, red] {\small$\overset{{\tiny?}}{=}\!\!\:0$} (con)
	;

	\node[place] (op) at (-3.5, 1) {$a$};
	\node[transition, inner sep = 2pt] (t) at (-3.5, -0.5) {$t$};
	\node[place] (oq) at (-3.5, -2) {$b$};
	\path[]
	(op) edge[->, out=255, in=110] (t)
	(t) edge[->, out=75, in=290] (op)
	(t) edge[->, out = 250, in = 105] node[left] {$2$} (oq)
	(t) edge[-, out = 290, in = 75, red] node[right, red] {\small$\overset{{\tiny?}}{=}\!\!\:0$} (oq)
	;

	\node[transition, inner sep = 2pt] (t) at (-4.75, -0.5) {$s$};
	\node[transition, inner sep = 2pt] (t) at (-2.25, -0.5) {$u$};
\end{tikzpicture}
		\caption{
		Suppose there is a Petri net with zero tests with transitions $s$, $t$, and $u$.
		Left: part of the Petri net with zero tests concerning the transition $t$.
		The consumptions and productions of $s$ and $u$ are not shown for simplicity.
		This Petri net is \emph{not} acyclic since $t$ both consumes 1 token from and produces 1 token to $a$.
		Right: part of the equivalent acyclic Petri net with zero tests.
		Places in $G$ are shown, as well as transitions $s_\textup{sim}$, $\sim$, and $u_\textup{sim}$ for choosing the next transition to be fired.
		Since the consumptions and productions of $s$ and $u$ are not shown for simplicity, we also omit the corresponding $s_\textup{con}$, $s_\textup{pro}$, $u_\textup{con}$, and $u_\textup{pro}$.
		Importantly, zero-test edges between $\con$ and $b$, and between $\pro$ and $c_t$ are not subject to the acyclicity restriction.
		Zero-test edges incident to $s_\textup{sim}$, $\sim$, and $u_\textup{sim}$ indicate that all places in $G$ are zero-tested.
		}
		\label{fig:transition-controller}
	\end{figure}

	Given a marking $\vec{v}$ over $P$, define $\vec{v}'$ over $P'$ such that $\vec{v}'[p] = \vec{v}[p]$ for all $p \in P$ and $\vec{v}'[q] = 0$ for all $q \in G$.
	It follows that $\run{\vec{m}}{*}{\vec{n}}$ in $\Pp$ if and only if $\run{\vec{m}'}{*}{\vec{n}'}$ in $\Zz$. 
	Indeed, runs in $\Pp$ have equivalent runs in $\Zz$, where each firing of a transition $t$ is replaced with the firing of transitions $\sim$, then $\con$, then $\pro$. 
	Conversely, as previously detailed, runs in $\Zz$ must fire $\sim$, $\con$, and $\pro$ successively for some transition $t \in T$.

	It remains to observe that $\Zz$ is acyclic.
	Consider the following ordering: choice simulation transitions (such as $\sim$) occur before places in $G$, which occur before production transitions (such as $\pro$), which occur before places in $P$, which occur before consumption transitions (such as $\con$).
\end{proof}

\begin{lemma}\label{lem:petri-simulate-zero-tests}
	The reachability problem for \zapn{s} is reducible in logarithmic space to the reachability problem for \rapn{s}.
\end{lemma}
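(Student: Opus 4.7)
The plan is to adapt the classical reduction from zero-tests to resets via a mirror-place construction, and then to verify that it preserves acyclicity of the consumption--production graph.

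Given a \zapn{} $\Zz = (P, T, F, Z)$ with initial marking $\vec{m}$ and target marking $\vec{n}$, I build an \rapn{} $\Rr$ whose places are $P \cup \{p' : p \in P\}$, that is, each original place $p$ acquires a \emph{mirror} place $p'$. Every transition $t$ of $\Zz$ becomes a transition of $\Rr$ which duplicates each consumption and production arc on $p$ by an identical arc on $p'$ (so $\pre{t}[p'] \defeq \pre{t}[p]$ and $\post{t}[p'] \defeq \post{t}[p]$), and whose reset set is declared to be exactly $Z(t)$ (the zero-tested places get reset, but not their mirrors). The initial and target markings are extended by $\vec{m}'[p'] \defeq \vec{m}[p]$ and $\vec{n}'[p'] \defeq \vec{n}[p]$, which forces $p$ and $p'$ to agree at the endpoints of any witness run.

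Acyclicity is preserved because the consumption--production graph of $\Rr$ is obtained from that of $\Zz$ by splitting each place $p$ into the pair $p, p'$ with identical incident arcs; any cycle in $\Rr$ would project to a cycle in $\Zz$, contradicting its acyclicity. Reset edges are unconstrained in an \rapn{}, so the added resets cause no issue. Correctness is then driven by the invariant that along any run of $\Rr$ from $\vec{m}'$, the inequality $\vec{v}[p] \leq \vec{v}[p']$ holds for every $p$, with equality unless some transition has reset $p$ while $p$ was non-empty---a \emph{cheating} reset. Non-reset transitions modify $p$ and $p'$ by the same amount and preserve the gap $\vec{v}[p'] - \vec{v}[p]$; a cheating reset can only widen it; and no subsequent operation can shrink it (consumption from $p'$ is matched by consumption from $p$, which itself requires $\vec{v}[p] \geq \pre{t}[p]$). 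Thus every \zapn{} run lifts to an $\Rr$-run in which each legitimate zero-test becomes a no-op reset, and conversely any $\Rr$-run reaching $\vec{n}'$ must be cheating-free, since cheating would force $\vec{n}'[p'] > \vec{n}'[p]$, contradicting the constraint $\vec{n}'[p] = \vec{n}'[p']$.

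The main obstacle---which is in fact what makes the argument go through---is detecting cheating resets using only the reachability objective: strict equality $\vec{n}'[p] = \vec{n}'[p']$ at the target is essential, so this construction does not transfer to coverability, where an excess $\vec{n}'[p'] > \vec{n}[p]$ would still cover and cheating would go undetected. This is consistent with the \class{PSPACE} upper bound for coverability established in~\cref{sec:petri-coverability}.
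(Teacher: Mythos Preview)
Your proposal is correct and follows essentially the same approach as the paper: both introduce a mirror (copy) place for each original place, duplicate all consumption and production arcs onto the mirrors, replace each zero test by a reset on the original place only, and use the reachability requirement $\vec{n}'[p] = \vec{n}'[p']$ together with the monotone gap invariant $\vec{v}[p'] - \vec{v}[p] \geq 0$ to rule out cheating resets. Your acyclicity argument via projection and your closing remark that the reduction fails for coverability also match the paper's treatment.
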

\begin{proof}[Proof of~\cref{lem:petri-simulate-zero-tests}]
	Via leveraging the reachability objective, the idea is to add a copy of each place that will make sure zero tests are simulated faithfully.
	
	Let $\Zz = (P, T, F, Z)$ be an \zapn{}.
	We will construct an \rapn{} $\Rr = (P', T', F', R)$.
	For each place $p \in P$ we will add a copy place $c_p$, so $P' = \set{ p, c_p : p \in P }$.
	For each transition $t \in T$, there will be a corresponding transition $t' \in T'$ with the following behaviour.
	Firstly, $t'$ will mimic the token consumption and token production between the original places and their copies, so for every place $p$, $\pre{t'}[p] = \pre{t'}[c_p] = \pre{t}[p]$ and $\post{t'}[p] = \post{t'}[c_p] = \post{t}[p]$.
	Secondly, suppose a place $p \in P$ is zero-tested by $t \in T$, \ie $p \in Z(t)$.
	Then $t' \in T'$ will reset $p \in P'$ but not the copy $c_p \in P'$.
	Note that none of the copy places are ever reset.
	
	Given the initial marking $\vec{m}$ and target marking $\vec{n}$ over $P$, we define $\vec{m}'$ and $\vec{n}'$ over $P'$ so that $\vec{m}[p] = \vec{m}'[p] = \vec{m}'[c_p]$ and $\vec{n}[p] = \vec{n}'[p] = \vec{n}'[c_p]$.
	In other words, the markings over $P'$ allocate the same number of tokens to the copy places as their original counterparts.
	Suppose $\run{\vec{m}'}{*}{\vec{n}'}$ in $\Rr$.
	Then the invariant $\vec{m}'[c_p] - \vec{m}'[p] \leq \vec{n}'[c_p] - \vec{n}'[p]$ holds for all $p \in P$. 
	This inequality is strict only if at some point during the run, a transition is fired that resets a non-empty place.
	Therefore, $\run{\vec{m}}{*}{\vec{n}}$ in $\Zz$ if and only if $\run{\vec{m}'}{*}{\vec{n}'}$ in $\Rr$. 
	Indeed, a zero test on $p$ succeeds in $\Zz$ if and only if its corresponding reset has no effect; this occurs when $p$ and $c_p$ are empty.
	To conclude, it is clear that this is a logarithmic-space reduction and that the acyclicity of the consumption and production arcs between places and transitions is preserved in $\Rr$.
\end{proof}

\begin{remark*}
	\cref{lem:petri-simulate-transitions} does not hold with the workflow properties but~\cref{lem:petri-simulate-zero-tests} does; neither holds for the coverability objective.
\end{remark*}

\begin{proof}[Proof of~\cref{thm:petri-reachability}]
	Combine \cref{lem:petri-simulate-transitions}, \cref{lem:petri-simulate-zero-tests}, and the fact that reachability in Petri nets with zero tests is undecidable.
\end{proof}

\bibliography{bib}

\begin{appendix}
    \section{Missing proofs of~\cref{sec:petri-coverability}}
\label{app:petri-coverability}

\coverabilityholds*
\begin{proof}
	First, suppose that there is some firing sequence $\pi$ such that $\run{\vec{m}}{\pi}{\vec{n}'}$ in $\Pp$ where $\vec{n}' \geq \vec{n}$. 
	By firing the same sequence of transitions in $\Nn$, under its semantics, then $\vec{n}$ will be covered from $\vec{m}$.
	This holds because if $\run{\vec{p}}{t}{\vec{q}}$ in $\Pp$, then $\run{\vec{p}}{t}{\vec{q}'}$ in $\Nn$ where $\vec{q}' \geq \vec{q}$.
	Clearly if $t$ is not generating from $\vec{p}$, then $\vec{q}' = \vec{q}$, so consider the case when $t$ is generating from $\vec{q}$.
	Indeed, for a place $p$ such that $p \notin \reset{t}$ and $\post{t}[p] > 0$, then $\vec{q}'[p] = \omega > \vec{q}[p] = \vec{p}[p] - \pre{t}[p] + \post{t}[p]$.
	And, for a place $p$ such that $p \in \reset{t}$, then $\vec{q}'[p] = \post{t}[p] = \vec{q}[p]$.
	It follows that $\run{\vec{m}}{\pi}{\vec{n}''}$ in $\Nn$, where $\vec{n}'' \geq \vec{n}' \geq \vec{n}$.
	
	Now, suppose that there is some firing sequence $\rho$ such that $\run{\vec{m}}{\rho}{\vec{n}'}$ in $\Nn$, where $\vec{n}' \geq \vec{n}$.
	We actually show a stronger statement than what is necessary.
	Suppose the target marking may have places that contain $\omega$ tokens, so $\vec{n} \in \Nw^n$.
	Consider any $\vec{n}^* \in \N^n$ such that $\vec{n}^*[p] \in \N$ if $\vec{n}[p] = \omega$, and $\vec{n}^*[p] = \vec{n}[p]$ if $\vec{n}[p] \in \N$ (for each place $p$). So every $\omega$ is replaced with an arbitrary natural number.
	We will argue, by induction on the length of the firing sequence $\rho$, that $\vec{n}^*$ is coverable from $\vec{m}$ in $\Pp$ if $\vec{n}$ is coverable from $\vec{m}$ in $\Nn$.
	
	The induction base is trivial.
	If the length of $\rho$ is zero, thus $\vec{n}^* = \vec{n}' = \vec{m}$, thus $\vec{n}^*$ is coverable from $\vec{m}$ in $\Pp$.
	For the inductive step, let $\rho = \sigma t$, so assume that $\vec{m} \overset{\sigma}{\rightarrow} \vec{x} \overset{t}{\rightarrow} \vec{n}'$ in $\Nn$, here $t$ is the last transition of $\rho$.
	Consider any $\vec{x}^* \in \N^n$ such that $\vec{x}^*[p] \in \N$ if $\vec{x}[p] = \omega$, and $\vec{x}^*[p] = \vec{x}[p]$ if $\vec{x}[p] \in \N$ (for each place $p$).
	Since $\run{\vec{m}}{\sigma}{\vec{x}}$ in $\Nn$, it is true that $\vec{x}$ is coverable from $\vec{m}$ in $\Nn$, and since $\sigma$ is (one transition) shorter than $\rho$, then by the inductive assumption, we know that $\vec{x}^*$ is coverable from $\vec{m}$ in $\Pp$.

	Recall that $\vec{n}^* \in \N$; let $m$ by the greatest number of tokens in a place in $\vec{n}^*$ and let $k$ be the maximal number of tokens consumed by a single transition in $\Pp$.
	We will split into two cases based on transition $t$ and marking $\vec{x}$. 
	Case one is that $t$ is a non-generating transition from $\vec{x}$, and case two is that $t$ is generating from $\vec{x}$.

	In case one, if $\vec{n}'[p] = \omega$ for some place $p$, then $\vec{x}[p] = \omega$ must be true.
	Define $\vec{x}^*$ so that $\vec{x}^*[p] = m+k$ if $\vec{x}[p] = \omega$, and $\vec{x}^*[p] = \vec{x}[p]$ if $\vec{x}[p] \in \N$ (for each place $p$).
	Consider the marking $\vec{v}$ reached after firing $t$ from $\vec{x}^*$ in $\Pp$.
	By definition of $m$ and $k$, it is clear that $\vec{v} \geq \vec{n}^*$.
	This concludes case one.

	In case two, $t$ is generating from $\vec{x}$.
	In that case there may exist a place $p$ such that $\vec{n'}[p] = \omega$, but $\vec{x}[p] \in \N$.
	This occurs when $t$ increases the number of tokens in $p$.
	If, in $\Pp$, $t$ is fired exactly $m$ times, then the number of tokens in the place $p$ will be at least $M$ and the number of tokens consumed from other places will be at most $mk$ (unless the place in question is reset).
	Thus, we define $\vec{x}^*$ such that $\vec{x}^*[p] = m + mk$ if $\vec{x}[p] = \omega$, and $\vec{x}^*[p] = \vec{x}[p]$ if $\vec{x}[p] \in \N$.
	This time, consider the marking $\vec{v}$ reached after firing $t$ repeatedly $m$ times from $\vec{x}^*$ in $\Pp$, so $\run{\vec{x}^*}{t^m}{\vec{v}}$.
	By definition of $m$ and $k$, it follows that $\vec{v}[p] = \vec{n}^*[p]$ for each place $p$ such that $\post{t}[p] = 0$ and $\vec{v}[p] \geq \vec{n}^*[p]$ for each place $p$ such that $\post{t}[p] > 0$. 
	Hence $\vec{v} \geq \vec{n}^*$.
	This concludes case two.
\end{proof}

\exponentialmarkings*
\begin{proof}
	Since $\Pp$ is acyclic, then $\Nn$ is clearly acyclic.
	We can assume that places $p_1, \ldots, p_n$ of $\Nn$ are ordered such that each transition that consumes tokens from some place $p_i$ does not produce tokens to a place $p_j$ for $j < i$. 

	We define the \emph{weight} of a marking, $\mathrm{weight}(\vec{v}) \coloneqq \sum_{i \in [1, n]: \vec{v}[p_i] \in \N} k^{n-i+1} \cdot \vec{v}[p]$.
	Note that $\mathrm{weight}(\vec{m}) = k^n\cdot\vec{m}[p_1] + \ldots + k^1\cdot\vec{m}[p_n] \leq C \cdot k^n$. 
	We will now show that firing a transition does not increase the weight of a marking.
	A transition $t$ in $\Nn$ must either consume tokens from some place, or it is a generating transition.
	In the first case, if a transition $t$ consumes some number of tokens from a place $p_i$, then it may only produce tokens to places $p_j$ for $j > i$.
	The number of tokens produced by $t$ is bounded above by $k$, as $t$ produces at most $k$ tokens.
	The consumption of a token from $p_i$ at least decreases the weight of the marking reached by $k^{n-i+1}$ and the production of at most $k$ tokens to some $p_1, \ldots, p_{i-1}$ can increase the weight of the marking reached by at most $k \cdot k^{n-i}$.
	Hence, the weight of the current marking cannot increase by firing $t$.
	In the second case, firing a transition $t$ may only introduce $\omega$'s in some places.
	Since places containing $\omega$ tokens do not contribute to the weight of the marking, the weight of the new marking does not increase.

	Since $\run{\vec{m}}{*}{\vec{v}}$ in $\Nn$, we therefore know that $\mathrm{weight}(\vec{v}) \leq \mathrm{weight}(\vec{m}) \leq C \cdot k^n$.
	Clearly $\mathrm{weight}(\vec{v}) \geq \norm{\vec{v}}$, so $\norm{\vec{v}} \leq C \cdot k^n$, as required.
\end{proof}

\section{Missing proofs of~\cref{sec:workflow-coverability}}
\label{app:workflow-coverability}

\binarytransitions*
\begin{proof}
	Without loss of generality, we will assume that each of the binary arcs has a weight that is a power of two.

	Let $\Bb = (P, T, F, R)$ be the given \rawn{} with binary-encoded arc weights.
	We will first construct the (unary-encoded) \rawn{} $\Uu = (P', T', F', R')$.
	There will be a place $p' \in P'$ for each place $p \in P$ as well as a series of additional places that are reserved for transition gadgets.
	The transitions $T'$ of $\Uu$ will come from a series of transition gadgets used to simulate the transitions of $\Bb$.

	Consider a transition $t$ that consumes $2^a$ tokens from place $p$ and produces $2^b$ tokens into place $q$.
	The weak simulation gadget $G_t$, depicted in~\cref{fig:binary-transition}, has the property that, prior to the \emph{key transition} $s_t$ being fired, $2^a$ tokens must have been consumed from $p'$.
	It is important to note that if $p$ is reset in $\Bb$, then all places, except $q'$, in $G_t$ are reset. 
	Similarly, if $q$ is reset in $\Bb$, then all places, except $p'$, in $G_t$ are reset.
	\begin{figure}[ht!]
		\centering
		\begin{tikzpicture}
	\node[place, label=below:$p$] (p) at (-12, -0.75) {};
	\node[place, label=below:$q$] (q) at (-9, -0.75) {};
	\node[transition] (t) at (-10.5, -0.75) {$t$};
	\path[->] 
		(p) edge[] node[above] {$2^a$} (t) 
		(t) edge[] node[above] {$2^b$} (q);

	\node[place, label=left:$p'$] (pu) at (-6, 0) {};
	\node[place] (c1) at (-4, 0) {};
	\node (cd) at (-2, 0) {\small$\cdots$};
				
	\node[place] (ck) at (0, 0) {};
	\node[transition] (tc1) at (-5, 0) {};
	\node[transition] (tc2) at (-3, 0) {};
	\node[transition] (tc3) at (-1, 0) {};
	\path[->] 
		(pu.-11) edge[] (tc1.-160) 
		(pu.11) edge[] (tc1.160)
		(tc1) edge[] (c1)
		(c1.-11) edge[] (tc2.-160) 
		(c1.11) edge[] (tc2.160)
		(tc2) edge[] (cd)
		(cd.-11) edge[] (tc3.-160) 
		(cd.11) edge[] (tc3.160)
		(tc3) edge[] (ck);

                \draw [decorate, decoration={brace, raise=0.75ex}] (tc1.west |- c1.north) -- (tc3.east |- c1.north) node [midway, above, yshift=1.25ex] {$a$ transitions};

	\node[transition] (tu) at (0, -0.75) {$s_t$};

	\node[place] (pk) at (0, -1.5) {};
	\node[place] (p1) at (-2, -1.5) {};
	\node (pd) at (-4, -1.5) {$\small\cdots$};
			
	\node[place, label=left:$q'$] (qu) at (-6, -1.5) {};
	\node[transition] (tp1) at (-1, -1.5) {};
	\node[transition] (tp2) at (-3, -1.5) {};
	\node[transition] (tp3) at (-5, -1.5) {};

	\path[->]
		(pk) edge[] (tp1)
		(tp1.160) edge[] (p1.11)
		(tp1.-160) edge[] (p1.-11)
		(p1) edge[] (tp2)
		(tp2.160) edge[] (pd.11)
		(tp2.-160) edge[] (pd.-11)
		(pd) edge[] (tp3)
		(tp3.160) edge[] (qu.11)
		(tp3.-160) edge[] (qu.-11);

                \draw [decorate, decoration={brace, raise=0.75ex}] (tp1.east |- p1.south) -- (tp3.west |- p1.south) node [midway, below, yshift=-1.25ex] {$b$ transitions};

	\path[->] 
		(ck) edge[] (tu)
		(tu) edge[] (pk);

	\node at (-7, 0.75) {$G_t$};
\end{tikzpicture}
		\caption{The (weak) simulation of an example binary transition by a series unary transitions.
		Note the labelled \emph{key transition} $s_t$ in the transition gadget $G_t$.}
		\label{fig:binary-transition}
	\end{figure}

	The initial marking $\vec{x}$ and the target marking are set so that $\vec{x}[p'] = \vec{m}[p]$ and $\vec{y}[p'] = \vec{n}[p]$ for all $p \in P$; for all other places $q' \in P' \setminus P$, both $\vec{x}[q'] = 0$ and $\vec{y}[q'] = 0$.
	It remains to show that $I = (\Bb, \vec{m}, \vec{n})$ is positive if and only if $I' = (\Uu, \vec{x}, \vec{y})$ is positive.

	Suppose $I$ is positive.
	There exists a firing sequence $\pi$ such that $\run{\vec{m}}{\pi}{\vec{n}'}$ in $\Bb$ where $\vec{n}' \geq \vec{n}$.
	Consider the firing sequence $\rho$ in $\Uu$ that is obtained from $\pi$ by replacing the firing of $t$ with the firing of all transitions of $G_t$ in order an appropriate number of times so that $2^b$ tokens are produced in $q'$.
	It is clear that, from the initial marking $\vec{x}$, a marking $\vec{y}'$ will be reached such that $\vec{y}'[p'] = \vec{n}'[p]$ for all $p \in P$ and $\vec{y'}[q'] = 0$ for all other places $q'$.
	Since $\vec{n}' \geq \vec{n}$, it follows that $\vec{y}' \geq \vec{y}$.
	This implies that $I'$ is positive too.

	Now suppose $I'$ is positive.
	There exists a firing sequence $\rho = (t'_1, \ldots, t'_k)$ such that $\run{\vec{x}}{\rho}{\vec{y}'}$ in $\Uu$ where $\vec{y}' \geq \vec{y}$.
	Find the subsequence $J = (j_1, \ldots, j_\ell)$ of $(1, \ldots, k)$ such that for each $j \in J$, $t'_j$ is a key transition in some transition gadget $G_t$ (see~\cref{fig:binary-transition}), \ie there is some $t \in T$ for which $t'_j = s_t$.
	Consider the firing sequence $\pi = (t_1, \ldots, t_\ell)$ obtained by only firing the transition $t_j$ in $\Bb$ whenever the key transition $s_{t_j}$ is fired in $\Uu$.
	Indeed, suppose $t_j$ is a transition that consumes $2^a$ tokens from $p \in P$ and produces $2^b$ tokens in $q \in P$.
	If the key transition $s_{t_j}$ is fired in $\Uu$, then some $2^a$ many tokens must have previously been consumed from $p' \in P'$, and subsequently, the latter transitions of $G_t$ could be fired which would lead to the production of $2^b$ many tokens to $q' \in P'$.
	It may not be the case that all the transitions after $t'_j$ are fired in $G_t$ before a reset occurs.
	In that case, perhaps only some of the tokens are produced in $q'$ in $\Uu$ compared to $q$ in $\Bb$.
    Since our objective is coverability, we may only increase the number of tokens in $q$ compared to $q'$.
	Similarly, even before the firing of a key transition $s_t$, it could have been the case that some of the consumption transitions in $G_t$ were fired, producing some tokens in the places before $s_t$.
	Again, if these places are reset, this would only increase the number of tokens in the places $P$ compared to the places $P'$.
	Therefore, by firing $\pi$ from the initial marking $\vec{m}$, a marking $\vec{n'}$ can be reached, such that for every place $p \in P, \, \vec{n'}[p] \geq \vec{y'}[p'] \geq \vec{y}[p'] = \vec{n}[p]$.
	Hence $\vec{n}' \geq \vec{n}$, so $I$ is positive too.

	Finally, it is clear that $I'$ can be constructed in polynomial time, given that for each transition $t \in T$, the \rawn{} $\Uu$ only contains $C_1\cdot\log(\abs{P})$ many places and $C_2\cdot\log(\abs{T})$ many transitions, for some constants $C_1, C_2 \in \N$.
\end{proof}

\truexorfalse*
\begin{proof}
	For $\overline b_i$ to be non-empty $u_i^\bot$ must have been fired, and for $b_i$ to be non-empty $u_i^\top$ must have been fired.
	However, $u_i^\bot$ resets and does not produce any tokens to $b_i$ and $u_i^\top$ resets and does not produce any tokens to $\overline b_i$.
	No matter which of $u_i^\bot$ or $u_i^\top$ was fired most recently when reaching $\vec{v}$, either $\overline b_i$ or $b_i$ must be empty.
	The same argument applies to $\overline a_i$ and $a_i$ by considering the transitions $e_i^\bot$ and $e_i^\top$.
\end{proof}

\collectingtropies*
\begin{proof}
	There is only one transition, the satisfaction transition $s$, that produces tokens in $f$, and it only produces one token.
\end{proof}

\subparagraph*{Further remarks on good markings (\cref{def:good-marking}).}
For sake of intuition, consider the first condition with $i=1$: $2^k = g_1(\vec{v}) = \vec{v}[f] + 2^k\vec{v}[h_1] + 2^{k-1}\vec{v}[w_1] + \vec{v}[\overline b_1] + \vec{v}[b_1] + \vec{v}[d_{y_1}]$.
In the initial marking, $h_1$ contains a token and $w_1$, $\overline b_1$, $b_1$, and $d_{y_1}$ are empty, so the marking is good.
Upon the firing of the first transition $u_1^\bot$, a token is placed in $w_1$ and $2^{k-1}$ tokens are placed in $\overline b_1$. 

Notice at this point that, in order for a token to be eventually produced in the final place $f$, all clause places must be non-empty, which corresponds to all clauses being satisfied.
Observe that, from this marking, a token must be taken from $\overline b_1$ before $s$ can be fired: indeed, the dummy clause $(\overline y_1 \vee y_1)$ needs to be satisfied, which is indicated by $d_{y_1}$ containing a token.
So, either $\ell_{\overline y_1}$ or $\ell_{y_1}$ must be fired to place a token in $d_{y_1}$.
By~\cref{clm:true-xor-false}, if $\overline b_1$ is non-empty, then $b_1$ must be empty, so in this case $\ell_{\overline y_1}$ is fired, consuming a token from $\overline b_1$. 
The marking reached is still good because moving a token from $\overline b_1$ to $d_{y_1}$ maintains the balance.
From here, if $s$ is fired, a token is consumed from $d_{y_1}$ and a token is produced to $f$, again keeping the goodness condition satisfied.

Overall, the conditions on $g_i$ promise that the number of tokens in the universal gadgets are balanced with the number of tokens in the final place.
Similarly, the conditions on $g'_i$ promise that the number of tokens in the existential gadgets are balanced with the number of tokens in the final place.

Finally, consider $g'_i(\vec{v}) - g_i(\vec{v}) = \vec{v}[\overline a_i] + \vec{v}[a_i] + \vec{v}[d_{x_i}] + 2^{k-i}\cdot\vec{v}[v_i] - \vec{v}[\overline b_i] - \vec{v}[b_i] - \vec{v}[d_{y_i}]$.
If $\vec{v}$ is good, then $g'_i(\vec{v}) - g_i(\vec{v}) = 0$ for each $i \in [1, k]$.
These are balance conditions between pairs of existential gadgets and universal gadgets.
Intuitively, the total number of tokens in the places $\overline b_i$, $b_i$, and $d_{y_i}$ will be equal to the total number of tokens in (or soon to be in) the places $\overline a_i$, $a_i$, and $d_{x_i}$.
So, in a run that goes through good markings only, there is no way to run out of tokens for clauses $(\overline y_i \vee y_i)$ without running out of tokens for clauses $(\overline x_i \vee x_i)$.

We will now show that all $g_i$ and $g'_i$ are non-increasing with respect to firing a transition.

\invariantcantincrease*
\begin{proof}
	The proof will be split into case depending on which transition was fired, but first we give an outline of the proof.
	In each case, we will compare the number of tokens produces and the number of tokens consumed to their `relative value' to each function $g_1, g'_1, \ldots, g_k, g'_k$

	\begin{figure}[ht!]
		\centering
		
		{\renewcommand{\arraystretch}{0}%
		\begin{tabular}{@{\strut\hspace{\tabcolsep}}c|c c c c c c c c c c c}
		Places: & $f$ & $h_i$ & $w_i$ & $v_i$ & $\overline b_i$ & $b_i$ & $d_{y_i}$ & $\overline a_i$ & $a_i$ & $d_{x_i}$ \\ 
		\hline\multicolumn{1}{c|}{\rule{0pt}{2pt}}&\\\hline
		$u_i^\bot$ & $0$ & $-1$ & $1$ & $1$ & $2^{k-i}$ & $0$ & $0$ & $0$ & $0$ & $0$ \\
		$u_i^\top$ & $0$ & $0$ & $-1$ & $1$ & $0$ & $2^{k-i}$ & $0$ & $0$ & $0$ & $0$ \\ \hline
		$e_i^\bot$ & $0$ & $0$ & $0$ & $-1$ & $0$ & $0$ & $0$ & $2^{k-i}$ & $0$ & $0$ \\
		$e_i^\top$ & $0$ & $0$ & $0$ & $-1$ & $0$ & $0$ & $0$ & $0$ & $2^{k-i}$ & $0$ \\ \hline
		$\ell_{\overline y_i}$ & $0$ & $0$ & $0$ & $0$ & $-1$ & $0$ & $1$ & $0$ & $0$ & $0$ \\
		$\ell_{y_i}$ & $0$ & $0$ & $0$ & $0$ & $0$ & $-1$ & $1$ & $0$ & $0$ & $0$ \\
		$\ell_{\overline x_i}$ & $0$ & $0$ & $0$ & $0$ & $0$ & $0$ & $0$ & $-1$ & $0$ & $1$ \\
		$\ell_{x_i}$ & $0$ & $0$ & $0$ & $0$ & $0$ & $0$ & $0$ & $0$ & $-1$ & $1$ \\ \hline
		$s$ & $1$ & $0$ & $0$ & $0$ & $0$ & $0$ & $-1$ & $0$ & $0$ & $-1$ \\ 
		\hline\multicolumn{1}{c|}{\rule{0pt}{2pt}}&\\\hline
		$g_1, \ldots, g_{i-1}$ & $1$ & $0$ & $0$ & $0$ & $0$ & $0$ & $0$ & $0$ & $0$ & $0$ \\
		$g'_1, \ldots, g'_{i-1}$ & $1$ & $0$ & $0$ & $0$ & $0$ & $0$ & $0$ & $0$ & $0$ & $0$ \\ \hline
		$g_i$ & $1$ &$2^{k-i+1}$ & $2^{k-i}$ & $0$ & $1$ & $1$ & $1$ & $0$ & $0$ & $0$ \\
		$g'_i$ & $1$ & $2^{k-i+1}$ & $2^{k-i}$ & $2^{k-i}$ & $0$ & $0$ & $0$ & $1$ & $1$ & $1$ \\ \hline
		$g_{i+1}, \ldots, g_k$ & $1$ &$2^{k-i+1}$ & $2^{k-i}$ & $2^{k-i}$ & $0$ & $0$ & $0$ & $0$ & $0$ & $0$ \\
		$g'_{i+1}, \ldots, g'_k$ & $1$ & $2^{k-i+1}$ & $2^{k-i}$ & $2^{k-i}$ & $0$ & $0$ & $0$ & $0$ & $0$ & $0$ 
		\end{tabular}
		}
	
		\caption{The upper section of this table details the number of tokens that are consumed (the negative entries) or produced (the positive entries) by each of the transitions $u_i^\bot$, $u_i^\top$, $e_i^\bot$, $e_i^\top$, $\ell_{\overline y_i}$, $\ell_{y_i}$, $\ell_{\overline x_i}$, $\ell_{x_i}$, and $s$. 
		The lower section of this table details the relative value of each of these places, that are just the coefficients in the functions $g_1, g'_1, \ldots, g_k, g'_k$.}
		\label{fig:dot-product}
	\end{figure}

	We show that (by design) the number of tokens produced to each place balances out, in the best case, with the coefficient that place has in the functions $g_1, g'_1, \ldots, g_k, g'_k$.
	Let $\vec{c}_i$ and $\vec{c}'_i$ be the vector of coefficients of $g_i$ and $g'_i$, respectively, so $\vec{c}_i[p]$ is the coefficient of $\vec{v}[p]$ in $g_i(\vec{v})$ and $\vec{c}'_i[p]$ is the coefficient of $\vec{v}[p]$ in $g'_i(\vec{v})$.
	It suffices to check that $\post{t}\cdot\vec{c}_i - \pre{t}\cdot\vec{c}_i \geq 0$ and $\post{t}\cdot\vec{c}'_i - \pre{t}\cdot\vec{c}'_i \geq 0$ (here $\vec{x}\cdot\vec{y}$ is the dot product of vectors $\vec{x}$ and $\vec{y}$).
	It could be that a place that is not being produced to, that is being reset, contained a token before firing $t$, in which case the value of some function $g_1, g'_1, \ldots, g_k, g'_k$ may strictly decrease.
	We therefore may as well assume that all places due to be reset are empty before $t$ is fired.

	For example, when firing $t = u_i^\bot$, one token is consumed from $h_i$, this decreases $g_i, g'_i, \ldots, g_k, g'_k$ by $2^{k-i+1}$.
	Then one token is produced to $w_i$, this increases $g_i, g'_i, \ldots, g_k, g'_k$ by $2^{k-i}$, and one token is produces to $v_i$, this increases $g'_i, g_{i+1}, g'_{i+1}, \ldots, g_k, g'_k$ by $2^{k-i}$.
	The values of the other functions $g_1, g'_1, \ldots, g_{i-1}, g'_{i-1}$ do not depend on the number of tokens in $h_i$, $w_i$, $v_i$, or $\overline b_i$.
	Overall, without considering the resets, we can conclude that firing $u_i^\bot$ does not incease the value of any function $g_1, g'_1, \ldots, g_k, g'_k$.

	We continue with the proof by splitting into cases $t = u_i^\bot$, $t = u_i^\top$, $t = e_i^\bot$, $t = e_i^\top$, $t \in \set{ \ell_{\overline y_i}, \ell_{y_i}, \ell_{\overline x_i}, \ell_{x_i} }$, and $t = s$.

	\subparagraph*{Case $t = u_i^\bot$:}
	First, $u_i^\bot$ does not consume from, reset, or produce to any place in $U_1, E_1, \ldots, U_{i-1}, E_{i-1}$.
	Since $g_1, g'_1, \ldots, g_{i-1}, g'_{i-1}$ only depend on the contents of places in $U_1, E_1, \ldots, U_{i-1}, E_{i-1}$, it follows immediately that $g_j(\vec{p}) = g_j(\vec{q})$ and $g'_j(\vec{p}) = g'_j(\vec{q})$ for all $j \in [1, i-1]$.

	Now, let us consider the positive contributions that $u_i^\bot$ has on the remaining functions  $g_i, g'_i, \ldots, g_k, g'_k$. 
	For this, we care for what $u_i^\bot$ produces: $\post{{u_i^\bot}}[w_i] = 1$, $\post{{u_i^\bot}}[v_i] = 1$, and $\post{{u_i^\bot}}[\overline b_i] = 2^{k-i}$.
	In $g_i$, $w_i$ contributes $2^{k-i}$ times the number of its tokens and $\overline b_i$ contributes just the number of its tokens, so firing $u_i^\bot$ seemingly has the potential to increase $g_i$ by $2^{k-i} + 2^{k-i} = 2^{k-i+1}$.
	Similarly, in each $g'_i, g_{i+1}, g'_{i+1}, \ldots, g_k, g'_k$, both $w_i$ and $v_i$ contribute $2^{k-i}$ times the number of their tokens, so firing $u_i^\bot$ seemingly has the potential to increase $g'_i, g_{i+1}, g'_{i+1}, \ldots, g_k, g'_k$ by $2^{k-i} + 2^{k-i} = 2^{k-i+1}$.
	However, in order to fire $u_i^\bot$, a token is consumed from $h_i$.
	Observe that $h_i$ contributes $2^{k-i+1}$ times the number of its tokens to all functions $g_i, g'_i, \ldots, g_k, g'_k$. 
	Therefore, the net maximum effect of $u_i^\bot$ on any of $g_i, g'_i, \ldots g_k, g'_k$ is 0.
	We can conclude that $g_j(\vec{p}) \geq g_j(\vec{q})$ and $g'_j(\vec{p}) \geq g'_j(\vec{q})$ for the remaining $j \in [i, k]$.

	\subparagraph*{Case $t = u_i^\top$:} is analogous to case $t = u_i^\bot$, above.

	\subparagraph*{Case $t = e_i^\bot$:}
	Similar to case $t = u_i^\bot$, $e_i^\bot$ does not consume from, reset, or produce to any place in $U_1, E_1, \ldots, U_{i-1}, E_{i-1}, U_i$.
	Since $g_1, g'_1, \ldots, g_{i-1}, g'_{i-1}, g_i$ only depend on the contents of places in $U_1, E_1, \ldots, U_{i-1}, E_{i-1}, U_i$, it follows immediately that $g_j(\vec{p}) = g_j(\vec{q})$ for all $j \in [1, i]$ and $g'_j(\vec{p}) = g'_j(\vec{q})$ for all $j \in [1, i-1]$.

	Again, consider the positive contributions that $e_i^\bot$ has on the remaining functions $g'_i, g_{i+1}, g'_{i+1}, \ldots, g_k, g'_k$.
	The productions are $\post{{e_i^\bot}}[\overline a_i] = 2^{k-i}$ and $\post{{e_i^\bot}}[h_{i+1}] = 1$.
	In $g'_i$, $\overline a_i$ contributes the number of its tokens, so firing $e_i^\bot$ seemingly has the potential to increase $g'_i$ by $2^{k-i}$.
	In each $g_{i+1}, g'_{i+1}, \ldots, g_k, g'_k$, $h_{i+1}$ contributes $2^{k-i}$ times the number of its tokens, so firing $e_i^\bot$ seemingly has the potential to increase $g_{i+1}, g'_{i+1}, \ldots, g_k, g'_k$ by $2^{k-i}$.
	However, in order to fire $u_i^\bot$, a token is consumed from $v_i$.
	Observe that $v_i$ contributes $2^{k-1}$ times the number of its tokens to all functions $g'_i, g_{i+1}, g'_{i+1}, \ldots, g_k, g'_k$.
	Therefore, the net maximum effect of $e_i^\bot$ on any of $g'_i, g_{i+1}, g'_{i+1}, \ldots, g_k, g'_k$ is 0.
	We can conclude that $g_j(\vec{p}) \geq g_j(\vec{q})$ for the remaining $j \in [i+1, k]$ and $g'_j(\vec{p}) \geq g'_j(\vec{q})$ for the remaining $j \in [i, k]$.

	\subparagraph*{Case $t = e_i^\top$:} is analogous to case $t = e_i^\bot$, above.

	\subparagraph*{Case $t = \ell$:} Here $\ell \in \set{ \ell_{\overline y_i}, \ell_{y_i}, \ell_{\overline x_i}, \ell_{x_i} }$.
	We will consider the case when $t = \ell_{y_i}$ and the other cases follow almost identically.

	First, observe that $\ell_{y_i}$ consumes a token from $b_i$ and produces a token to all clauses places whose clause contains $y_i$ (including the dummy clause $\overline y_i \vee y_i$).
	Of all function $g_1, g'_1, \ldots, g_k, g'_k$, only $g_i$ depends on the consumption and production places $\ell_{y_i}$.
	Therefore $g_j(\vec{p}) = g_j(\vec{q})$ for all $j \in [1, k] \setminus\set{i}$ and $g'_j(\vec{p}) = g'_j(\vec{q})$ for all $j \in [1, k]$.
	In this case, $d_{y_i}$ contributes the number of its tokens to $g_i$, so firing $\ell_{y_i}$ seemingly has the potential to increase $g_i$ by 1.
	However, in order to fire $\ell_{y_i}$, a token is consumed from $b_i$ and $b_i$ contributes the number of its tokens to $g_i$.
	Thus, the next maximum effect that firing $\ell_{y_i}$ has on $g_i$ is 0.
	We can finish this case since $g_i(\vec{p}) \geq g_i(\vec{q})$.

	\subparagraph*{Case $t = s$:} 
	In this case, the firing of $s$ has an effect on all functions $g_1, g'_1, \ldots, g_k, g'_k$.
	That is because $s$ consumes from each of the dummy clause places $d_{y_i}$ and $d_{x_i}$, and produces to the final place $f$.
	The positive contribution to all functions is just 1, since $f$ contributes the number of its tokens to all the functions.
	However, to fire $s$, a token is indeed consumed from each of the dummy clause places. 
	Notice that each function depends on one of the dummy clause places' contents: $d_{y_i}$ contributes the number of its tokens to $g_i$ and $d_{x_i}$ contributes the number of its tokens to $g'_i$.
	So, in all $g_1, g'_1, \ldots, g_k, g'_k$, the net maximum effect of firing $s$ is 0.
	Hence, $g_i(\vec{p}) \geq g_i(\vec{q})$ and $g'_i(\vec{p}) \geq g'_i(\vec{q})$ for all $i \in [1, k]$. 
\end{proof}

\goodinvariant*
\begin{proof}
	For sake of contradiction, let us assume that $\run{\vec{p}}{t}{\vec{q}}$, where $\vec{p}$ is a bad marking that is reachable from $\vec{m}$, and $\vec{q}$ is a good marking (that is also reachable from $\vec{m}$).

	There are two ways in which $\vec{p}$ can be a bad marking.
	For some $i \in [1,k], g_i(\vec{v}) \neq 2^k$, or $g'_i(\vec{v}) \neq 2^k$.

	Suppose there exists an $i \in [1, k]$ such that $g_i(\vec{p}) \neq 2^k$.
	Since the initial marking $\vec{m}$ is good, we know that $g_i(\vec{m}) = 2^k$, and by~\cref{clm:invariant-cant-increase}, it must be the case that $g_i(\vec{p}) < 2^k$.
	Again by~\cref{clm:invariant-cant-increase}, this implies that $g_i(\vec{q}) \leq g_i(\vec{p}) < 2^k$.
	Therefore, $\vec{q}$ cannot be good.
\end{proof}

\reachtarget*
\begin{proof}
	First, observe that that $g_i(\vec{m}) = 2^k$ and $g'_i(\vec{m}) = 2^k$ for all $i \in [1, k]$.
	By~\cref{clm:invariant-cant-increase}, all markings $\vec{v}$ seen throughout $\run{\vec{m}}{\pi}{\vec{n}'}$ must have $g_i(\vec{v}) \leq 2^k$ and $g'_i(\vec{v}) \leq 2^k$ for all $i \in [1, k]$.
	So for the final marking $\vec{n}'$, since $\vec{n}' \geq \vec{n}$ it is true that $\vec{n}'[f] \geq \vec{n}[f]$, and because $\vec{n}[f] = 2^k$ and $\vec{n}[f] \leq g_1(\vec{n}') \leq 2^k$, it must be the case that $\vec{n}'[f] = \vec{n}[f] = 2^k$.

	Consider, for any $i \in [1, k]$, the value of $g_i(\vec{n}') - \vec{n}'[f]$ and $g'_i(\vec{n}') - \vec{n}'[f]$.
	By the argument above, $0 \leq g_i(\vec{n}') \leq 2^k$ and $0 \leq g'_i(\vec{n}') \leq 2^k$.
	Given that $\vec{n}'[f] = 2^k$, it must be the case that $g_i(\vec{n}') - \vec{n}'[f] = 0$ and $g'_i(\vec{n}') - \vec{n}'[f] = 0$. 
	This means that all places in $U_1, E_1, \ldots, U_k, E_k$ are empty, including in particular the dummy clause places.
	This implies that the last transition fired in $\pi$ was $s$, because all other transitions produce a token to a place in $U_1, E_1, \ldots, U_k, E_k$.
	In turn, given that $s$ resets all clause places, they must be empty in the marking $\vec{n}'$, so $\vec{n}'[p] = \vec{n}[p] = 0$ for all $p \in P \setminus \set{f}$.
	Altogether, this yields $\vec{n}' = \vec{n}$.
\end{proof}

\badtransitions*
\begin{proof}
	We may assume that $\vec{p}$ is good, for if not, we can immediately conclude that $\vec{q}$ is bad by~\cref{clm:good-invariant}.	
	We split the proof into cases depending on~$t$.
	We focus on the case $t = u_i^\bot$; the other three cases $t = u_i^\top$, $t = e_i^\bot$, and $t = e_i^\top$ follow analogously.

	First suppose there exists $p \in U_j$ (for some $j \in [i, k]$) such that $u_i^\bot \prec p$ and $\vec{p}[p] \geq 1$.
	We show that $g_j(\vec{p}) > g_j(\vec{q})$, implying that $\vec{q}$ is bad.
	By~\cref{clm:invariant-cant-increase}, the maximum net effect on $g_j$ of firing $u_i^\bot$ is~$0$.
	Notice that, in fact, zero effect on $g_j$ can only be achieved when the negative contributions to $g_j$ sum to~$2^{k-i+1}$.
	This is already achieved by the consumption of a token from $h_i$, which indeed contributes $2^{k-i+1}$ times the number of its tokens to~$g_j$.
	The transition $u_i^\bot$ resets all later occurring places, including~$p$.
	Furthermore, $p$ contributes \emph{at least} the number of its tokens, $\vec p[p] > 0$, to~$g_j(\vec p)$ (the exact contribution depends on which place $p$~is).
	Therefore, $g_j(\vec{p}) > g_j(\vec{q})$, as desired.

    The scenario in which there exists $p \in E_j$ (for some $j \in [i, k]$) such that $u_i^\bot \prec p$ and $\vec{p}[p] \geq 1$ is handled in the same way, except that $g_j$ is replaced by $g'_j$ and $2^{k-i+1}$ by $2^{k-i}$.
\end{proof}

\qbfinduction*
\begin{proof}
	We will prove this lemma by induction on $i$.

	First, given~\eqref{enough-fuel}, we know that $(\beta_1, \alpha_1, \ldots, \beta_{k-i}, \alpha_{k-i}) \in \set{0,1}^{2(k-i)}$.
	In other words, we know that initially $\beta_j, \alpha_j \in \{0, 1\}$ for all $j \in [1, k-i]$.

	Recall that, by~\cref{clm:good-invariant}, if a bad marking is ever observed, then $\vec{q}$ cannot be reached. 
	That is because, since $\vec{q}$ can cover $\vec{n}$, by~\cref{clm:reach-target} $\vec{q}$ can only reach $\vec{n}$.
        Since $\vec{n}$ is a good marking, $\vec{q}$ must be a good marking too.

	\subparagraph*{Base case $i=0$:}
	In this case we know that all variables have been assigned their value.
	We would like to argue that the quantifier-free propositional sentence $\phi(\beta_1, \alpha_1, \ldots, \beta_k, \alpha_k)$ evaluates to true and that none of the universal or existential control transitions are fired.
	More precisely, we will show that only one firing sequence can occur: $\sigma = (\ell_1, \ell'_1, \ldots, \ell_k, \ell'_k, s)$ where $\ell_j \in \set{ \ell_{\overline y_j}, \ell_{y_j} }$ and $\ell'_j \in \set{ \ell_{\overline x_j}, \ell_{x_j} }$ for each $j \in [1, k]$.

	Initially, in $\vec{p}$, no universal or existential control transition can be fired.
	Indeed, by~\cref{clm:true-xor-false}, either $\overline a_k$ is non-empty or $a_k$ is non-empty.
	Thus, by~\cref{clm:bad-transitions}, firing any universal or existential control transition would lead to a bad marking.
	Furthermore, $s$ cannot be fired because $\vec{p}$ is clause-free.
	Therefore, the loading transitions must be fired first.
	The loading transitions must in fact be fired in sequence: $\ell_{\overline y_1}$ or $\ell_{y_1}$, then $\ell_{\overline x_1}$ or $\ell_{x_1}$, then $\ell_{\overline y_2}$ or $\ell_{y_2}$, and so on until $\ell_{\overline x_k}$ or $\ell_{x_k}$.
	Again, if any are fired out of sequence, then by~\cref{clm:bad-transitions} a bad marking would be reached.

	After this, the only transition that can be fired is $s$, which produces a token to $f$.
    Given that $\vec{q}[f] = \vec{p}[f] + 1$, this transition must indeed be fireable.
	This means that all clause places contain a token after the loading transitions are fired.
	In other words, each clause is satisfied under the current assignment $y_j \leftarrow \beta_j$, $x_j \leftarrow \alpha_j$, $j \in [1, k]$, so $\phi(\beta_1, \alpha_1, \ldots, \beta_k, \alpha_k)$ evaluates to true.
	This must be the final transition of $\sigma$, by conditions~\eqref{div-and-diff} and~\eqref{last-fire}.
	It is therefore clear that $\sigma$ does not fire any of the universal or existential control transitions.

	\subparagraph*{Inductive step $i \to i+1$:}
	In this case we assume the lemma holds for $i$.
	Here, we would like to show that the following QBF evaluates to true.
	\begin{equation*}
		F_{i+1} \coloneqq \forall y_{k-i} \, \exists x_{k-i} \, \ldots \, \forall y_k \, \exists x_k : \phi(\beta_1, \alpha_1, \ldots, \beta_{k-i-1}, \alpha_{k-i-1}, y_{k-i}, x_{k-i}, \ldots, y_k, x_k) 
	\end{equation*}
	We would also like to show that $\sigma$ does not fire $u_j^\bot$, $u_j^\top$, $e_j^\bot$, or $e_j^\top$ for any $j \in [1, k-i-1]$.
	We will in fact show that $\sigma = u_{k-i}^\bot \, e \, \sigma_0 u_{k-i}^\top e' \sigma_1$, where $e, e' \in \set{ e_{k-i}^\bot, e_{k-i}^\top }$ and both $\sigma_0$ and $\sigma_1$ are obtained by two calls to the inductive assumption for~$i$: once when $y_{k-i}$ is set to false, and once when $y_{k-i}$ is set to false. 

	Initially, in $\vec{p}$, none of the earlier universal or existential control transitions can be fired, $u_j^\bot$, $u_j^\top$, $e_j^\bot$, or $e_j^\top$ for $j \in [1, k-i-1]$.
	That is because $\vec{p}[h_{k-i}] = 1$, and by~\cref{clm:bad-transitions}, this would lead to a bad marking.
	Moreover, none of the \emph{later} universal or existential control transitions $u_{k-i}^\top$, $e_{k-i}^\bot$, $e_{k-i}^\top$, \ldots, $u_k^\bot$, $u_k^\top$, $e_k^\bot$, or $e_k^\bot$ can be fired, because all places $p \in U_{k-i} \cup E_{k-i} \cup \cdots \cup U_k \cup E_k$ except $h_{k-i}$ are empty.
	The loading transitions cannot be fired either, for currently the places $\overline a_k$ and $a_k$ are empty.
	Indeed, one of these places needs to be non-empty later in $\sigma$ so that $d_{x_k}$ can be non-empty, which is necessary for the firing of $s$.
	To eventually make $\overline a_k$ or $a_k$ non-empty, $e_k^\bot$ or $e_k^\top$ needs to be fired.
	When $e_k^\bot$ or $e_k^\top$ are fired, all dummy clause places are reset.
	Hence if a loading transition is prematurely fired, by claim~\cref{clm:bad-transitions}, the firing of $e_k^\bot$ or $e_k^\top$ would lead to a bad marking.
	Finally, $s$ cannot be fired because $\vec{p}$ is clause-free.
	The only transition that remains is $u_{k-i}^\bot$: $\sigma$ first fires $u_{k-i}^\bot$.

	Let $\vec{p}'$ be the marking reached after firing $u_{k-i}^\bot$.
	Now $\vec{p}'[h_{k-i}] = 0$, $\vec{p}'[w_{k-i}] = 1$, $\vec{p}'[v_{k-i}] = 1$, and $\vec{p}'[\overline b_{k-i}] = 2^{i-1}$.
	For much like the above, none of the earlier universal or existential control transitions can be fired, $u_j^\bot$, $u_j^\top$, $e_j^\bot$, or $e_j^\top$ for $j \in [1, k-i-1]$, given that now $\vec{p}[w_{k-i}] = 1$, and by~\cref{clm:bad-transitions}, this would lead to a bad marking.
	Additionally, $u_{k-i}^\bot$ cannot be fired now that $\vec{p}'[h_{k-i}] = 0$, and $u_{k-i}^\top$ cannot be fired now that $\overline b_{k-i}$ contains tokens (again, its firing would lead to a bad marking by~\cref{clm:bad-transitions}).
	Moreover, none of the \emph{later} universal or existential control transitions $u_j^\bot$, $u_j^\top$, $e_j^\bot$, or $e_j^\top$ for $j \in [i+2, k]$ can be fired, because all places $p \in U_{k-i} \cup E_{k-i} \cup \cdots \cup U_k \cup E_k$ such that $w_{k-i} \prec p$ are empty.
	Similarly, none of the loading transitions can be fired.
	The satisfaction transition~$s$ cannot be fired because $\vec{p}'$ is clause-free.
	The only transitions remaining are $e_{k-i}^\bot$ and $e_{k-i}^\top$. 

	We will denote by $\alpha \in \set{0,1}$ the value assigned to~$x_{k-i}$, determined as follows.
	If $\sigma$ next fires $e_{k-i}^\bot$, then $\alpha = 0$; otherwise $\sigma$ next fires $e_{k-i}^\top$, and then $\alpha = 1$.
	We will use the inductive assumption (for~$i$) to show that $F_{i+1}$ evaluates to true when $y_{k-i} \leftarrow 0$ and $x_{k-i} \leftarrow \alpha$.
	Let $\vec{p}''$ be the marking reached after firing $e \in \set{ e_{k-i}^\bot, e_{k-i}^\top }$.
	Since $e_{k-i}^\bot$ produces $2^i$ many tokens to $\overline a_{k-i}$ and $e_{k-i}^\top$ produces $2^i$ many times to $a_{k-i}$, we know that $\vec{p}''[\overline a_{k-i}] + \vec{p}''[a_{k-i}] \geq 2^i$.
	Recall that the earlier firing of $u_{k-i}^\bot$ produced $2^i$ many tokens to $\overline b_{k-i}$, so $\vec{p}''[\overline b_{k-i}] + \vec{p}[b_{k-i}] \geq 2^i$ also holds.
	Firing $e$ also places a token in $h_{k-i+1}$ and resets all later places $p \in U_{k-i+1} \cup E_{k-i+1} \cup \cdots \cup U_k \cup E_k$, i.e., those with $h_{k-i+1} \prec p$.
	The marking $\vec{p}''$ can still cover $\vec{n}$, is clearly reachable from $\vec{p}$ (so reachable from $\vec{m}$), and is clause-free because neither $u_{k-i}^\bot$ nor $e$ produce a token to a clause place.
	We call upon the inductive hypothesis for~$i$ on a firing sequence $\sigma_0$ for a run $\run{\vec{p}''}{\sigma_0}{\vec{r}}$, where $\vec r$ is the marking reached by $\sigma$ immediately after $2^i$ firings of~$s$.
	We know that $\sigma_0$ does not fire $u_j^\bot$, $u_j^\top$, $e_j^\bot$, or $e_j^\top$, for any $j \in [1, k-i]$, and we know that the following partial QBF evaluates to true.
	\begin{equation}
		\forall y_{k-i+1} \, \exists x_{k-i+1} \, \ldots, \, \forall y_k \, \exists x_k : \phi(\beta_1, \alpha_1, \ldots, \beta_{k-i-1}, \alpha_{k-i-1}, 0, \alpha, y_{k-i+1}, x_{k-i+1}, \ldots, y_k, x_k)
		\label{equ:qbf-one}
	\end{equation}
        In order for $\sigma_0$ to fire~$s$ exactly $2^i$ times, both $d_{y_{k-i}}$ and $d_{x_{k-i}}$ must have been non-empty $2^i$ many times, and this requires firing $\ell_{\overline y_{k-i}}$ and $\ell'$ (where $\ell' \in \set{ \ell_{\overline x_{k-i}}, \ell_{x_{k-i}} }$) $2^i$ times each.
	Since $\vec{p}''[\overline b_{k-i}] = 2^i$, $\vec{p}''[b_{k-i}] = 0$, and $\vec{p''}[\overline a_{k-i}] + \vec{p''}[a_{k-i}] = 2^i$, we know that $\vec{r}[\overline b_{k-i}] = \vec{r}[b_{k-i}] = \vec{r}[\overline a_{k-i}] = \vec{r}[a_{k-i}] = 0$.
	We also note that $\vec{r}$ is clause-free, since the final transition fired in $\sigma_0$ is $s$.

	Let us now continue, by considering the next transition that can be fired by $\sigma$.
	The analysis is almost identical; the main difference is that $\vec{r}[h_{k-i}] = 0$ and $\vec{r}[w_{k-i}] = 1$ (as opposed to $\vec{p}[h_{k-i}] = 1$ and $\vec{p}[w_{k-i}] = 0$).
	This means that the first transition that can be fired is $u_{k-i}^\top$.
	Firing this transition places $2^i$ many tokens in $b_{k-i}$, and one token in $v_{k-i}$, at an intermediate, still clause-free, marking $\vec{r}'$.
	Following this, again for the same reasons as before, the only transitions that can be fired are $e_{k-i}^\bot$ and $e_{k-i}^\top$.

	Let $\alpha' \in \set{0,1}$ be the next assignment that $x_{k-i}$ will receive.
	Just like before, if $\sigma$ next fires $e_{k-i}^\bot$, then $\alpha' = 0$; otherwise $\sigma$ next fires $e_{k-i}^\top$, and then $\alpha' = 1$.
	We will again use the inductive hypothesis for~$i$ to show that $F_{i+1}$ evaluates to true when $y_{k-i} \leftarrow 1$ and $x_{k-i} \leftarrow \alpha'$.
	Indeed, let $\vec{r}''$ be the marking reached after firing $e' \in \set{ e_{k-i}^\bot, e_{k-i}^\top }$; $\vec{r}''[\overline a_{k-i}] + \vec{p}''[a_{k-i}] \geq 2^i$.
	Recall that the earlier firing of $u_{k-i}^\top$ produced $2^i$ many tokens to $b_{k-i}$, so $\vec{p}''[\overline b_{k-i}] + \vec{p}[b_{k-i}] \geq 2^i$ also holds.
	Firing $e'$ also places a token in $h_{k-i+1}$ and resets all later places $p \in U_{k-i+1} \cup E_{k-i+1} \cup \cdots \cup U_k \cup E_k$, i.e., those with $h_{k-i+1} \prec p$.
	The marking $\vec{r}''$ can still cover $\vec{n}$ and is clearly reachable from $\vec{r}$, which in turn is reachable from~$\vec{p}$.
        So $\vec{r}''$ is reachable from $\vec{m}$ too.
	It is also a clause-free marking because neither $u_{k-i}^\top$ nor $e'$ produce a token to a clause place.
	We call upon the inductive hypothesis for~$i$ on a firing sequence $\sigma_1$ for a run $\run{\vec{r}''}{\sigma_1}{\vec{q}}$.
	We know that $\sigma_1$ does not fire $u_j^\bot$, $u_j^\top$, $e_j^\bot$, or $e_j^\top$, for any $j \in [1, k-i]$, and we know that the following partial QBF evaluates to true.
	\begin{equation}
		\forall y_{k-i+1} \, \exists x_{k-i+1} \, \ldots, \, \forall y_k \, \exists x_k : \phi(\beta_1, \alpha_1, \ldots, \beta_{k-i-1}, \alpha_{k-i-1}, 1, \alpha', y_{k-i+1}, x_{k-i+1}, \ldots, y_k, x_k)
		\label{equ:qbf-two}
	\end{equation}

	We can now can bring together~\cref{equ:qbf-one} and~\cref{equ:qbf-two} to deduce that the following QBF evaluates to true:
	\begin{equation*}
		\forall y_{k-i} \, \exists x_{k-i} \, \ldots, \, \forall y_k \, \exists x_k : \phi(\beta_1, \alpha_1, \ldots, \beta_{k-i-1}, \alpha_{k-i-1}, y_{k-i}, x_{k-i}, \ldots, y_k, x_k).
	\end{equation*}
	Finally, we know that $\sigma = (u_{k-i}^\bot, e, \sigma_0, u_{k-i}^\top, e', \sigma_1)$ does not fire $u_j^\bot$, $u_j^\top$, $e_j^\bot$, or $e_j^\top$ for any $j \in [1, k-i-1]$ because $\sigma_0$ and $\sigma_1$ do themselves not fire $u_j^\bot$, $u_j^\top$, $e_j^\bot$, or $e_j^\top$ for any $j \in [1, k-i]$, and beyond that $\sigma$ fires only $u_{k-i}^\bot$, $u_{k-i}^\top$, and at least one of $e_{k-i}^\bot$ and $e_{k-i}^\top$.
    This completes the proof.
\end{proof}

\coverabilityrun*
\begin{proof}
	We will prove this lemma by induction on $i$.
	
	\subparagraph{Base case $i = 0$:}
	Given $\beta_1, \alpha_1, \ldots, \beta_k, \alpha_k \in \set{0, 1}$ we will define the firing sequence $\sigma = (\ell_1, \ell'_1, \ldots, \ell_k, \ell'_k, s)$ where, for every $j \in [1, k]$,
	\begin{equation*}
		\ell_j = 
		\begin{cases}
			\ell_{\overline y_j} & \text{if } \beta_j = 0 \\
			\ell_{y_j} & \text{if } \beta_j = 1
		\end{cases}
                \qquad
		\text{and}
                \qquad
		\ell'_j = 
		\begin{cases}
			\ell_{\overline x_j} & \text{if } \alpha_j = 0 \\
			\ell_{x_j} & \text{if } \alpha_j = 1.
		\end{cases}
	\end{equation*}

	From $\vec{p}$, the transition $\ell_1$ can be fired, since if, for example, $\beta_1 = 0$, then $\ell_1 = \ell_{\overline y_1}$.
	We know that $\ell_{\overline y_1}$ can be fired because, by (1), $\overline b_1$ contains a token.
	Firing $\ell_1$ does not reset any of the later non-clause places, so $\ell'_1$ can be fired.
	This is for the same reason, for example if $\alpha_1 = 1$, then $\ell'_1 = \ell_{x_1}$ can be fired because, by (2), $a_1$ contains a token.
	This argument repeats for the remaining loading transitions $\ell_2, \ell'_2, \ldots, \ell_k, \ell'_k$.

	Suppose	$\vec{p}\xrightarrow{\ell_1 \ell'_1 \cdots \ell_k \ell'_k}\vec{r}$, we will now argue that $\vec{r}[c] \geq 1$ for all $c \in C$.
	First, it is clear that all dummy clause places are non-empty: for all $j \in [1, k]$ either $\ell_{\overline y_j}$ or $\ell_{y_j}$ was fired so $\vec{r}[d_{y_j}] = 1$, and either $\ell_{\overline x_j}$ or $\ell_{x_j}$ was fired, so $\vec{r}[d_{x_j}] = 1$.
	Second, we know that $\phi(\beta_1, \alpha_1, \ldots, \beta_k, \alpha_k)$ evaluates to true, so every clause of $\phi$ contains a true literal.
	A loading transition is fired if its literal is true, and produces a token to all clause places that contain said literal.
	Therefore, after firing all loading transitions, all clause places must contain a token.

	Finally, consider the marking $\vec{q}$ reached after firing $s$ from $\vec{r}$ (that transition is fireable given that $\vec{r}[c] \geq 1$ for all $c \in C$).
	Indeed, given that $s$ produces a token to $f$, $\vec{q}[f] = \vec{r}[p] + 1$.
	All together, $\run{\vec{p}}{\sigma}{\vec{q}}$ where $\vec{q}[f] = \vec{r}[f] + 1 = \vec{p}[f] + 1$.
	One of each of the two loading transitions for a variable was fired just once, so both (a) and (b) hold.
	No transitions in $\sigma$ consumed from or reset any of the holding places, waiting places, or decision places.
	Thus, $\vec{q}[p] = \vec{p}[p]$ for all $p \in \set{h_1, w_1, v_1, \ldots, h_k, w_k, v_k}$, so (c) holds.

	\subparagraph{Inductive step $i \to i+1$:}
	In this case we shall assume that the lemma holds for case $i$. 
	Given $\beta_1, \alpha_1, \ldots, \beta_{k-i-1}, \alpha_{k-i-1} \in \set{0, 1}$, we will define the firing sequence $\sigma = (u_{k-i}^\bot, e, \sigma_0, u_{k-i}^\top, e', \sigma_1)$ where $e$ and $e'$ are selected to set $x_{k-i}$ to true or false depending on the assignment of $y_{k-i}$ and where $\sigma_0$ and $\sigma_1$ are firing sequences given by case $i$:
	\begin{equation*}
		e = 
		\begin{cases}
			e_{k-i}^\bot & \text{if } \forall y_{k-i+1} \, \exists x_{k-i+1} \, \ldots \, \forall y_k \, \exists x_k : \\
			& \phi(\beta_1, \alpha_1, \ldots, \beta_{k-i-1}, \alpha_{k-i-1}, 0, 0, y_{k-i+1}, x_{k-i+1}, \ldots, y_k, x_k) \text{ evaluates to true,}\\
			e_{k-i}^\top & \text{otherwise, i.e., if } \forall y_{k-i+1} \, \exists x_{k-i+1} \, \ldots \, \forall y_k \, \exists x_k : \\ 
			& \phi(\beta_1, \alpha_1, \ldots, \beta_{k-i-1}, \alpha_{k-i-1}, 0, 1, y_{k-i+1}, x_{k-i+1}, \ldots, y_k, x_k) \text{ evaluates to true;}
		\end{cases}
	\end{equation*}
	\begin{equation*}
		e' = 
		\begin{cases}
			e_{k-i}^\bot & \text{if } \forall y_{k-i+1} \, \exists x_{k-i+1} \, \ldots \, \forall y_k \, \exists x_k : \\
			& \phi(\beta_1, \alpha_1, \ldots, \beta_{k-i-1}, \alpha_{k-i-1}, 1, 0, y_{k-i+1}, x_{k-i+1}, \ldots, y_k, x_k) \text{ evaluates to true,}\\
			e_{k-i}^\top & \text{otherwise, i.e., if } \forall y_{k-i+1} \, \exists x_{k-i+1} \, \ldots \, \forall y_k \, \exists x_k : \\ 
			& \phi(\beta_1, \alpha_1, \ldots, \beta_{k-i-1}, \alpha_{k-i-1}, 1, 1, y_{k-i+1}, x_{k-i+1}, \ldots, y_k, x_k) \text{ evaluates to true.}
		\end{cases}
	\end{equation*}

	Initially, given that $\vec{p}[h_{k-i}] = 1$, we know $u_{k-i}^\bot$ can be fired, and suppose $\run{\vec{p}}{u_{k-i}^\bot}{\vec{p}'}$.
	Now $\vec{p'}[v_{k-i}] = 1$ so $e$ can be fired.
	Moreover, $\vec{p'}[w_{k-i}] = 1$ and $\vec{p'}[\overline b_{k-i}] = 2^i$.
	Suppose $\run{\vec{p}'}{e}{\vec{p}''}$.
	Now $\vec{p''}[h_{k-i+1}] = 1$ and since $e$ did not reset $\overline b_{k-i}$, we know that both $\vec{p''}[\overline b_{k-i}] = 2^i$ and either $\vec{p''}[\overline a_{k-i}] = 2^i$ or $\vec{p''}[a_{k-i}] = 2^i$.
	We can now call upon the inductive hypothesis for~$i$ to obtain a firing sequence $\sigma_0$ such that $\run{\vec{p}''}{\sigma_0}{\vec{r}}$.
	
	We know that $\vec{r}[f] = \vec{p}''[f] + 2^i$.
	Conditions (a) and (b) yield $\vec{r}[\overline b_{k-i}] = 0$ and $\vec{r}[\overline a_{k-i}] = \vec{r}[a_{k-i}] = 0$.
	These conditions also tell us that, for every $j \in [1, k]$, $\vec{r}[\overline b_j] + \vec{r}[b_j] = \vec{p''}[\overline b_j] + \vec{p''}[b_j] - 2^i = \vec{p}[\overline b_j] + \vec{p}[b_j] - 2^i$ and $\vec{r}[\overline a_j] + \vec{r}[a_j] = \vec{p''}[\overline a_j] + \vec{p''}[a_j] - 2^i = \vec{p}[\overline a_j] + \vec{p}[a_j] - 2^i$, so still $\vec{r}[\overline b_j] \geq 2^i$ or $\vec{r}[b_j] \geq 2^i$, and $\vec{r}[\overline a_j] \geq 2^i$ or $\vec{r}[a_j] \geq 2^i$.
	Therefore, $\vec{r}$ will be ready to satisfy (1) and (2) in later call to the inductive hypothesis for~$i$.

	We continue by firing $u_{k-i}^\top$ from $\vec{r}$.
	We can fire $u_{k-i}^\top$ from $\vec{r}$ because, with condition (c) and $\vec{p}''[w_{k-i}] = 1$, we know that $\vec{v}[w_{k-i}] = 1$.
	Suppose $\run{\vec{r}}{u_{k-i}^\top}{\vec{r}'}$.
	Now $\vec{r}'[v_{k-i}] = 1$ so $e'$ can be fired.
	Moreover, $\vec{p'}[b_{k-i}] = 2^i$.
	Suppose $\run{\vec{r}'}{e'}{\vec{r}''}$.
	Now $\vec{r}''[h_{k-i+1}] = 1$ and since $e'$ did not reset $b_{k-i}$, we know that both $\vec{r''}[b_{k-i}] = 2^i$ and either $\vec{r''}[\overline a_{k-i}] = 2^i$ or $\vec{r''}[a_{k-i}] = 2^i$.
	We can now call upon the inductive hypothesis for~$i$ to obtain a firing sequence $\sigma_1$ such that $\run{\vec{r}''}{\sigma_1}{\vec{q}}$.

	We know that $\vec{q}[f] = \vec{r}''[f] + 2^i = \vec{r}[f] + 2^i = \vec{p''}[f] + 2^{i-1} + 2^i = \vec{p} + 2^{i+1}$, as required.
	From (a) and (b) of $\sigma_1$, we know that another $2^i$ tokens were consumed from $\overline b_1$ or $b_1$, $\overline a_1$ or $a_1$, $\ldots$, $\overline b_{k-i}$ or $b_{k-i-1}$, $\overline a_{k-i-1}$ or $a_{k-i-1}$, so given $\sigma_0$ did the same, both (a) and (b) hold for $\sigma$ overall.
	Lastly, $\sigma$ does not consume or reset any of the place $h_j$, $w_j$, or $v_j$ for all $j \in [1, k-i-1]$, so (c) holds.
\end{proof}
\end{appendix}

\end{document}